\documentclass{article}
\usepackage{graphicx, fullpage, xcolor}
\usepackage{hyperref}
\usepackage{amsmath,amsthm,amsfonts,amssymb}
\usepackage{style_defaults}
\usepackage{subcaption}
\usepackage[shortlabels]{enumitem}
\usepackage{xcolor}
\usepackage{authblk}
\usepackage{cleveref}
\usepackage{multirow}
\usepackage{makecell}
\usepackage{pgfplots}
\usepackage{tikz}
\usetikzlibrary{patterns}
\pgfplotsset{compat=1.18}

\usepackage{algorithm}
\usepackage{algpseudocode}

\usepackage{pifont}

\title{A Principled Approach to Randomized Selection under Uncertainty: Applications to Peer Review and Grant Funding}

\author{Alexander Goldberg}
\author{Giulia Fanti}
\author{Nihar B. Shah}
\affil{Carnegie Mellon University}
\affil{\{\texttt{akgoldbe,gfanti,nihars}\}\texttt{@andrew.cmu.edu}}
\date{}

\newcommand{\nselected}{k}
\newcommand{\nproposal}{n}
\newcommand{\selected}{S}
\newcommand{\perm}{\sigma}

\newcommand{\errorpermutations}{\Sigma_\nproposal}

\newcommand{\prob}{p}
\newcommand{\objvalue}{v}
\newcommand{\lcb}{\ell}
\newcommand{\ucb}{u}
\newcommand{\point}{e}
\newcommand{\nabove}{A}
\newcommand{\nbelow}{B}
\newcommand{\orderwidth}{w}
\newcommand{\maxiters}{T}

\newcommand{\ouralgorithm}{MERIT}

\begin{document}
\maketitle

\begin{abstract}
    Many decision-making processes involve evaluating and then selecting items;  examples include scientific peer review, job hiring, school admissions, and investment decisions. The eventual selection is performed by applying rules or deliberations to the raw evaluations, and then deterministically selecting the items deemed to be the best. These domains often feature error-prone evaluations and uncertainty about future outcomes, which undermine the reliability of such deterministic selection rules. As a result, selection mechanisms involving explicit randomization that incorporate the uncertainty are beginning to gain traction in practice. However, current randomization approaches are ad hoc, and as we prove, inappropriate for their purported objectives. In this paper, we propose a principled framework for randomized decision-making based on interval estimates of the quality of each item. We introduce \ouralgorithm\ (Maximin Efficient Randomized Interval Top-$k$), an optimization-based method that maximizes the worst-case expected number of top candidates selected, under uncertainty represented by overlapping intervals (e.g., confidence intervals or min-max intervals). \ouralgorithm\ provides an optimal resource allocation scheme under an interpretable notion of robustness. We develop a polynomial-time algorithm to solve the associated optimization problem and demonstrate empirically that the method scales to over ten thousand items. Further, we prove that our approach can satisfy desirable axiomatic properties not guaranteed by existing approaches to randomization. Finally, we conduct empirical comparisons of \ouralgorithm\ with existing algorithms on synthetic peer review data based on the Swiss National Science Foundation and NeurIPS and ICLR conferences. Our experiments demonstrate that \ouralgorithm\ matches the performance of existing algorithms in expected utility under fully probabilistic review data models used in previous work, while outperforming previous methods with respect to our novel worst-case formulation.
\end{abstract}

\section{Introduction}
\label{sec:intro}

In many applications like scientific funding, job hiring, school admissions, and startup investment, decision makers evaluate and then select items. Generally, decision makers aim to identify the set of highest-quality candidates from evaluations, possibly after deliberating or trying to account for errors. They then deterministically select these identified top candidates. In such domains, decision makers often grapple with arbitrariness in evaluations and unpredictable outcomes. Recently, there has been growing interest in introducing randomization into the selection process, both to recognize this uncertainty and for a variety of other reasons. In introducing randomization to selection processes, decision makers typically elicit evaluations of candidates through peer review, and then allocate random acceptances by sampling from a probability distribution that depends on the reviews.

Proponents of randomized selection have cited many potential benefits of randomization. Randomization may help in reducing wasted reviewer time adjudicating tie breaks, encouraging high-risk proposals, counteracting ``rich-get-richer'' effects, and combating reviewer partiality \cite{fang2016research, horbach2022partial, heyard2022rethinking, gould2025threats, feliciani2024funding}. Furthermore, surveys have found that a majority of Ph.D.-level scientists support the introduction of randomized selection into peer review-based grant funding \cite{liu2020acceptability, philipps2022research}. Recent work has analyzed how randomization can curb inefficient over-preparation by applicants~\cite{gross2019contest}, align incentives when evaluators may behave strategically~\cite{carnehl2024designing,niemeyer2024optimal}, and allow for causal inference~\cite{azoulay2020scientific} in understanding the usefulness of scientific funding. In fact, many funding agencies have already adopted partial lotteries to allocate grant money, starting with the New Zealand Health Research Council in 2013~\cite{liu2020acceptability}, followed by the Swiss National Science Foundation (NSF) in 2019~\cite{adam2019science}, and recently expanding to numerous agencies around the world~\cite{erc2023peerreview, innovate2024, vwf2025lottery}. There have also been prominent proposals to introduce randomness in college admissions~\cite{adam2019science, klick2024declaration}, job screening~\cite{osterloh2019dealing, berger2020focal} and early stage startup investment~\cite{mckenzie2017identifying}. 

In current deployments, decision makers collect peer-review assessments and then run a lottery where selection probabilities are derived from those review scores. However, the current procedures for this randomization are ad hoc. In this work, we  initiate a principled approach to randomizing decisions from evaluations. We focus on the key question: \emph{Given imperfect evaluations of candidate quality, what is a suitable probability distribution over applicants to make the random selection?} 

We take the perspective of a funder who aims to select the highest quality grant proposals from a set of applicants. A key motivation for randomization in funding decisions is uncertainty about the relative quality of proposals. Indeed, many existing deployments describe peer review lotteries as a form of random ``tie-breaking'' between proposals of equal quality~\cite{NatureEditorial2022Lotteries, SNSF2021Lots}. Therefore, similar to previous work at the Swiss NSF~\cite{heyard2022rethinking}, we assume that the funder estimates numeric quality \emph{intervals} for each proposal; we are agnost ic to the source of these intervals, which could be derived from a number of processes, described in Section \ref{sec:principled_approach}. Concretely, one simple way a funder may generate intervals is by taking the minimum and maximum score given to each interval, instead of simply aggregating all scores to a single estimate like the mean or median. More generally, intervals may capture the funder's uncertainty due to errors in the evaluation process, such as miscalibration and subjectivity, or underlying aleatoric uncertainty about the future success of proposals. Crucially, we capture settings with \emph{``Knightian uncertainty,''} where a funder cannot assign a probability measure over possible outcomes~\cite{knight1921risk}. Models that assume decision makers have Knightian uncertainty are widely applied in related applications such as policy-making~\cite{sunstein2023knightian, ruffino2014knightian, ben2016decision}, financial investment decisions~\cite{epstein2004intertemporal, malenko2020asymmetric, nishimura2007irreversible}, and R\&D investment by businesses~\cite{amoroso2017r}.

The assumption that a funder lacks a probabilistic model of their data is particularly apt when quality is determined by peer review. First, while there are many proposed probabilistic models of reviewer errors, these models have performed poorly in real deployments~\cite[Section `Miscalibration']{langford2012icml, shah2022surveyextended}. One possible reason for this poor performance is that in practice, human miscalibration is more complex than simple models~\cite{brenner2005modeling}. Second, decision makers generally lack ground truth data with which to evaluate whether a given probabilistic model of review data is appropriate in a given setting, making it difficult to rely on a probabilistic model. Further, even if a decision maker had a reasonable model of review scores, these probabilities may not represent meaningful probabilities to the funder, since review scales are arbitrary and likely do not map linearly to the utility of selecting a candidate. Finally, the future success of candidates in the settings we consider may be inherently hard to predict. In the context of scientific peer review, many works find that it is difficult to predict future citations and peer review scores are poor predictors~\cite{schroter2022evaluation, cortes2021inconsistency, weitzner2024predictive}. In the related area of college admissions, recent work demonstrated that predictive models used to rank candidates are highly unstable with respect to their training data, resulting in arbitrariness in the ranking from any single model~\cite{lee2024algorithms}. 

In the absence of a reliable probabilistic model of proposal quality, we assume that  a decision maker describes their uncertainty by estimating intervals of proposal quality rather than point estimates. 
Since we are agnostic to the source of these intervals and do not consider them to have a probabilistic interpretation, the funder draws conclusions from the intervals only about the relative ordering of proposal quality. If two proposals' intervals overlap, then the funder does not have sufficient evidence to conclude that one proposal is higher quality than the other. However, if one interval lies strictly above another, then the funder has sufficient evidence to conclude that one proposal dominates the other. We develop theory and a practical algorithm for making funding decisions on the basis of interval quality estimates, with the following \textbf{key contributions}:  

\begin{enumerate}[leftmargin=*]
    \item \emph{Modeling uncertainty as Knightian uncertainty intervals:}  
     Prior work models a funder's uncertainty by assuming a known probabilistic relationship between proposal quality and review scores~\cite{heyard2022rethinking}, using Bayesian inference to construct confidence intervals. However, we show that in this fully Bayesian setting, a deterministic selection policy always maximizes expected utility, so randomization is unnecessary. In contrast, our model captures the motivation for randomizing without assuming a fully specified probabilistic model, using ``Knightian'' uncertainty intervals, that capture the funder's estimates of the relative quality of proposals. 
    \item \emph{A principled approach to randomization:} We formalize two key principles that guide the design of an algorithm for selection under interval uncertainty and show that randomization is necessary to meet these principles. First, given their uncertainty, the funder aims to solve a maxmin optimization problem, maximizing their worst-case utility over all rankings consistent with quality intervals. Hence, the funder randomizes in order to robustly optimize their worst-case utility. We call this principle \emph{ex ante optimality}. Second, the funder makes a selection in a manner that respects \emph{ex post validity}---if one proposal’s interval lies strictly above another proposal's interval, then the higher quality proposal must be selected whenever the lower quality proposal is selected. Prior heuristic selection rules do not satisfy both of these 
    principles.
    \item \emph{Efficient algorithm}: It is not obvious that it is possible to solve the maximin optimization problem efficiently, as it requires solving a linear program with number of constraints that grows exponentially in the number of selected proposals. In fact, many closely related problems from the literature on (partial order) graph algorithms are NP-hard \cite{faigle1994computational, ravi1991ordering, woeginger2003approximability, adhikary2023complexity}. Nonetheless, we develop a polynomial time algorithm to solve the ex ante optimization problem and enforce ex post validity. We refer to this algorithm as \textbf{M}aximin \textbf{E}fficient \textbf{R}andomized \textbf{I}nterval \textbf{T}op-$k$
    (\textbf{MERIT}). We demonstrate the computational feasibility of \ouralgorithm\ on real-world peer review data from Swiss NSF grant reviews and NeurIPS 2024 and ICLR 2025 conferences. The algorithm runs on a standard personal laptop in under $5$ minutes on inputs with over 10,000 candidates. We release an open source implementation of our \ouralgorithm\ algorithm at \href{https://github.com/akgoldberg/lottery}{\tt github.com/akgoldberg/lottery}.

    \item \emph{Axiomatic comparison}: Since ground truth of candidate quality tends to be unavailable in the domains we consider, we initiate an axiomatic approach to comparing randomized mechanisms. We identify three desirable properties of ``monotonicity in budget'', ``stability'', and ``reversal symmetry.'' We prove that \ouralgorithm\ prevents instances of ``maximal instability'' and respects ``reversal symmmetry''  while the existing randomization mechanisms studied do not. We show that all randomization mechanisms that we consider  
    violate monotonicity and propose modifications to \ouralgorithm\ to enforce this axiom; however, these modifications cause us to violate ex ante optimality.
    \item \emph{Empirical comparison:} We evaluate \ouralgorithm\ against existing methods using synthetic data based on real peer review data from major conferences (NeurIPS 2024, ICLR 2025) and grant agencies (Swiss NSF 2020). \ouralgorithm\ performs comparably to existing methods in expected utility under a linear reviewer error model used by the Swiss NSF~\cite{heyard2022rethinking} and many other prior works~\cite{flach2010kdd,baba2013quality,roos2011calibrate,roos2012statistical}. However, under our worst-case objective---which assumes any ordering consistent with quality intervals could be true—our algorithm significantly outperforms deterministic selection and the Swiss NSF's randomized approach.
\end{enumerate}

The remainder of this paper is organized as follows. In Section~\ref{sec:problem_setup}, we overview current deployments of randomized decisions in scientific funding and motivate our proposed new approach. In Section~\ref{sec:problem_defn}, we formally define our framework for randomized selection. In Section~\ref{sec:algorithm}, we develop an efficient randomized selection algorithm that optimally solves the robust optimization problem over intervals defined in our framework. In Section~\ref{sec:axioms}, we propose axioms that randomized selection rules should satisfy and theoretically compare methods with respect to these axioms. In Section~\ref{sec:empirical_results}, we simulate  scientific peer review selection processes using synthetic data and publicly available peer review datasets to empirically compare \ouralgorithm\ and existing methods on these datasets. Finally, we conclude with related work in Section~\ref{sec:related_work} and discussion of limitations and directions for future work in Section~\ref{sec:discussion}. For clarity of presentation, formal proofs are deferred to Appendix~\ref{app:proofs} throughout the main text.

\section{Background and Approach}
\label{sec:problem_setup}

We begin by describing current deployments of randomized decisions in scientific funding. We survey existing approaches and highlight their drawbacks in Section~\ref{sec:existing_deployments} and then motivate our proposed framework in Section~\ref{sec:principled_approach}.

\subsection{Existing Deployments}
\label{sec:existing_deployments}

In recent years, there have been many deployments of ``peer review lotteries'' in scientific funding decisions. Most deployments use an approach of \emph{``randomize-above-threshold.''} As described in Algorithm~\ref{alg:rand_above threshold}, the funder chooses a minimum acceptable quality threshold and samples uniformly at random among all proposals that are above this threshold. This approach has been adopted to make funding decisions by the European Research Council~\cite{erc2023peerreview}, the British Academy~\cite{britishacademy2025randomisation}, the Science Foundation of Ireland~\cite{innovate2024}, the Volkswagen Foundation~\cite{vwf2025lottery}, and the Health Research Council of New Zealand~\cite{liu2020acceptability} among others.  Additionally, the USENIX Security Conference is randomly allocating long oral presentations among accepted papers to their conference in 2025~\cite{usenix2025policy}. However, as we describe in Section~\ref{sec:problem_defn}, randomize-above-threshold may violate a desired principle of ``ex post validity'', which says that if one proposal clearly dominates another, the stronger proposal should be funded if the weaker proposal is funded.

\begin{algorithm}[ht]
\caption{Randomize Above Threshold~\cite{liu2020acceptability, erc2023peerreview, innovate2024, vwf2025lottery, britishacademy2025randomisation, usenix2025policy}}
\label{alg:rand_above threshold}
\begin{algorithmic}[1]
\Statex \textbf{Input:} Set of proposals with point estimates and intervals; number accepted $\nselected$.
\Statex \textbf{Output:} Set of proposals $\nselected$.
\Statex 
\State Choose a threshold $T$ (potentially based on the data).
\State Reject all intervals strictly below $T$ and select uniformly at random among the remaining. 
\end{algorithmic}
\end{algorithm}

Taking a different approach, the \emph{Swiss National Science Foundation (NSF)}~\cite{heyard2022rethinking, trentacosti2021lotteries} pioneered a method that explicitly accounts for uncertainty about the quality of each proposal. 
They assume that each proposal has a latent true quality and review scores are generated based on these quality scores and reviewer-specific noise parameters. 
They assume priors on the model parameters and develop methods to obtain point estimates and confidence intervals for the true quality of each proposal. As described in Algorithm~\ref{alg:swiss_nsf}, the Swiss NSF then samples $\nselected$ proposals by setting a ``provisional funding line'' as the $\nselected$-th highest point estimate. All proposals with intervals strictly above the funding line are selected and all proposals strictly below the funding line are rejected. The remaining budget is allocated uniformly at random among proposals with intervals that overlap the funding line.

\begin{algorithm}[ht]
\caption{Swiss NSF Selection Algorithm~\cite{heyard2022rethinking}}
\label{alg:swiss_nsf}
\begin{algorithmic}[1]
\Statex \textbf{Input:} Set of proposals with point estimates and intervals of the quality of each proposal; number accepted $\nselected$.
\Statex \textbf{Output:} Set of proposals $\nselected$.
\Statex 
\State Rank proposals by decreasing point estimate and let $\point_{(\nselected)}$ be the point estimate of the $\nselected$-th ranked proposal.
\State Let $\mathcal{A}$ be the set of proposals with lower bound strictly above $\point_{(\nselected)}$, $\mathcal{R}$ be the set of proposals with upper bound strictly below $\point_{(\nselected)}$, and $\mathcal{P}$ the set of proposals with intervals that contain $\point_{(\nselected)}$.
\State Accept $\mathcal{A}$, reject $\mathcal{R}$, and accept ($\nselected - |\mathcal{A}|$) proposals chosen uniformly at random from $\mathcal{P}$.
\end{algorithmic}
\end{algorithm}

The intuition provided by the Swiss NSF for their approach is that confidence intervals capture the funder's uncertainty in estimating the proposal quality and randomizing decisions accounts for this uncertainty, thereby leading to better decisions. In contrast, as we prove in \Cref{prop:fully_bayesian_model} below, when the funder assumes that review data is generated from a fully specified Bayesian model, there exists a deterministic selection of proposals that maximizes the funder's expected utility for any utility function. Informally, \emph{if the funder knows the model that generates their data, then they do not need to randomize in order to maximize their expected utility.} Below, we present the formal proposition and proof. We prove a general statement for a funder who estimates a total ranking of proposals and receives a pre-specified utility for any possible estimate of the ranking of proposals and true qualities of the proposals. The top-$\nselected$ selection problem is a case of this estimation problem where the funder's utility specifies the utility of choosing $\nselected$ proposals based on the estimated ranking of proposals.

\begin{proposition}[Optimality of deterministic selection in the fully Bayesian setting]
    \label{prop:fully_bayesian_model}
    Consider a funder who estimates a ranking of proposals from review data in the following fully Bayesian setting. The funder observes review data $y \in Y$ for a set of proposals of true quality $\theta \in X$, for some measurable sets $X$ and $Y$. The review data and true quality are generated jointly from a known probability distribution. Letting $\Pi$ denote the set of all permutations of proposals, the funder estimates a ranking $\hat{\pi}$ of proposals and gains utility $u(\hat{\pi}, \theta)$  where $u: \Pi \times X \to \R$.
     The funder aims to choose a (potentially randomized) estimator of the true ranking $f: Y \to \Delta(\Pi)$ that maximizes their expected utility $\E_{y, \theta, \hat{\pi}}[ u(\hat{\pi}, \theta)]$ where $\hat{\pi} \sim f(y)$ . 
   In this setting, there always exists a deterministic $f$ that maximizes the funder's expected utility. 
\end{proposition}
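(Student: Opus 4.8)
The plan is to condition on the observed review data $y$ and thereby reduce the problem to a pointwise maximization over the finite set $\Pi$. First I would invoke the tower property of conditional expectation to rewrite the objective: for any (possibly randomized) estimator $f : Y \to \Delta(\Pi)$,
\[
\E_{y,\theta,\hat{\pi}}[u(\hat{\pi},\theta)] \;=\; \E_{y}\Big[\, \E_{\hat{\pi} \sim f(y)}\big[\, g(\hat{\pi},y)\,\big]\Big], \qquad \text{where } g(\pi,y) := \E\big[\,u(\pi,\theta) \mid y\,\big].
\]
Here $g(\pi,y)$ is the posterior expected utility of reporting ranking $\pi$ after observing $y$; it is well-defined and, for each fixed $\pi \in \Pi$, measurable in $y$ under the stated assumptions (a known joint distribution of $(y,\theta)$ and measurability of $u$), e.g.\ via a regular conditional distribution of $\theta$ given $y$.

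Next, fix $y$ and observe that the inner quantity $\E_{\hat{\pi}\sim f(y)}[g(\hat{\pi},y)] = \sum_{\pi \in \Pi} f(y)(\pi)\, g(\pi,y)$ is a convex combination of the finitely many numbers $\{g(\pi,y) : \pi \in \Pi\}$, hence is at most $\max_{\pi \in \Pi} g(\pi,y)$. Taking expectations over $y$ gives $\E_{y,\theta,\hat{\pi}}[u(\hat{\pi},\theta)] \le \E_{y}\big[\max_{\pi \in \Pi} g(\pi,y)\big]$ for every estimator $f$, so this is an upper bound on the achievable expected utility.

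Then I would exhibit a deterministic estimator attaining this bound. For each $y$, let $\pi^*(y)$ be a maximizer of $g(\pi,y)$ over $\pi \in \Pi$, breaking ties by a fixed total order on $\Pi$ (e.g.\ lexicographic order on permutations), and set $f^*(y) := \delta_{\pi^*(y)}$, the point mass at $\pi^*(y)$. Since $\Pi$ is finite and each $g(\pi,\cdot)$ is measurable, the level sets $\{y : \pi^*(y) = \pi\}$ are measurable for every $\pi \in \Pi$, so $f^*$ is a valid (measurable) deterministic estimator. By construction $\E_{\hat{\pi}\sim f^*(y)}[g(\hat{\pi},y)] = \max_{\pi \in \Pi} g(\pi,y)$ for every $y$, so $f^*$ meets the upper bound and is therefore expected-utility maximizing. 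Finally I would remark that the top-$\nselected$ selection problem is the special case in which $u(\hat{\pi},\theta)$ depends on $\hat{\pi}$ only through its top $\nselected$ elements, so the same deterministic optimum applies there.

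The only real obstacle is the measurability bookkeeping: checking that $g(\pi,y)$ is a genuine measurable function of $y$ and that a measurable selection of the maximizing permutation exists. Both are routine precisely because $\Pi$ is finite --- finiteness turns ``the argmax is measurable'' into a finite case analysis over the sets $\{y : \pi^*(y) = \pi\}$ and makes the convexity step immediate --- so the substance of the argument is just the conditioning step together with the observation that a mixture of actions can never beat the best pure action.
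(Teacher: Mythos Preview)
Your proposal is correct and follows essentially the same approach as the paper: condition on $y$ via the tower property, then maximize the posterior expected utility $\E_{\theta\mid y}[u(\pi,\theta)]$ pointwise over the finite set $\Pi$, yielding a deterministic argmax rule. The paper's proof is a two-line version of exactly this argument; your version is more careful about the convex-combination bound and the measurability of the argmax selection, but these are elaborations rather than a different route.
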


\Cref{prop:fully_bayesian_model} suggests one drawback of the Swiss NSF's method---in their model setting there may be a utility cost to randomization compared to choosing deterministically in an optimal manner. A second drawback is that the Swiss NSF's algorithm for selecting proposals from intervals violates natural axioms for a selection rule, like monotonicity in the budget $\nselected$ and stability, as we show in Section~\ref{sec:axioms}. 

\subsection{Our Approach}
\label{sec:principled_approach}

In our work, we propose a model that captures the motivation for randomizing due to uncertainty about the relative quality of the proposals. We show that if the funder cares about their \emph{worst case utility} they must randomize decisions in order to robustly optimize their utility. 

Specifically, we consider a funder who estimates intervals for each proposal based on data. These intervals need not come from any one particular model, but they should capture the funder's inherent uncertainty about the relative quality of proposals.

Our interpretation of the quality intervals stems from the intuitive argument that if the intervals for two proposals overlap then the funder does not have enough evidence to distinguish between them. On the other hand, if the interval of proposal $A$ dominates the interval of proposal $B$, then the funder has sufficient evidence to believe that $A$ is better than $B$. Hence, the intervals define a partial ordering of proposals that represents a set of conclusions by the funder regarding the relative quality of different proposals. This ordering is the canonical ``interval order'' for a set of intervals. It captures the spirit of how the Swiss NSF interprets confidence intervals in their setup, as they describe that the intervals are \emph{``used to identify proposals with similar quality and $\ldots$ these proposals are entered into a lottery to select those to be funded. The approach acknowledges that there are proposals of similar quality and merit, which cannot all be funded.''}~\cite{heyard2022rethinking}

Our approach only considers the relative ordering of the intervals and not additional information such as the amount of overlap between intervals, point estimates, or other information from a probabilistic model of the data. This makes our approach particularly applicable to settings where decision-makers have so-called ``Knightian uncertainty'', that is, when they cannot assign a single probability measure over the set of possible outcomes~\cite{knight1921risk, sunstein2023knightian}. As further motivation, we describe a number of concrete settings in which a funder would construct intervals under this Knightian uncertainty:

\begin{enumerate}[leftmargin=*]
    \item \emph{Imputation-based intervals:} in many grant funding panels most proposals have been reviewed by a large fraction of the reviewers. For example, in the Swiss NSF process~\cite{heyard2022rethinking}, over $91\%$ pairs of reviewer-proposal pairs received scores. If every reviewer scored every proposal, then the funder need not worry about reviewer miscalibration---the tendency for different reviewers to interpret the review scale differently\cite{wang2018your}. Hence, the funder may first impute values for missing scores and then aggregate across reviewers. The funder can make minimal assumptions by adopting Manski bounds~\cite{manski1990nonparametric, saveski2023counterfactual}. Manski bounds generate \emph{intervals} by imputing missing scores with a range of possible values from the minimum to maximum score. The aggregate scores are then given as an interval over the range of possible imputed values. Hence, the intervals represent the plausible set of values for each proposal's aggregate review score, without a probabilistic interpretation. 
    \item \emph{Intervals from model ensembling:} in many cases, a funder may have a number of plausible models of their data, each of which produces point estimates. Hence, the funder can estimate all plausible models of the data and ensemble into intervals by taking quantiles of the point estimates. This type of ensembling has been applied to time-series forecasting problems where it is  known as Quantile Prediction Averaging~\cite{nowotarski2015computing}.
    \item \emph{Intervals based on expert input:} a frequently cited motivation for randomization is concerns about prejudice, for example, against highly original ideas or junior researchers. The funder may not be able to reliably estimate such sources of error from their observational, potentially sparse data, but can rely on prior controlled experiments that establish the rough magnitude of prejudices in the review process.
    \item \emph{Multi-criteria aggregation:} Reviewers are often given a number of criteria on which proposals are rated, for example ``intellectual merit'' and ``broader impact.'' The funder can consider multiple valid ways to aggregate 
    criteria and generate intervals over the set of possible aggregations. 
    \item \emph{Robustness to mis-specification of model used to generate intervals:} the funder may use a probabilistic model to estimate intervals and then draw conclusions about the ordering of the intervals, but may not trust other information about the distribution. This is a common assumption in the literature on distributionally robust optimization (see~\cite{Rahimian_2022} for a survey), where the funder is said to be optimizing over a ``support-only'' ambiguity set.
\end{enumerate}

\section{Problem Formulation}
\label{sec:problem_defn}

Our methods apply to settings like admissions, scientific peer review, job screening, and financial investment, where decision makers estimate quality intervals and select top candidates based on these intervals. For concreteness, throughout our exposition, we will describe a \emph{funder} choosing \emph{proposals}.

Consider a funder who receives $\nproposal$ proposals. From these, the funder wishes to select the $\nselected$  highest quality proposals. Note that $\nproposal$ could be as large as thousands of proposals and $\nselected$ a fixed fraction of the total and can also be in the hundreds or thousands. For each proposal $i \in [\nproposal]$\footnote{We use the standard notation $[\kappa]$ to denote set $\{1,\ldots,\kappa\}$ for any positive integer $\kappa$.}, the funder estimates an interval $[\lcb_i, \ucb_i] \subseteq \R$ representing a range of quality scores that the proposal could possibly take. A higher score indicates higher quality. The funder wishes to design a randomized selection mechanism to choose  $\nselected$ proposals given the intervals. In order to design such a mechanism, we adopt two primary principles which we term as ex ante optimality and ex post validity, described below. 

\paragraph{Ex ante optimality}
The funder's utility is the expected number of the true top-$\nselected$ proposals that they select, ranked by their quality. Formally, let $\perm: [\nproposal] \to [\nproposal]$ denote a ranking of the proposals where for each proposal $i \in [\nproposal]$, the rank of the proposal is denoted by $\perm(i) \in [\nproposal]$. If the funder samples proposals with marginal probabilities $\prob \in [0,1]^\nproposal$ and the true ranking of proposals is $\perm$, their expected utility is $\sum_{i=1}^\nproposal \prob_i \ind\{\perm(i) \leq \nselected\}$. Clearly, if the funder knew the true ranking $\perm$, then they would optimize utility by choosing deterministically, i.e., by setting $\prob_i=1$ for $i$ with $\perm(i) \leq \nselected$, and $0$ otherwise.

However, recall that the funder is uncertain about relative qualities of the proposals, as captured by overlaps in intervals. Any ordering that is consistent with overlapping intervals could be the true ranking. Hence, the intervals define a set of feasible rankings:
    \[
    \errorpermutations = \left\{ \perm \text{ permutation of } [\nproposal] \mid \forall i, j \in [\nproposal],\; \lcb_i > \ucb_j \implies \perm(i) < \perm(j) \right.\}
    \]
In other words, if proposal $i$ has quality strictly above proposal $j$, then $i$ is ranked higher than $j$ in all $\perm \in \errorpermutations$.
As an example of an extreme case, if all $\nproposal$ intervals overlap each other, then $\errorpermutations$ consists of all possible permutations of the proposals. 

 For ex ante optimality, the funder optimizes their $\emph{worst case}$ utility over feasible rankings $\errorpermutations$:
\begin{align}
 \min_{\perm \in \errorpermutations} \sum_{i = 1}^{\nproposal} \prob_i \ind\{\perm(i) \leq \nselected\} \;\;
\label{ref:ex_ante_objective}
\end{align}
The funder maximizes their worst-case expected utility by choosing the optimal marginal probabilities $\prob$ solving the maximin optimization problem:
    \begin{align}
        \max_{\substack{\prob \in [0,1]^\nproposal: \\ \|\prob\|_1 = \nselected}} \;   \min_{\perm \in \errorpermutations} \; \sum_{i = 1}^{\nproposal} \prob_i \ind\{\perm(i) \leq \nselected\}. 
        \label{ref:opt_problem}
    \end{align}
Finally, the funder randomly chooses $\nproposal$ proposals with marginal probabilities corresponding to $\prob$. 

The ex ante optimization problem has a game theoretic interpretation that motivates the need for randomization. Our model corresponds to a zero-sum Stackelberg game where the funder is the ``leader'' who selects $\nselected$ proposals. The funder faces an adversarial ``follower'' who chooses a ranking of proposals. The leader's utility is the number of top $\nselected$ proposals selected based on the adversary's ranking, while the adversary's utility is the negation of the leader's. The Strong Stackelberg Equilibrium (SSE) is exactly the solution to Objective~\ref{ref:opt_problem}. It is well known that in an SSE, the leader may need to commit to a randomized (or mixed) strategy.

\paragraph{Ex post validity}
The ex post validity criterion requires that for any pair of proposals $a$ and $b$, if $b$'s quality interval lies strictly below $a$'s interval and if $b$ is selected, then $a$ must also be selected.

Formally, a selection rule that takes as input a set of quality intervals $I$ and outputs a set of selected proposals $\selected$ satisfies \emph{ex post validity}, if for all pairs of intervals $a,b \in I$ with $\lcb_a > \ucb_b$, and all outputs $\selected$ selected with non-zero probability, $b \in \selected \implies a \in \selected$.

The ex post validity criteria ensures that the actual selected set of proposals is legitimate to stakeholders. In particular, if the funder rejects a proposal that dominates an accepted proposal, that would be unacceptable to the funder and to applicants. 

While the ex post condition seems natural, the simple randomize-above-threshold mechanism can violate it: Suppose proposals $a$ and $b$ both lie above the threshold, but $a$ dominates $b$. Because $a$ and $b$ are entered into a uniform lottery, $a$ may be rejected at random, while $b$ is accepted at random.

\section{Efficient Algorithm}
\label{sec:algorithm}

The ex ante optimization problem in (\ref{ref:opt_problem}) is equivalent to to solving the following linear program (LP):
\begin{align}
\label{LP} 
    \max_{\prob \in \mathbb{R}^n, \objvalue \in \mathbb{R}} \quad & \objvalue  \\ 
    \text{subject to} \quad 
    & \objvalue \leq \sum_{i=1}^{\nproposal} \prob_i \ind\{\perm(i) \leq \nselected\}, \quad \forall \perm \in \errorpermutations, \notag \\
    & \sum_{i=1}^{\nproposal} \prob_i = \nselected \text{ and } 0 \leq \prob_i \leq 1, \forall i \in [\nproposal] \notag 
\end{align}
This LP can have on the order of $\binom{\nproposal}{\nselected}$ constraints so its size is exponential in $\nselected$. Recall that in practice, $\nproposal$ can be thousands of proposals, out of which the funder selects $\nselected$ in the many hundreds or thousands, so this LP is intractably large. In fact, determining feasibility of a solution to the LP when $\errorpermutations$ is defined by an arbitrary partial order, called the \emph{minimum weight $k$-ideal problem}, is NP-hard~\cite{faigle1994computational}. Prior work has shown that the problem is solvable in polynomial time for a tree partial order, but our problem uses an interval order. Several algorithmic problems related to interval orders are also known to be NP hard, including maximum cut on interval graphs~\cite{adhikary2023complexity}, interval graph completion~\cite{ravi1991ordering} and minimizing weighted completion time of jobs subject to interval ordering precedence constraints~\cite{woeginger2003approximability}.

In contrast, as we show in Section~\ref{sec:separation_oracle}, the ex ante optimization problem is solvable in polynomial time using the ellipsoid method with a separation oracle. We present a practical cutting plane algorithm based on the theoretically polynomial time algorithm in Section~\ref{sec:practical_algorithm}. Finally, in Section~\ref{sec:ex_post_algo}, we describe how to ensure that random sampling with optimal marginal probabilities satisfies both ex ante and ex post conditions simultaneously, with an efficient algorithm. We call this end-to-end algorithm, summarized in Section~\ref{sec:full_algo}, \textbf{M}aximin \textbf{E}fficient \textbf{R}andomized \textbf{I}nterval \textbf{T}op-$\nselected$
 (\textbf{MERIT}).

\subsection{Polynomial Time Algorithm}
\label{sec:separation_oracle}

We now develop a polynomial-time algorithm to solve linear program~\eqref{LP}. Our approach solves the problem using a polynomial time ``separation oracle''~\cite{grotschel1981ellipsoid} with the ellipsoid algorithm. A separation oracle checks whether a proposed solution satisfies all the constraints.
If the solution is feasible, the oracle confirms it. If not, it identifies (at least one) specific constraint that the solution violates. The separation oracle may be used to solve the LP without enumerating all (exponentially many) constraints by starting with a limited set of constraints and iteratively shrinking the possible feasible region of the LP through calls to the separation oracle. Our main result proves that this method yields a polynomial time algorithm:

\begin{theorem}[Polynomial time solution] The linear program~\eqref{LP} can be solved within accuracy $\epsilon$ of the optimal solution in polynomial time with respect to $\nproposal$ and $\log(1/\epsilon)$ using the ellipsoid algorithm with Algorithm~\ref{alg:separationoracle} as a separation oracle.
\label{prop:ellipsoid}
\end{theorem}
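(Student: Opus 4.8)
The plan is to invoke the standard equivalence between polynomial-time optimization over a rational polyhedron and the existence of a polynomial-time separation oracle, due to Grötschel, Lovász and Schrijver. Concretely, I would first observe that the feasible region of~\eqref{LP} is a bounded polyhedron in $(\prob, \objvalue)$-space: the box constraints $0 \le \prob_i \le 1$ and $\|\prob\|_1 = \nselected$ confine $\prob$ to a slice of the unit cube, and since each constraint of the form $\objvalue \le \sum_i \prob_i \ind\{\perm(i) \le \nselected\}$ forces $\objvalue \in [0, \nselected]$, the whole region lives inside a ball of radius polynomial in $\nproposal$ and contains an explicitly describable point (e.g.\ $\prob_i = \nselected/\nproposal$ for all $i$, $\objvalue = 0$), so the ellipsoid method applies with the usual guarantees. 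The key technical input is then: given a candidate $(\prob, \objvalue)$, either certify that all constraints hold, or output a violated one, in time polynomial in $\nproposal$ and the bit-length of the input.

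The main step is to argue that Algorithm~\ref{alg:separationoracle} correctly implements this separation oracle in polynomial time. The box and sum constraints are trivially checkable. For the exponential family of constraints indexed by $\perm \in \errorpermutations$, the observation is that $(\prob, \objvalue)$ satisfies all of them if and only if $\objvalue \le \min_{\perm \in \errorpermutations} \sum_{i : \perm(i) \le \nselected} \prob_i$; equivalently, the separation problem reduces to computing the minimum, over all rankings consistent with the interval order, of the total $\prob$-mass placed in the top $\nselected$ positions. This inner minimization is exactly a combinatorial problem over $\nselected$-ideals (down-sets) of the interval order: an adversary wants to choose which $\nselected$ proposals occupy the top slots, subject to the precedence constraints $\lcb_i > \ucb_j \implies i$ ranked above $j$, so as to minimize $\sum_{i \in \text{top}} \prob_i$. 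I would establish that for an interval order this minimum-weight $\nselected$-ideal can be found in polynomial time --- in contrast to the NP-hardness for general partial orders cited in the text --- by exploiting the total order on the interval left-endpoints (or a sweep over thresholds), so that the optimal down-set is determined by a one-dimensional choice that can be enumerated or solved greedily/by dynamic programming. Once the minimizing ranking $\permopt$ is identified, if its value is at least $\objvalue$ we declare feasibility, and otherwise we return the corresponding violated constraint.

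With the separation oracle in hand, the theorem follows from the GLS framework: the ellipsoid method, run with a polynomial-time separation oracle over a polyhedron whose facet complexity is polynomially bounded (every constraint of~\eqref{LP} has $0/1$ or small-integer coefficients), produces a solution within additive error $\epsilon$ in a number of iterations polynomial in $\nproposal$ and $\log(1/\epsilon)$, with each iteration dominated by one oracle call. I would also note the standard point that a near-feasible point can be rounded to an exactly feasible one without loss, or simply cite the version of GLS that directly yields $\epsilon$-approximate optimization over a well-described polyhedron, which is what the statement claims.

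I expect the main obstacle to be the correctness and efficiency of the inner minimization --- proving that the minimum-weight $\nselected$-ideal of an interval order is polynomial-time computable and that Algorithm~\ref{alg:separationoracle} computes it. This is the step that distinguishes our tractable case from the NP-hard variants in the literature, so it requires genuinely using the interval structure (e.g.\ that the precedence relation $\lcb_i > \ucb_j$ is a semiorder/interval order with a compatible linear arrangement of endpoints) rather than generic partial-order arguments; the rest of the proof is a routine application of the ellipsoid machinery.
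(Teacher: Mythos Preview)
Your overall architecture is exactly the paper's: reduce to GLS ellipsoid plus a polynomial separation oracle, and identify the oracle task as a minimum-weight $\nselected$-ideal computation over the interval order. Two points deserve sharpening.

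First, the step you flag as the main obstacle is indeed the entire technical content, and your sketch (``sweep over thresholds'', ``one-dimensional choice'') does not yet pin down the structural fact that makes it work. The paper's argument is this: sort intervals by decreasing lower bound $\lcb$, and for each $i\in[\nselected+1]$ consider the family $K^{(i)}$ of feasible top-$\nselected$ sets that contain intervals $1,\dots,i-1$ and exclude interval $i$. Any such set must fill its remaining $\nselected-(i-1)$ slots from $S_i=\{j>i:\text{interval }j\text{ overlaps }i\}$, because anything in $(i,\nproposal]\setminus S_i$ lies strictly below $i$ and hence cannot be above $i$ in a consistent ranking. Crucially, every pair of intervals in $S_i$ mutually overlap (they all have $\lcb_j\le\lcb_i\le\ucb_j$), so \emph{any} subset of $S_i$ of the right size is feasible, and the minimum over $K^{(i)}$ is attained by taking the $\nselected-(i-1)$ smallest $\prob$-values in $S_i$. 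Since the $K^{(i)}$ cover all feasible top-$\nselected$ sets, enumerating $i=1,\dots,\nselected+1$ finds the global minimum. This is the interval-order-specific fact you need; a generic greedy or DP over endpoints will not obviously give it without this observation about mutual overlap in $S_i$.

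Second, a small technical point you gloss over: the equality $\|\prob\|_1=\nselected$ makes the feasible region lie in a proper affine subspace, so the ``contains an interior point'' claim needs care. The paper relaxes to $\|\prob\|_1\le\nselected$ (harmless since raising any $\prob_i$ cannot hurt the worst-case value), which restores full dimension and lets one invoke the standard inner-ball bound from GLS with integer constraint data. You should either do the same or explicitly work in the affine hull.
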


The primary technical difficulty is the design of an efficient separation oracle. We present our separation oracle in Algorithm~\ref{alg:separationoracle}. Given a candidate solution $(\prob, \objvalue)$, the separation oracle checks whether $\prob$ achieves a worst-case objective value of at least $\objvalue$. If the worst-case objective value under $\prob$ is greater than $\objvalue$, then the solution is feasible, if not the oracle returns a set of violated constraints. At a high level, the oracle works by constructing worst-case possible sets of top-$\nselected$ proposals. For each of the $\nselected + 1$ intervals with the largest lower bounds, the algorithm constructs the worst-case set of top-$\nselected$ proposals that includes intervals $1$ to $(i-1)$ and excludes interval $i$ in the top-$\nselected$. Excluding an interval with a large lower bound constrains the set of intervals that must be in the top-$\nselected$, since all intervals strictly below the interval with the large lower bound must be excluded from the top-$\nselected$. 
In considering all such sets of intervals, the algorithm enumerates possible worst-case permutations with respect to $\prob$ in time $O(\nproposal \nselected)$. If the separation oracle finds a permutation that gives objective values smaller than $\objvalue$, it returns this permutation, which corresponds to a violated constraint in the LP. If it does not find any such permutation, then $(\prob, \objvalue)$ is feasible. This separation oracle is used as a sub-routine in the ellipsoid algorithm to compute the optimal solution in polynomial time in Theorem~\ref{prop:ellipsoid}.  

\begin{lemma}[Polynomial-time separation oracle]
    \label{proposition:separation_oracle}
For any candidate solution $(\prob, \objvalue)$ to the linear program~\eqref{LP}, Algorithm~\ref{alg:separationoracle} returns $\emptyset$ only if the candidate solution is feasible and returns a non-empty set of violated constraints otherwise. Moreover, Algorithm~\ref{alg:separationoracle} runs in time $O(\nproposal \max\{\nselected, \log \nproposal\})$. 
\end{lemma}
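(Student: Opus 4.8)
The plan is to prove Lemma~\ref{proposition:separation_oracle} by showing that Algorithm~\ref{alg:separationoracle} correctly identifies whether the candidate $(\prob, \objvalue)$ violates some constraint of~\eqref{LP}, and does so in the claimed running time. The key structural observation is that the worst-case ranking $\perm \in \errorpermutations$ for a fixed $\prob$ is determined entirely by \emph{which} set of $\nselected$ proposals ends up in the top-$\nselected$; the constraint value is $\sum_{i \in S} \prob_i$ where $S = \{i : \perm(i) \le \nselected\}$. So the oracle really needs to solve $\min_{S} \sum_{i\in S}\prob_i$ over all sets $S$ of size $\nselected$ that arise as the top-$\nselected$ of some feasible ranking --- equivalently, over all \emph{$\nselected$-ideals} (down-sets of size $\nselected$) of the interval order. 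The first step of the proof is to characterize these feasible top-sets: $S$ (with $|S|=\nselected$) is realizable iff it is ``downward closed'' in the interval order, i.e., whenever $j \in S$ and $\lcb_i > \ucb_j$ we must have $i \in S$. I would verify this equivalence carefully, since it is what licenses replacing the exponential family of permutation constraints by a minimization over ideals.

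Next I would argue that it suffices to enumerate a small, structured family of candidate ideals rather than all of them. Order the proposals by decreasing lower bound $\lcb$, breaking ties arbitrarily, and call them $1, 2, \ldots, \nproposal$ in this order. The claim is that an optimal (minimum-weight) feasible top-set $S^*$ is characterized by the \emph{smallest-indexed proposal it excludes}: if $S^*$ excludes proposal $i$ but contains $1,\ldots,i-1$, then feasibility forces $S^*$ to exclude every proposal $j$ with $\ucb_j < \lcb_i$ (all proposals dominated by $i$), and among the remaining ``free'' proposals $S^*$ should greedily take the $\nselected - (i-1)$ with smallest $\prob$-value. I would also handle the boundary case where the top-$\nselected$ can be taken to be exactly $\{1,\ldots,\nselected\}$ (nothing among the top lower-bounds is excluded), which is why the algorithm ranges over $i \in \{1, \ldots, \nselected+1\}$. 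Establishing that this family of $\nselected+1$ candidate sets contains a global minimizer is the crux: I would show that starting from any feasible $S$, if $i$ is its smallest excluded index, then replacing the ``below-$i$'' portion of $S$ by the greedy-smallest-$\prob$ choice among proposals not forced out yields a feasible set of no greater weight, and this replacement lands in the enumerated family. The forced-exclusion structure (dominated intervals must be dropped) together with the fact that $1,\ldots,i-1$ are kept is what makes the greedy completion both feasible and optimal among completions.

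Given correctness of this enumeration, the soundness/completeness statement follows: the oracle computes $\min_i w_i$ over the $\nselected+1$ candidate weights $w_i$; if this minimum is $\ge \objvalue$ then every constraint is satisfied so $(\prob,\objvalue)$ is feasible and the oracle returns $\emptyset$; otherwise it returns (the permutation corresponding to) a minimizing $S$, which is a genuinely violated constraint. For the running time: sorting by $\lcb$ and by $\prob$ costs $O(\nproposal \log \nproposal)$; for each of the $\nselected+1$ values of $i$ we must determine the forced-out set and greedily complete, which with appropriate precomputation (e.g. a single pass maintaining prefix information about which proposals are dominated by interval $i$, exploiting that the set of dominated proposals is monotone in the $\lcb$-order) costs $O(\nproposal)$ per value of $i$, for $O(\nproposal \nselected)$ total. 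Adding the sort gives $O(\nproposal \max\{\nselected, \log\nproposal\})$ as claimed. The main obstacle I anticipate is the greedy-completion optimality argument in the second step --- proving that no feasible top-set outside the enumerated family can beat all members of it --- because one has to rule out the possibility that cleverly excluding a proposal with a \emph{low} lower bound (hence low index in a different order) but small $\prob$ could help; the resolution is that excluding such a proposal is never forced and never reduces the size constraint slack in a way that helps, so it can be swapped back in without increasing weight, and I would make this exchange argument precise.
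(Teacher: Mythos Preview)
Your plan matches the paper's proof almost exactly: both reduce to minimizing $\sum_{j\in T}\prob_j$ over feasible top-$\nselected$ sets $T$, sort by decreasing lower bound, partition feasible sets by the smallest excluded index $i$, and show that within each class the optimum is $[i-1]$ together with the $\nselected-(i-1)$ smallest-$\prob$ intervals from $S_i$. The one point worth sharpening is the obstacle you anticipate at the end: it dissolves once you observe that every $j\in S_i$ satisfies $\lcb_j \le \lcb_i \le \ucb_j$, so all intervals in $S_i$ share the common point $\lcb_i$ and hence pairwise overlap; consequently $[i-1]\cup A$ is a feasible top-$\nselected$ set for \emph{every} $A\subseteq S_i$ of the right size, and the greedy completion is immediately optimal within the class---no exchange argument is needed.
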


\newcommand{\cuts}{Z}
\begin{algorithm}
\caption{Polynomial-time Separation Oracle}
\begin{algorithmic}[1]
\Statex \textbf{Input:} Candidate solution: $(\prob, \objvalue) \in [0,1]^\nproposal \times \R$ with $\|\prob\|_1 = \nselected$, number selected $\nselected$, set of intervals $\{\lcb_i, \ucb_i\}_{i \in [\nproposal]}$ sorted in decreasing order of lower bound $\ell_i$
\Statex \textbf{Output:} A set of violated constraints ($\emptyset$ if $(\prob, \objvalue)$ is feasible)
\State $\cuts \gets \emptyset$
\For{$i = 1$ to $\nselected+1$}
     \State $S_i \gets \{j \in (i, \nproposal]: \text{intervals } j \text{ and } i \text{ overlap} \}$
     \If{$|S_i| \geq (\nselected - (i-1))$}
        \State Obtain $\tilde S_i$ by sorting $S_i$ by $\prob$ and keeping only the $\nselected- (i-1)$ smallest values
        \If{$\objvalue > \sum_{j=1}^{i-1} \prob_j + \sum_{j \in \tilde S_i } \prob_j \;^\dagger$
            \State $\cuts \gets \cuts \cup \left\{\text{``\ }\objvalue \leq \sum_{j=1}^{i-1} \prob_j + \sum_{j \in \tilde S_i} \prob_j\text{''\ }\right\}$
            }
        \EndIf
     \EndIf
\EndFor
\State \Return $\cuts$
\Statex
\Statex $^\dagger$\textit{By convention, we take the empty sum from $j=1$ to $0$ to be $0$.}
\end{algorithmic}
\label{alg:separationoracle}
\end{algorithm}

\subsection{Practical Algorithm}
\label{sec:practical_algorithm}

While theoretically polynomial time, in practice, the ellipsoid algorithm is impractical for problems with even a few hundred proposals. Hence, taking inspiration from the ellipsoid algorithm, we develop a practical ``cutting plane'' method to solve the LP~\eqref{LP}. 

The cutting plane algorithm is described in full in Algorithm~\ref{alg:cutting_plane}.
The algorithm starts by solving a relaxation of the LP without any of the worst-case value constraints to find an initial (potentially infeasible) candidate solution $(\prob, \objvalue)$. Then, the algorithm repeatedly calls the separation oracle to check the feasibility of the current candidate solution. If the candidate solution is feasible, it is an optimal solution to the LP, since it is optimal for a relaxation of the full LP. If the candidate solution is infeasible, the algorithm adds the constraints returned by the separation oracle and re-solves the LP.

\begin{algorithm}[ht]
\caption{Cutting Plane Algorithm}
\begin{algorithmic}[1]
\Statex \textbf{Input:} Number of proposals to select $\nselected$, set of intervals $I=\{\lcb_i, \ucb_i\}_{i \in [\nproposal]}$ max iterations $\maxiters$ 
\Statex \textbf{Output:} Ex ante optimal vector of marginal probabilities $\prob$
\Statex
\Statex \textit{\# Prune Intervals}
\State For all intervals strictly below at least $\nselected$ other intervals, set $\prob_i = 0$ and remove. 
\State For all intervals strictly above at least $\nproposal - \nselected$ other intervals, set $\prob_i = 1$ and remove.
\State Let $a$ be the number of intervals accepted. Update $\nselected \gets \nselected - a$.
\Statex
\Statex \textit{\# Initialize Linear Program}
\State For each interval $i \in [\nproposal]$ compute $A(i)$, the number of proposals strictly above $i$ and $B(i)$, the number of proposals strictly below $i$.
\State Using $A$ and $B$, partition the intervals into $\orderwidth$ monotonically ordered subsets $M_1,\ldots,M_\orderwidth$. (Algorithm~\ref{alg:chain-cover})
\State Solve the following linear program to obtain initial $\prob, \objvalue$:
\begin{align*}
\min_{\objvalue, \prob} \quad & \objvalue \\
\text{s.t.} \quad & \sum_{i=1}^{\nproposal} \prob_i = \nselected \\
                  & \prob_i \in [0,1] \quad \forall i \in [\nproposal] \\
                  & \objvalue \leq \sum_{j=1}^{\nselected} \prob_{j} \\
                  & \prob_{M[i]} \geq \prob_{M[i+1]} \quad \forall i \in [|M|-1], \forall M \in \{M_1, \ldots, M_\orderwidth\}
\end{align*}
\Statex \textit{\# Add Cuts}

\For{$\maxiters$ iterations}

\State $C \gets \text{\texttt{SeparationOracle}}((\prob, \objvalue), \nselected, I) $
\If{$C = \emptyset$} \Comment{Feasible solution}
\State \Return \prob
\Else  \Comment{Infeasible solution}
\State Add constraints from $C$ to the LP and solve the new problem to obtain new $(\prob, \objvalue)$.
\EndIf
\EndFor
\State \Return \text{\texttt{Failure}}
\end{algorithmic}
\label{alg:cutting_plane}
\end{algorithm}

The cutting plane algorithm converges to a feasible optimal solution quickly in practice because it is initialized with a useful set of constraints on the feasible region of the problem. These constraints prune the problem and impose monotonicity and symmetry constraints on the marginal probabilities $\prob$, based on the number of intervals above and below each proposal, which we define below. 

\begin{definition}[Number above ($\nabove$) and number below ($\nbelow$)]
\label{defn:nabovenbelow}
For each proposal $i \in [\nproposal]$, define:
    \begin{align*}
        \nabove(i) = |\{r \in [\nproposal]: \lcb_r > \ucb_i \}| \qquad \text{and} \qquad  
        \nbelow(i) = |\{r \in [\nproposal]: \lcb_i > \ucb_r \}|,
    \end{align*}
that is, $\nabove(i)$ is the number of intervals strictly above $i$ and $\nbelow(i)$ is the number of intervals strictly below $i$.
\end{definition}

The cutting plane algorithm first prunes all intervals guaranteed to always be in the top $\nselected$ or never be in the top $\nselected$, which reduces the number of decision variables. Then, the cutting plane algorithm initializes the linear program with a set of constraints on the optimal marginal probabilities $\prob$. In particular, the algorithm partitions the intervals into ``monotonically ordered subset'' within which $\prob$ can be assumed to be monotonically non-increasing without loss of optimality:

\begin{definition}[Monotonically ordered subset]
    \label{defn:monotonically_ordered_subsets}
    For any subset of intervals $M \subseteq [\nproposal]$, we say that the subset of intervals is \emph{monotonically ordered} if $\forall i \in [|M|-1]$, $\nabove(M[i]) \leq \nabove(M[i+1])$ and $\nbelow(M[i]) \geq \nbelow(M[i+1])$, where $M[i]$ denotes the $i$-th element in $M$.
\end{definition}

The monotonicity constraints come from the observation that if proposal $a$ has more proposals below it than proposal $b$ and has fewer proposals above it than proposal $b$, then without loss of optimality, $\prob_a \geq \prob_b$. Our cutting plane algorithm therefore imposes constraints on these ``monotonically ordered subsets'' at the start, which greatly constrains the size of the feasible region, while still returning an optimal solution. We provide detailed analysis of the correctness and efficiency of this algorithm in Appendix~\ref{app:full_cutting_plane}.

\subsection{Enforcing Ex Post Validity}
\label{sec:ex_post_algo}

A solution to the ex ante optimality LP \eqref{LP} returned by the Cutting Plane Algorithm (Algorithm~\ref{alg:cutting_plane}) or the Ellipsoid Algorithm, is not guaranteed to output a vector of marginal probabilities, such that sampling proposals with these marginals always guarantees ex post validity. However, we prove that we can post-process any solution to the ex ante optimization problem, and then sample with marginal probabilities $\prob$ to guarantee ex ante optimality and ex post validity simultaneously. This stands in contrast to the commonly used ``randomize-above-threshold'' approach to randomization, which does not guarantee ex post validity as described in Section~\ref{sec:problem_defn}.  

\begin{theorem}[Post-processing for ex post validity] Given any ex ante optimal $\prob$, Algorithm~\ref{alg:tie_breaking} enables the funder to sample $\nselected$ proposals while satisfying both ex ante and ex post conditions and is computable in time $O(\nproposal^2)$.
\label{thm:ex_post}
\end{theorem}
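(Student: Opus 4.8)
The plan is to show that any ex ante optimal marginal vector $\prob$ can be rounded to a distribution over selection sets $\selected$ of size $\nselected$ that (i) realizes exactly the marginals $\prob$ (so ex ante optimality is preserved) and (ii) places zero probability on any set that violates ex post validity. The core obstacle is the second requirement: we need a sampling scheme that couples the inclusion events of dominating/dominated pairs so that $\lcb_a > \ucb_b$ together with $b \in \selected$ forces $a \in \selected$ with probability one. My first move is to observe that the interval order is a partial order, and in a partial order of intervals the ``downward-closed'' (ideal) structure is what ex post validity demands: the selected set, when viewed through the lens of the strict dominance relation, must be an \emph{up-set} of the partial order (if $b$ is in, everything strictly above $b$ is in). So the goal becomes: sample a random up-set of size exactly $\nselected$ with prescribed element-inclusion marginals $\prob$.

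The key structural fact I would exploit is that the monotonicity property already established for ex ante optimal solutions — if $\nabove(a) \le \nabove(b)$ and $\nbelow(a) \ge \nbelow(b)$ then WLOG $\prob_a \ge \prob_b$ — can be strengthened on a comparable pair: if $a$ strictly dominates $b$ then certainly $\nabove(a) \le \nabove(b)$ and $\nbelow(a) \ge \nbelow(b)$, so we may assume $\prob_a \ge \prob_b$. (If the given $\prob$ does not already satisfy this on every comparable pair, I would first argue it can be symmetrized/sorted within each monotonically ordered subset without changing the objective, invoking the monotonicity lemma cited in Section~\ref{sec:practical_algorithm}.) Once $\prob$ is monotone along the partial order, I would build the sampler via a single threshold coupling: draw one uniform random variable $U \sim \mathrm{Unif}[0,1]$ and, conceptually, include proposal $i$ iff $U \le \prob_i$. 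Because $\prob_a \ge \prob_b$ whenever $a$ dominates $b$, the event $\{U \le \prob_b\}$ is contained in $\{U \le \prob_a\}$, so $b \in \selected \Rightarrow a \in \selected$ — ex post validity holds deterministically. The marginal of $i$ is exactly $\prob_i$ by construction, so ex ante optimality is untouched in expectation. The remaining wrinkle is the hard-size constraint $|\selected| = \nselected$: a naive single-threshold rule does not give a fixed cardinality. I would fix this with a dependent-rounding / systematic-sampling refinement of the threshold idea — process proposals in an order that is a linear extension of the partial order, maintain a running "carry," and use a Srinivasan-style pipage or systematic-sampling step that is restricted to swap mass only between order-adjacent elements, which preserves both the marginals and the nesting of inclusion events along any chain. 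Concretely, within each monotonically ordered subset (a chain in the relevant sense) systematic sampling with the $\prob_i$'s as lengths yields a contiguous-prefix-type selection that is automatically an up-set of that chain; gluing the chains together while respecting cross-chain dominance is the delicate part.

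The steps, in order: (1) reduce to the case where $\prob$ is monotone nonincreasing along every comparable pair of the interval order, using the monotonicity/symmetrization lemma from Section~\ref{sec:practical_algorithm}; (2) define the coupled sampler — a systematic / threshold-based dependent rounding that respects a linear extension of the partial order — and prove it returns sets of size exactly $\nselected$; (3) verify the per-element marginals equal $\prob_i$, so ex ante optimality is preserved; (4) verify the nesting property: for every dominating pair $a \succ b$, $\{b \in \selected\} \subseteq \{a \in \selected\}$, hence ex post validity holds with probability one; (5) bound the running time — computing $\nabove,\nbelow$, sorting, building the chain cover, and running the rounding pass is $O(\nproposal^2)$ (the quadratic term coming from determining all dominance relations / the chain decomposition). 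I expect step (2)–(4) — getting a single rounding scheme that \emph{simultaneously} fixes the cardinality, matches the marginals, and nests inclusion events across \emph{all} comparable pairs (not just within one chain) — to be the main obstacle; the resolution is that the interval order's chain cover interacts nicely with the already-known monotonicity structure of optimal $\prob$, so a linear extension exists along which systematic sampling is globally up-set-preserving, and the cardinality constraint is met exactly because $\sum_i \prob_i = \nselected$ is an integer.
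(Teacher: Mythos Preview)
Your approach differs fundamentally from the paper's, and the difference matters because your route has a genuine gap. The paper does \emph{not} try to preserve the ex ante optimal marginals $\prob$ and then design a sampler whose output is always an up-set. Instead, Algorithm~\ref{alg:tie_breaking} \emph{modifies} $\prob$: for each dominated interval $b$ it shifts probability mass upward to intervals $a$ strictly above $b$, until for every comparable pair either $\prob_a = 1$ or $\prob_b = 0$. The key lemma is that such a shift can never decrease the worst-case objective (since $\perm(b)\le \nselected$ implies $\perm(a)\le \nselected$ for every $\perm\in\errorpermutations$), so the post-processed vector is still ex ante optimal. Once every comparable pair satisfies $\prob_a=1$ or $\prob_b=0$, \emph{any} sampling scheme respecting the marginals is automatically ex post valid; there is no need for a coupled sampler at all.

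Your proposed mechanism, by contrast, keeps $\prob$ fixed and relies on the claim that systematic sampling along a linear extension returns an up-set. This claim is false even on a single chain with monotone marginals. Take a chain $1\succ 2\succ 3$ with $\prob=(0.9,0.6,0.5)$ and $\nselected=2$: cumulative sums are $(0,0.9,1.5,2.0)$, and for $u=0.95$ systematic sampling selects $\{2,3\}$, omitting the dominating element $1$. So the assertion that ``systematic sampling with the $\prob_i$'s as lengths yields a contiguous-prefix-type selection'' does not hold, and the step you flag as ``the delicate part'' --- gluing chains while preserving the up-set property, exact cardinality, and exact marginals simultaneously --- is exactly where the argument breaks. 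More basically, it is not clear that every ex ante optimal $\prob$ can be realized as a mixture over size-$\nselected$ up-sets at all; the paper sidesteps this question entirely by passing to a different (but still optimal) marginal vector for which the realization is trivial. The missing idea in your plan is that you are allowed to change $\prob$, not just the sampling distribution over sets.
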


Theorem~\ref{thm:ex_post} applies the post-processing algorithm given in Algorithm~\ref{alg:tie_breaking} to a solution from the Cutting Plane Algorithm. For any $a, b \in [\nproposal]$ with $\lcb_a > \ucb_b$, Algorithm~\ref{alg:tie_breaking} terminates with $\prob_a = 1$ or $\prob_b = 0$. Moreover, Algorithm~\ref{alg:tie_breaking} never decreases the objective value of $\prob$. Hence, applying post-processing to an ex ante optimal solution is without loss of optimality and ensures that any sampling method that selects proposals with marginal probabilities $\prob$ satisfies ex post validity. 

Finally, there are many possible methods to implement this sampling (see \cite{tille2023_sampling} for a survey). We use the simple and classical ``systematic sampling''~\cite{madow1949theory} with runtime $O(\nproposal)$. This method requires only one uniform random sample to implement, so the randomness could be easily documented for transparency, which is an important part of the current Swiss NSF procedure~\cite{trentacosti2021lotteries} where a lottery is conducted by publicly drawing papers from a glass jar. We provide pseudo-code in Appendix~\ref{app:systematic_sampling}.

\begin{algorithm}
\caption{Post-Processing of $\prob$ for Ex Post Validity}
\label{alg:tie_breaking}
\begin{algorithmic}[1]
\Statex \textbf{Input:} Vector of marginal probabilities $\prob$, sequence of intervals $\{[\lcb_i, \ucb_i]\}_{i \in [\nproposal]}$
\Statex \textbf{Output:} Vector of marginal probabilities $\prob$
\State Order the intervals by increasing $\ucb$.
\For{$b \in [\nproposal]$}
    \If{$\prob_b = 0$}
        \State continue
    \EndIf
    \For{$a$ from $\nproposal$ to $(b+1)$}
        \If{$\lcb_a > \ucb_b$ and $\prob_a < 1$}
            \State $d \gets \min\{\prob_b, 1 - \prob_a\}$
            \State $\prob_b \gets \prob_b - d$
            \State $\prob_a \gets \prob_a + d$
        \EndIf
    \EndFor
\EndFor
\end{algorithmic}
\end{algorithm}

\subsection{Full Algorithm}
\label{sec:full_algo}

The complete algorithm \ouralgorithm\ solves the ex ante optimization procedure with post-processing for ex post validity (Algorithm~\ref{alg:tie_breaking}) followed by sampling. The algorithm is provably polynomial time if the ex ante optimization problem is solved using the ellipsoid method with a separation oracle as proved in Theorem~\ref{prop:ellipsoid}. In practice, we solve the optimization problem using the cutting plane algorithm given in Algorithm~\ref{alg:cutting_plane}, which yields a practically efficient algorithm, albeit with a non-polynomial time theoretical convergence guarantee, described in Appendix~\ref{app:full_cutting_plane}.

\begin{algorithm}
\caption{\ouralgorithm\ Algorithm}
\label{alg:metric}
\begin{algorithmic}[1]
\Statex \textbf{Input:} Number of proposals to select $\nselected$, set of intervals $I=\{\lcb_i, \ucb_i\}_{i \in [\nproposal]}$
\Statex \textbf{Output:} Selection of $\nselected$ proposals
\State Compute an ex ante optimal vector of marginal probabilities $\prob$ using Algorithm~\ref{alg:cutting_plane}.
\State Apply ex post validity post-processing to $\prob$ (Algorithm~\ref{alg:tie_breaking}).
\State Sample $\nselected$ proposals from $[\nproposal]$ with marginal probabilities of inclusion given by $\prob$ (Algorithm~\ref{alg:systmetaic_sampling}).
\end{algorithmic}
\end{algorithm}

\section{Axiomatic Comparison}
\label{sec:axioms}

In many of the applications we consider, like scientific funding or college admissions, there is no agreed upon ground-truth measurement of the quality of selections made by the decision maker. Hence, it is unclear how to empirically measure whether one algorithm performs better than another algorithm in selecting ``top'' candidates. Therefore, we take inspiration from social choice theory~\cite{brandt2016handbook} and initiate an axiomatic comparison of our \ouralgorithm\ method with the alternative methods discussed in Section~\ref{sec:existing_deployments}---deterministic top-$\nselected$ selection, randomize above threshold, and the Swiss NSF method. We consider a number of natural properties, or ``axioms'', that a selection algorithm should obey and theoretically analyze the behavior of \ouralgorithm\ and alternative algorithms with respect to these axioms.

\subsection{Defining Axioms}

We propose three natural desiderata of algorithms for selecting proposals from quality assessments. We begin by defining a generic ``randomized selection rule'' which we will characterize axiomatically.

\newcommand{\qualityinput}{I}
\newcommand{\proboutput}[2]{\prob(#2, #1)}

\begin{definition}[Selection rule]
    A \emph{selection rule} receives as input a set of $\nproposal$ quality estimates $\qualityinput = \{(\lcb_i, \point_i, \ucb_i)\}_{i \in [\nproposal]}$ where $\lcb_i, \ucb_i$ are lower and upper limits on the quality of the item $i$ and $\point_i \in [\lcb_i, \ucb_i]$  is a point estimate of the quality. Given a budget $\nselected \in \{1,\ldots, \nproposal\}$, the selection rule outputs a subset of $[\nproposal]$ of size $\nselected$. We let $\proboutput{\nselected}{\qualityinput} \in [0,1]^\nproposal$ denote the vector of marginal probabilities of selecting each item:
\end{definition}

This definition captures methods that do not actually use intervals to make decisions like deterministic selection, methods that use only intervals like \ouralgorithm, and methods that use both intervals and point estimates like the Swiss NSF's approach. Next, we define three axioms that a selection rule ought to obey.

First, we propose the axiom of \textbf{``monotonicity in budget''.} This axiom requires that increasing the number of proposals selected, $\nselected$, should not decrease the probability of selection for any proposal. Formally:
\begin{definition}[Monotonicity in budget]
    \label{defn:monotonicity}
    A selection rule respects monotonicity in budget, if for any input $\qualityinput$ and for all budgets $\nselected \in [\nproposal-1]$, it must be that $\proboutput{\nselected+1}{\qualityinput}_i \geq \proboutput{\nselected}{\qualityinput}_i \; \forall i \in [\nproposal]$.
\end{definition}

Second, an important property for a selection rule is ``stability'' --- changing one interval should not lead to a large change in the behavior of the algorithm. We define an instance of instability that is undesirable for an algorithm. We consider two ``extreme'' behaviors for an algorithm. At one extreme, an algorithm behaves deterministically, selecting $\nselected$ proposals with probability $1$, which is the minimum entropy distribution from which to sample. At the other extreme, the algorithm samples uniformly at random among all $\nproposal$ proposals, which is the maximum entropy distribution. We say that an algorithm exhibits \textbf{``maximal instability''}, if changing a single interval by an arbitrarily small amount can change the algorithm's behavior between these two extremes. Formally:

\begin{definition}[Maximum instability]
    \label{defn:instability} A selection rule is \emph{maximally unstable}, if there exist a pair of inputs $I$ and $J$ that differ by an arbitrarily small amount $\epsilon > 0$ in the quality estimate of a single proposal and a budget $\nselected \in \{2,\ldots,\nproposal - 2\}$ such that $\proboutput{\nselected}{I} = \frac{\nselected}{\nproposal} \mathbf{1}_\nproposal$ (the mechanism samples uniformly at random among all proposals on input $I$) whereas $\proboutput{\nselected}{J} \in \{0,1\}^\nproposal$ (the mechanism selects deterministically on input $J$).
\end{definition}

A selection algorithm should \emph{avoid} maximum instability. We note that in the definition of instability, we restrict the budget to lie in $\{2,\ldots,\nselected-2\}$, because stability with respect to changing a single proposal is not meaningful when choosing only one proposal to accept or reject. In particular, the top-$1$ or bottom-$1$ proposal can change entirely based on a change to a single proposal. 

Finally, inspired by the \textbf{``reversal symmetry''} axiom from social choice theory~\cite{saari2012geometry}, we define a natural notion of symmetry that says that when selecting $1$ out of $2$ proposals, if the scale of quality is reversed so that all intervals are flipped, then the selection rule should flip the probabilities of selection for the two proposals:

\begin{definition}[Reversal symmetry]
    For any input $\qualityinput = \{(\lcb_i, \point_i, \ucb_i)\}_{i \in [\nproposal]}$ where $\lcb_i, \point_i, \ucb_i \in [0,1] \forall i \in [\nproposal]$, let $\qualityinput^{(R)}= \{(1-\ucb_i, 1-\point_i, 1-\lcb_i)\}_{i \in [\nproposal]}$ be the horizontally flipped (or reversed) input. Then, a selection rule that selects $\nselected = 1$ out of $\nproposal = 2$ proposals respects reversal symmetry if for any pair of flipped inputs $\qualityinput$ and $\qualityinput^{(R)}$, $\prob(\qualityinput, 1) = (\prob_1, \prob_2)$ and $\prob(\qualityinput^{(R)}, 1) = (\prob_2, \prob_1)$.
\end{definition}

\subsection{Theoretical Analysis}
We now present our theoretical results on which algorithms satisfy the three axioms defined. Deterministic top-$\nselected$ selection meets both criteria, but does not account for the funder's uncertainty (our earlier ex ante requirement). We characterize randomized selection mechanisms with respect to these axioms in Theorem~\ref{thm:axiomatic_analysis}:

\begin{theorem}[Axiomatic analysis]
\label{thm:axiomatic_analysis}
Existing randomized algorithms have the following properties:
\begin{enumerate}[label=(\alph*), leftmargin=*]
    \item Swiss NSF and randomize-above-threshold both violate maximum instability, while \ouralgorithm\ is never maximally unstable.
    \item Swiss NSF, randomize-above-threshold and \ouralgorithm\ all violate monotonicity in budget.
    \item It is not possible to simultaneously satisfy ex ante optimality and monotonicity in budget.
    \item Swiss NSF and randomize-above threshold violate reversal symmetry, while \ouralgorithm\ satisfies reversal symmetry.
\end{enumerate}
\end{theorem}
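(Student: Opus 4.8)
The plan is to prove each of the four parts of Theorem~\ref{thm:axiomatic_analysis} separately, with parts (a) and (d) requiring explicit constructions (for the negative claims about Swiss NSF and randomize-above-threshold) plus structural arguments (for the positive claims about \ouralgorithm), part (b) requiring three small counterexamples, and part (c) being a short impossibility argument.

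For part (a), I would first show that Swiss NSF and randomize-above-threshold can be maximally unstable by constructing a single family of intervals where all $\nproposal$ intervals mutually overlap and all point estimates are equal, so the provisional funding line (resp.\ threshold) sits inside every interval and both mechanisms sample uniformly, yielding $\frac{\nselected}{\nproposal}\mathbf{1}_\nproposal$. Then perturbing one interval's lower/upper endpoint by $\epsilon$ so that a strict domination chain of length $\nproposal$ is created (e.g.\ nudging endpoints to make the intervals pairwise strictly ordered) forces both mechanisms into a fully deterministic top-$\nselected$ selection. I need to check the budget constraint $\nselected \in \{2,\dots,\nproposal-2\}$ is satisfiable, which it is for $\nproposal \geq 4$. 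For the positive claim that \ouralgorithm\ is never maximally unstable, the key observation is that if \ouralgorithm\ selects deterministically on input $J$, then the ex ante optimal value equals $\nselected$, which by the LP~\eqref{LP} happens only if there is a set of $\nselected$ proposals each strictly above at least $\nproposal-\nselected$ others (equivalently, the interval order has a top-$\nselected$ ``ideal'' forced); such a rigid combinatorial structure cannot be destroyed by an $\epsilon$-perturbation of a single endpoint into a configuration where \emph{all} intervals mutually overlap — one small change cannot simultaneously break all the $\binom{\nselected}{1}\cdot(\nproposal-\nselected)$ strict domination relations needed. I would make this precise by arguing about the number of strict domination pairs $\{(i,j): \lcb_i > \ucb_j\}$: going from the uniform regime (zero such pairs, since uniform sampling is optimal only when $\errorpermutations$ is all permutations, i.e.\ all intervals overlap) to the deterministic regime (which needs at least $\nselected(\nproposal-\nselected)$ such pairs) requires changing more than one endpoint when $\nselected(\nproposal-\nselected) \geq 2$.

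For part (d), the positive claim for \ouralgorithm\ follows from the structure of the $\nproposal=2$, $\nselected=1$ problem: the ex ante LP with two overlapping intervals forces $\prob_1 = \prob_2 = 1/2$ by the monotonically-ordered-subset symmetry (when the intervals overlap, $\nabove$ and $\nbelow$ are both zero for both, so $\prob_1 = \prob_2$, and the sum is $1$), and when one interval strictly dominates, both inputs force $(1,0)$ resp.\ $(0,1)$; in every case reversing the intervals swaps the roles of proposals $1$ and $2$ and \ouralgorithm\ is equivariant under this relabeling-plus-reversal because it depends only on the interval order, which is itself reversed. For the negative claim, I would exhibit a two-proposal input where the point estimates break the symmetry: e.g.\ two overlapping intervals with $\point_1 \ne \point_2$, so that Swiss NSF's provisional funding line (the $1$st-highest point estimate) lies strictly above one interval's upper bound but inside the other after reversal — actually more carefully, I would pick intervals where on input $\qualityinput$ the point-estimate ranking makes both mechanisms deterministic in favor of proposal $1$, while on the reversed input $\qualityinput^{(R)}$ the ranking or threshold placement makes them randomize or deterministically pick proposal $1$ again, contradicting the required swap to $(\prob_2,\prob_1)$. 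The cleanest instance: overlapping intervals with $\point$ values such that one mechanism is deterministic on $\qualityinput$ and uniform on $\qualityinput^{(R)}$, which cannot equal a coordinate swap.

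For part (b), I would give three separate small counterexamples showing non-monotonicity in $\nselected$. For \ouralgorithm: with, say, $\nproposal = 3$ intervals arranged so that at $\nselected=1$ the worst-case-optimal marginals put some mass on a ``middle'' interval, but at $\nselected=2$ the forced ex post / ideal structure drops that interval's probability — I need to hunt for the right three-interval configuration (one candidate: intervals where interval $2$ dominates interval $3$ but overlaps interval $1$; at $\nselected=1$ the adversary-optimal $\prob$ may be $(\tfrac12,\tfrac12,0)$, while at $\nselected=2$, since $2 \succ 3$, ex post validity forces us to compare, and the ex ante optimum may be $(1,\tfrac12,\tfrac12)$ — monotone — so I would instead look for a configuration with $\nproposal\ge 4$ where a proposal's probability strictly decreases). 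For Swiss NSF and randomize-above-threshold the counterexamples are easier: moving the funding line / threshold as $\nselected$ increases can pull a previously-accepted dominating proposal into a lottery with many weak proposals, decreasing its selection probability; a concrete $\nproposal=4$ or $5$ example with suitably placed point estimates suffices. Part (c) is then immediate once I have the \ouralgorithm\ counterexample idea generalized: ex ante optimality pins down $\prob$ uniquely enough (on inputs where the maximin LP has a unique optimum) that the same configuration witnessing \ouralgorithm's non-monotonicity shows \emph{any} ex ante optimal rule is non-monotone — I would argue that on the witnessing input the maximin optimal marginal vector is unique at both budgets $\nselected$ and $\nselected+1$ and exhibits the decrease, so no ex ante optimal selection rule can be monotone.

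The main obstacle I anticipate is finding the smallest clean interval configuration that makes \ouralgorithm\ (equivalently, any ex ante optimal rule) violate monotonicity in budget — parts (a), (c), (d) have fairly transparent structure, but (b)/(c) for \ouralgorithm\ require pinning down the maximin optimum on a specific small instance and verifying a strict decrease, which means actually solving the LP~\eqref{LP} by hand on that instance (using the separation-oracle characterization of worst-case permutations) and checking uniqueness. I would allocate most of the effort there, and would try instances with $\nproposal \in \{4,5\}$ and a mix of one domination relation plus several overlaps, since that is the minimal complexity at which the budget can vary in the allowed range while the interval order is nontrivial.
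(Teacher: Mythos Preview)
Your overall plan and the handling of parts (b), (c), and (d) are close to the paper's: it uses a single $n=4$ configuration (interval $1$ overlapping $2$ and strictly above $3,4$; intervals $2,3,4$ mutually overlapping) to witness non-monotonicity for all three mechanisms and to prove (c) by checking the ex ante optimum is unique at $\nselected=1$ and $\nselected=2$; and for (d) it argues exactly as you sketch. Part (a), however, has two genuine gaps.

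\textbf{Swiss NSF / randomize-above-threshold instability.} Your proposed construction does not work. You start from all intervals mutually overlapping (uniform sampling) and then plan to ``perturb one interval's lower/upper endpoint by $\epsilon$ so that a strict domination chain of length $\nproposal$ is created.'' But an $\epsilon$-change to a \emph{single} interval can only create or destroy domination relations involving that one interval; the remaining $n-1$ intervals still pairwise overlap, so no chain of length $n$ appears. The paper's construction instead exploits that Swiss NSF's behavior hinges on the \emph{point estimate} defining the funding line: it sets up intervals $1,\ldots,\nselected$ at $[0,2]$ with point estimate $1$ and intervals $\nselected+1,\ldots,\nproposal$ at $[0,1-\epsilon]$, so the funding line $e_{(\nselected)}=1$ sits strictly above the bottom $\nproposal-\nselected$ upper bounds and Swiss NSF is deterministic. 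Lowering only the point estimate (and upper bound) of interval $\nselected$ by order $\epsilon$ drags the funding line down into every interval, flipping the mechanism to uniform. The lever is the funding-line dependence on one point estimate, not the interval order.

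\textbf{MERIT never maximally unstable.} Your counting argument asserts that if MERIT selects deterministically then the ex ante optimal value equals $\nselected$, whence $\nselected(\nproposal-\nselected)$ domination pairs must exist. The second step is fine, but the first implication is not justified: MERIT's output is the LP optimum \emph{after} the post-processing of Algorithm~\ref{alg:tie_breaking}, and you have not ruled out that post-processing (or non-uniqueness of the LP optimum plus the cutting-plane tie-breaking) produces a $\{0,1\}$ vector even when the optimal value is below $\nselected$. The paper sidesteps this by arguing on the input side: MERIT outputs the uniform vector only if all intervals pairwise overlap (via ex post validity, since $0<\nselected/\nproposal<1$); an $\epsilon$-change to one interval yields a configuration in which at most that one interval fails to overlap the rest; a short case analysis (on whether the outlier has the largest lower bound or the smallest upper bound) then shows MERIT still randomizes nontrivially. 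Your counting route could likely be repaired, but you would have to prove the ``deterministic output $\Rightarrow$ value $=\nselected$'' step, which is more delicate than the paper's direct case split.
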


We give a formal proof of Theorem~\ref{thm:axiomatic_analysis} in Appendix~\ref{ref:axiom_proofs}. Here we provide examples where the three axioms fail . First, Figure~\ref{fig:monotonicity_failure} shows an example where none of the randomized selection mechanisms considered respect monotonicity in budget. In particular, when $\nselected=1$, both Swiss NSF and \ouralgorithm\ sample uniformly at random between $1$ and $2$, so proposal $2$ is selected with probability $\frac{1}{2}$. However, when $\nselected=2$, proposal $1$ is always in the top $2$ and the other top-$2$ proposal could be any one of $2$, $3$, or $4$, so both algorithms choose proposal $1$ deterministically and sample uniformly among $2,3,4$ meaning proposal $2$ is selected with probability $\frac{1}{3}$. Hence, perhaps counterintuitively, proposal $2$ is worse off when the budget $\nselected$ increases from $2$ to $3$. We note that randomize-above-threshold behaves identically to Swiss NSF here, if the threshold is chosen in a data-dependent manner as the $\nselected$-the highest point estimate, so this same example shows that randomize-above-threshold violates monotonicity.

The example in Figure~\ref{fig:monotonicity_failure}, also demonstrates claim (3) of Theorem~\ref{thm:axiomatic_analysis} that it is not possible to simultaneously satisfy ex ante optimality and monotonicity in budget. In particular, the unique ex ante optimal $\prob$ for this example are not monotonic in budget. As we discuss in Section~\ref{sec:enforcing_monotonicity}, it is possible to modify \ouralgorithm\ to enforce monotonicity in the budget at the potential cost of ex ante optimality.

Interestingly, in reproducing the Swiss NSF's selection process using real 2020 grant review data, we find that both the Swiss NSF selection rule and \ouralgorithm\ exhibit non-monotonicity in budget in practice. For example, using the Swiss NSF method, there are proposals whose probability of selection \emph{decreases} from $0.75$ to $0.524$ when increasing budget $\nselected$ from $89$ to $90$. This suggests that violations of the monotonicity in budget axiom can occur in real-world deployments.

Second, in Figure~\ref{fig:max_instability_failure}, we show an example that violates maximum instability for Swiss NSF and (data-dependent) randomize-above threshold. In this example, where $\nselected=3$, slightly decreasing the upper bound and point estimate of interval $3$, changes the algorithm's behavior from choosing the top-$3$ deterministically, to sampling uniformly at random among all proposals. In contrast, \ouralgorithm\ never exhibits maximal instability (for $\nselected \in [2, \nproposal-2]$). Our proof of this is in Theorem~\ref{thm:axiomatic_analysis}, follows by observing that \ouralgorithm\ only samples uniformly at random among all proposals if all intervals overlap, and will not select deterministically for any set of intervals where only a single interval does not overlap the remaining intervals.

\begin{figure}[ht]
    \centering
    \includegraphics[width=0.3\linewidth]{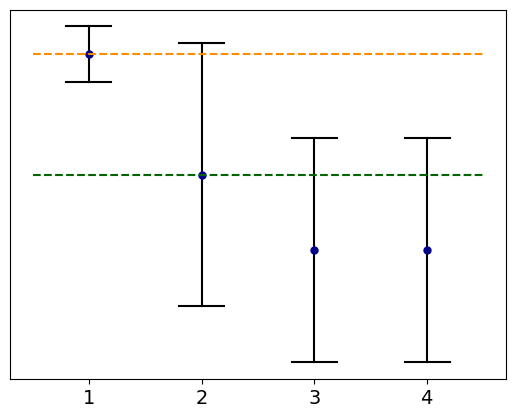}
    \caption{Example that violates \emph{monotonicity} with respect to $\nselected$ for Swiss NSF and our \ouralgorithm\ algorithm. When $\nselected=1$, $\prob_2 = 1/2$ for both algorithms. However, when $\nselected=2$, $\prob_2 = 1/3$ for both algorithms.}
    \label{fig:monotonicity_failure}
\end{figure}

\begin{figure}[ht]
    \centering
    \includegraphics[width=0.7\linewidth]{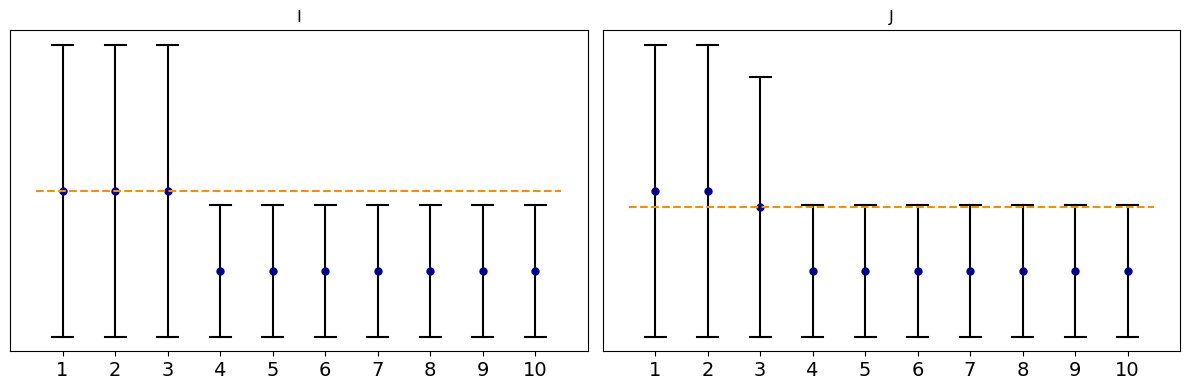}
    \caption{Example that violates \emph{maximum instability} for Swiss NSF and randomize-above-threshold, with $\nselected=3$. Slightly decreasing the upper bound and point estimate of proposal $3$, changes the algorithm's behavior from selecting the top $3$ deterministically (left) to sampling among all $10$ proposals uniformly at random (right).}
    \label{fig:max_instability_failure}
\end{figure}

Finally, an example where reversal symmetry is violated by the Swiss NSF method is when selecting one proposal from two intervals with values $(0,1)$ and $(0.1, 0.2)$ and point estimates at the midpoints of $0.5$ and $0.15$. Then, the Swiss NSF selection rule will accept interval $1$ and reject interval $2$. However, if quality is flipped, then proposal $1$ remains the same, but proposal $2$ has quality interval $(0.8, 0.9)$ with point estimate of $0.85$. Then, for the reversed instance, Swiss NSF samples uniformly at random between the two proposals, violating reversal symmetry. In contrast, \ouralgorithm\ samples uniformly at random between the two for both the original and reversed intervals.

\subsection{Enforcing Monotonicity in Budget}
\label{sec:enforcing_monotonicity}

In this section, we show that it is possible to modify \ouralgorithm\ to enforce monotonicity in budget, by solving the ex ante optimization problem sequentially for a budget ranging from $1$ to the desired $\nselected$ with an additional constraint to the optimization problem that the sequence of $\prob$ should be non-decreasing. At each step, we solve the optimization problem and then apply post-processing for ex post validity using Algorithm~\ref{alg:tie_breaking}. This may come at the cost of ex ante optimality, but always enforces monotonicity in budget and ex post validity as we prove in Proposition~\ref{prop:monotonicity_algo}.

\begin{algorithm}[h]
\caption{Solving Sequence of Optimization Problems for Monotonicity}
\label{alg:monotonicity}
\begin{algorithmic}[1]
\Statex \textbf{Input:} Sequence of intervals $\{[\lcb_i, \ucb_i]\}_{i \in [\nproposal]}$, number selected $\nselected$.
\Statex \textbf{Output:} Sequence of marginal selection probabilities $\prob^{(1)},\ldots,\prob^{(\nselected)}$ for selecting $1$ to $\nselected$ proposals.
\State Let $\prob^{(0)} = (0,\ldots,0)$
\For{$i \in [\nselected]$}
\State Compute ex ante optimal $\prob$ for selecting top-$i$ proposals using Algorithm~\ref{alg:cutting_plane} with the additional constraint $\prob^{(i)}_j \geq \prob^{(i-1)}_j \; \forall j \in [\nproposal]$ to obtain optimal $\prob^{(i)}$. 
\State Post-process $\prob^{(i)}$ with Algorithm~\ref{alg:tie_breaking} for ex post validity.
\EndFor
\State \Return $\prob^{(1)},\ldots,\prob^{(\nselected)}$.
\end{algorithmic}
\end{algorithm}

\begin{proposition}[Monotonicity in budget $\nselected$]
\label{prop:monotonicity_algo}
Algorithm~\ref{alg:monotonicity} guarantees that $\prob^{(i)} \geq \prob^{(i-1)}$ for all $i \in [\nselected]$ and any algorithm that samples $i$ proposals with marginal probabilities $\prob^{(i)}$ respects ex post validity.     
\end{proposition}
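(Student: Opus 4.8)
The plan is to prove both assertions of Proposition~\ref{prop:monotonicity_algo} by induction on $i$, with the ex post validity claim being almost immediate and the monotonicity claim $\prob^{(i)} \ge \prob^{(i-1)}$ carrying the weight of the argument. For ex post validity: each iteration of Algorithm~\ref{alg:monotonicity} ends by applying the post-processing of Algorithm~\ref{alg:tie_breaking} to $\prob^{(i)}$, and the ex post half of Theorem~\ref{thm:ex_post} is a property of that post-processing alone (it does not use ex ante optimality of its input), so after Step~4 every pair $a,b$ with $\lcb_a > \ucb_b$ satisfies $\prob^{(i)}_a = 1$ or $\prob^{(i)}_b = 0$; hence any sampling scheme with marginals $\prob^{(i)}$ respects ex post validity. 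I would dispose of this in one sentence and spend the rest of the proof on monotonicity.

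For monotonicity the base case $i=1$ is trivial since $\prob^{(0)}=0$. For the inductive step I would first record that the optimization in Step~3 is feasible: the only restriction it adds to LP~\eqref{LP} is the coordinatewise lower bound $\prob \ge \prob^{(i-1)}$, and since post-processing via Algorithm~\ref{alg:tie_breaking} only transfers probability mass we have $\|\prob^{(i-1)}\|_1 = i-1$; each coordinate may still be raised to $1$, and the total headroom $\sum_j (1-\prob^{(i-1)}_j) = \nproposal - (i-1) \ge 1$ (using $i \le \nselected \le \nproposal$) suffices to raise coordinates until $\|\prob\|_1 = i$, while the worst-case constraints are satisfied by taking $\objvalue$ small. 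So Step~3 returns some $\prob$ with $\prob \ge \prob^{(i-1)}$ and $\|\prob\|_1 = i$ prior to post-processing.

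The crux — and the step I expect to be the main obstacle — is showing that applying Algorithm~\ref{alg:tie_breaking} to this $\prob$ preserves $\prob \ge \prob^{(i-1)}$, even though the post-processing can decrease coordinates. The key is to exploit ex post validity of $\prob^{(i-1)}$ (the inductive hypothesis). In Algorithm~\ref{alg:tie_breaking} a coordinate $\prob_b$ is decreased only by transferring mass to an index $a$ with $\lcb_a > \ucb_b$, and such an $a$ appears after $b$ in the increasing-$\ucb$ order since $\ucb_a \ge \lcb_a > \ucb_b$. For any such pair, ex post validity of $\prob^{(i-1)}$ forces $\prob^{(i-1)}_b = 0$ or $\prob^{(i-1)}_a = 1$. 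In the first case $\prob_b \ge 0 = \prob^{(i-1)}_b$ is maintained no matter what. In the second case, at the moment the pair $(a,b)$ is examined we have $\prob_a = 1$: $\prob_a$ starts at $\ge \prob^{(i-1)}_a = 1$, is capped at $1$, and up to that point can only have increased — because $\prob_a$ is decreased only in the later outer-loop iteration indexed by $a$ itself — so the transfer amount $d = \min\{\prob_b,\, 1-\prob_a\} = 0$ and $\prob_b$ is untouched. Combining the cases, no coordinate $b$ with $\prob^{(i-1)}_b > 0$ is ever decreased, recipient coordinates only increase, and every other coordinate is unchanged, so the post-processed vector $\prob^{(i)}$ satisfies $\prob^{(i)} \ge \prob^{(i-1)}$, closing the induction. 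The only delicate bookkeeping is the claim that $\prob_a$ has not yet been decreased when $(a,b)$ is processed, which follows from the fixed outer-loop order of Algorithm~\ref{alg:tie_breaking}: coordinate $a$ can act as a ``donor'' only during outer iteration $a$, which comes strictly after outer iteration $b$.
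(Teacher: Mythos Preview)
Your proposal is correct and follows essentially the same approach as the paper's proof: both use the ex post validity of $\prob^{(i-1)}$ (from the previous round of post-processing) to argue that Algorithm~\ref{alg:tie_breaking} cannot decrease any coordinate $b$ with $\prob^{(i-1)}_b>0$, since every $a$ strictly above such $b$ already has $\prob_a=1$ before post-processing and hence absorbs no mass. Your write-up is in fact more careful than the paper's terse version --- you explicitly verify feasibility of the constrained LP in Step~3 and spell out the outer-loop ordering argument that guarantees $\prob_a$ has not yet been decremented when the pair $(a,b)$ is examined, both of which the paper leaves implicit.
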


\section{Experiments}
\label{sec:empirical_results}

In order to compare the performance of different selection algorithms, we use review data where we assume known proposal quality (or intervals containing the true quality). The funder observes error-prone estimates of quality---both interval estimates and point estimates---and then applies a selection rule. We present head-to-head comparisons of different selection algorithms under both the linear miscalibration model used by the Swiss NSF and other prior works, and under our worst-case model in Section~\ref{sec:empirical_comparison}. We then describe computational runtime experiments demonstrating the efficiency of our algorithm in \ouralgorithm\ in Section~\ref{sec:computation_experiments}. Finally, we present qualitative insights into how different methods may lead to different outcomes in real peer review settings in Section~\ref{sec:case_studies}.

\subsection{Comparison of Methods in Expected Utility and Worst-Case Utility}
\label{sec:empirical_comparison}

We evaluate the performance of selection algorithms from two perspectives: (1) expected utility in a probabilistic model of reviewer behavior, and (2) worst-case utility over feasible rankings defined by score uncertainty intervals. While our algorithm, \ouralgorithm, is designed for worst-case optimality, it is informative to assess its expected performance under plausible generative models of the review process. Defining a realistic probability distribution over reviewer scores is challenging, but one model that has been widely adopted in the literature is a linear miscalibration model, which underlies the interval-generation method of the Swiss National Science Foundation (Swiss NSF)~\cite{heyard2022rethinking} and has been used in many other peer review settings~\cite{flach2010kdd,baba2013quality,roos2011calibrate,roos2012statistical,tan2021least}.

\subsubsection*{Expected Utility under a Linear Miscalibration Model}

\newcommand{\sdquality}{\sigma_\theta}
\newcommand{\sdmiscal}{\sigma_b}
\newcommand{\sderror}{\sigma_\epsilon}

A popular probabilistic model used by the Swiss NSF and many others captures error in the review process due to \emph{reviewer miscalibration}, the tendency for some reviewers to be stricter or more lenient than others. The model is linear, where each proposal has a numeric true quality and each reviewer has a miscalibration offset, both drawn from Gaussian distributions. The review score for a reviewer on a specific paper is a linear combination of true quality and reviewer offset plus Gaussian noise. This model has been applied widely to peer review data in prior work~\cite{flach2010kdd,baba2013quality,roos2011calibrate,roos2012statistical}. The model is specified by an assignment of reviewers to proposals and three parameters for the standard deviation of the quality ($\sdquality$), reviewer miscalibration ($\sdmiscal$), and per-review error ($\sderror$).

We generate synthetic data from this model in two regimes based on real-world settings: (1) Swiss NSF's grant review process with 350 proposals, 10 reviewers, and 80 proposals per reviewer and (2) Computer Science conference peer review with 1000 reviewers, 1000 papers, and 5 papers per reviewer. In both cases, we assign papers to reviewers uniformly at random and assume a score range of $1$ to $10$ with $\sdquality = 2$, $\sdmiscal=1$ and $\sderror=0.5$, matching parameters used in prior work~\cite{tan2021least}. In both settings, the funder aims to select the top one-third of proposals, reflecting realistic acceptance rates. 

We generate 50 synthetic samples of review datasets. Then, following the methodology of the Swiss NSF, estimate model parameters and posterior expected rank of each proposal along with 50\% confidence intervals for the expected rank given the observed review data. These point estimates and intervals are given as input to each selection rule. We test both a deterministic rule that selects the top-$\nselected$ by mean score, denoted \emph{Deterministic (mean)}, and a deterministic rule that selects the top-$\nselected$ by estimated expected rank based on the model, denoted \emph{Deterministic (model)}.

\subsubsection*{Our Worst-Case Objective}

Our worst-case objective \eqref{ref:ex_ante_objective}, assumes that any ordering consistent with quality intervals could be the true ranking of proposals. The utility is then defined as the worst-case expected fraction of top-$\nselected$ proposals chosen over all possible orderings consistent with the intervals. We simulate generating such quality intervals in three scientific peer review scenarios and measure the performance of different selection methods with respect to our worst-case objective. First, we replicate the scientific grant funding process of the Swiss NSF. We use publicly released review data from their 2020 grant review process consisting of $\nproposal=353$ proposals~\cite{heyard2022rethinking}. We generate intervals using their linear model of reviewer miscalibration. Additionally, we generate imputation-based intervals for the Swiss NSF data using Manski bounds to impute missing reviewer-proposal scores and aggregating scores across reviewers using the median. Second, we use paper review data from the NeurIPS 2024 conference, available on OpenReview, which includes $\nproposal=4035$ accepted papers. We simulate a process of allocating long talks (orals and spotlights) among accepted papers, inspired by the USENIX Security 2025~\cite{usenix2025policy}. Third, we use paper review data from the ICLR 2025 conference, available on OpenReview, which includes all $\nproposal=11520$ papers submitted to the conference. We simulate allocating paper acceptances at this conference. For NeurIPS and ICLR, there is no standard method to generate intervals. We therefore implement three different methods for generating intervals on NeurIPS and ICLR data:
\begin{enumerate}[label=(\arabic*), leftmargin=*]
    \item \emph{Leave-one-out intervals (LOO)}: taking inspiration from ``jacknife'' or leave-one-out intervals~\cite{barber2021predictive}, we compute the range of possible mean review scores, leaving out one reviewer at a time for each paper.
    \item \emph{Gaussian error model credibility intervals}: similar to the Swiss NSF's intervals, we assume a Gaussian model generates review scores based on underlying true quality scores and infer credibility intervals for the true quality. Specifically, for paper $i$, we assume paper has true quality $\theta_i \sim \mathcal{N}(0, 2)$, precision $\tau_i \sim \text{Gamma}(1,1)$, and review scores on the paper are drawn i.i.d. from $\mathcal{N}(\theta_i, 1/\sqrt{\tau_i})$. The parameters of the priors are chosen to closely match those of the Swiss NSF model. We infer $50\%$ credibility intervals for $\theta_i$ given the observed review scores using MCMC.
    \item \emph{Subjectivity intervals}: NeurIPS 2024 and ICLR 2025 both asked reviewers to provide numerical scores of papers' soundness, presentation, and contribution in addition to overall scores. Previous works have observed that different reviewers may have different subjective views of which criteria matter to a paper's quality introducing arbitrariness into overall review scores~\cite{lee2015commensuration}. One proposed approach to mitigate this subjectivity is to learn a mapping from sub-criteria to an overall score based on peer review data~\cite{noothigattu2021loss} and then use this mapping to adjust the review score. We generate intervals by applying this method to adjust scores and taking the interval to be all values in between original scores and subjectivity adjusted scores.
\end{enumerate}

We note that there are no confidence intervals for estimates generated in the worst-case setting as these are generated for a single dataset with a single set of intervals, where there is no randomness in the data generation process.

\subsubsection*{Results}

\begin{figure}[htb]
  \centering

  \begin{tikzpicture}
    \begin{axis}[
        hide axis,
        xmin=0, xmax=1,
        ymin=0, ymax=1,
        width=0pt,
        height=0pt,
        scale only axis,
        legend style={
            at={(0.5,0.5)},
            anchor=center,
            legend columns=4,
            legend cell align=left,
            draw=none,
            fill=none,
            font=\small,
            column sep=1em
        },
        area legend
    ]
    \addplot+[ybar, draw=black, fill=red!60!black!30, postaction={pattern=crosshatch, pattern color=black}] coordinates {(-1,-1)};
    \addplot+[ybar, draw=black, fill=orange!60, postaction={pattern=horizontal lines, pattern color=black}] coordinates {(-1,-1)};
    \addplot+[ybar, draw=black, fill=green!50!black!30, postaction={pattern=grid, pattern color=black}] coordinates {(-1,-1)};
    \addplot+[ybar, draw=black, fill=blue!40, postaction={pattern=north east lines, pattern color=black}] coordinates {(-1,-1)};
    \legend{Deterministic (mean), Deterministic (model), Swiss NSF, MERIT}
    \end{axis}
  \end{tikzpicture}
  
  \vspace{0.5em} 
  
  \begin{subfigure}[b]{0.42\textwidth}
    \centering
    \begin{tikzpicture}
    \begin{axis}[
        ybar,
        bar width=0.25cm,
        width=5.2cm,
        height=6cm,
        enlarge x limits=0.5,
        ymin=0.8, ymax=0.95,
        ylabel={Expected Utility},
        ylabel style={font=\small},
        symbolic x coords={Swiss NSF, Conference},
        xtick=data,
        xtick style={draw=none},
        x tick label style={rotate=45, anchor=east, font=\small},
        ytick style={draw=none},
        tick label style={font=\small}
    ]
    \addplot+[
        ybar,
        draw=black,
        fill=red!60!black!30,
        postaction={pattern=crosshatch, pattern color=black},
        error bars/.cd,
        y dir=both,
        y explicit
    ] coordinates {
        (Swiss NSF, 0.844) +- (0.014955, 0.014955)
        (Conference, 0.889) +- (0.014955, 0.014955)
    };
    \addplot+[
        ybar,
        draw=black,
        fill=orange!60,
        postaction={pattern=horizontal lines, pattern color=black},
        error bars/.cd,
        y dir=both,
        y explicit,
        error bar style={black}
    ] coordinates {
        (Swiss NSF, 0.919) +- (0.004479, 0.004479)
        (Conference, 0.934) +- (0.004479, 0.004479)
    };
    \addplot+[
        ybar,
        draw=black,
        fill=green!50!black!30,
        postaction={pattern=grid, pattern color=black},
        error bars/.cd,
        y dir=both,
        y explicit
    ] coordinates {
        (Swiss NSF, 0.919) +- (0.004427, 0.004427)
        (Conference, 0.933) +- (0.004427, 0.004427)
    };
    \addplot+[
        ybar,
        draw=black,
        fill=blue!40,
        postaction={pattern=north east lines, pattern color=black},
        error bars/.cd,
        y dir=both,
        y explicit
    ] coordinates {
        (Swiss NSF, 0.918) +- (0.004438, 0.004438)
        (Conference, 0.934) +- (0.004438, 0.004438)
    };
    \end{axis}
    \end{tikzpicture}
    \caption{Linear miscalibration model.}
    \label{fig:expected_utility_miscalibration}
  \end{subfigure}
  \hfill
  \begin{subfigure}[b]{0.56\textwidth}
    \centering
    \begin{tikzpicture}
    \begin{axis}[
        ybar,
        bar width=0.25cm,
        width=10.64cm,
        height=6cm,
        enlarge x limits=0.15,
        ylabel={Worst-case Utility},
        ylabel style={font=\small},
        symbolic x coords={
            Swiss NSF,
            NeurIPS LOO, NeurIPS Gauss, NeurIPS Subj,
            ICLR LOO, ICLR Gauss, ICLR Subj
        },
        xtick=data,
        xtick style={draw=none},
        x tick label style={rotate=45, anchor=east, font=\footnotesize},
        ytick style={draw=none},
        tick label style={font=\small},
        ymin=0, ymax=1
    ]
    \addplot+[
        ybar,
        draw=black,
        fill=red!60!black!30,
        postaction={pattern=crosshatch, pattern color=black}
    ] coordinates {
        (Swiss NSF, 0.934)
        (NeurIPS LOO, 0.054)
        (NeurIPS Gauss, 0.199)
        (NeurIPS Subj, 0.377)
        (ICLR LOO, 0.647)
        (ICLR Gauss, 0.579)
        (ICLR Subj, 0.650)
    };
    \addplot+[
        ybar,
        draw=black,
        fill=green!50!black!30,
        postaction={pattern=grid, pattern color=black}
    ] coordinates {
        (Swiss NSF, 0.939)
        (NeurIPS LOO, 0.371)
        (NeurIPS Gauss, 0.231)
        (NeurIPS Subj, 0.347)
        (ICLR LOO, 0.637)
        (ICLR Gauss, 0.574)
        (ICLR Subj, 0.680)
    };
    \addplot+[
        ybar,
        draw=black,
        fill=blue!40,
        postaction={pattern=north east lines, pattern color=black}
    ] coordinates {
        (Swiss NSF, 0.943)
        (NeurIPS LOO, 0.395)
        (NeurIPS Gauss, 0.367)
        (NeurIPS Subj, 0.540)
        (ICLR LOO, 0.690)
        (ICLR Gauss, 0.684)
        (ICLR Subj, 0.749)
    };
    
    
    \draw[red, thick] (axis cs:NeurIPS Gauss, 0.231) -- (axis cs:NeurIPS Gauss, 0.367);
    \draw[red] ([xshift=-2pt] axis cs:NeurIPS Gauss, 0.231) -- ([xshift=2pt] axis cs:NeurIPS Gauss, 0.231);
    \draw[red] ([xshift=-2pt] axis cs:NeurIPS Gauss, 0.367) -- ([xshift=2pt] axis cs:NeurIPS Gauss, 0.367);
    \node[anchor=south, font=\scriptsize, fill=white, inner sep=1pt] at (axis cs:NeurIPS Gauss, 0.390) {+0.136};
    
    \draw[red, thick] (axis cs:NeurIPS Subj, 0.347) -- (axis cs:NeurIPS Subj, 0.540);
    \draw[red] ([xshift=-2pt] axis cs:NeurIPS Subj, 0.347) -- ([xshift=2pt] axis cs:NeurIPS Subj, 0.347);
    \draw[red] ([xshift=-2pt] axis cs:NeurIPS Subj, 0.540) -- ([xshift=2pt] axis cs:NeurIPS Subj, 0.540);
    \node[anchor=south, font=\scriptsize, fill=white, inner sep=1pt] at (axis cs:NeurIPS Subj, 0.565) {+0.193};
    
    \draw[red, thick] (axis cs:ICLR Gauss, 0.574) -- (axis cs:ICLR Gauss, 0.684);
    \draw[red] ([xshift=-2pt] axis cs:ICLR Gauss, 0.574) -- ([xshift=2pt] axis cs:ICLR Gauss, 0.574);
    \draw[red] ([xshift=-2pt] axis cs:ICLR Gauss, 0.684) -- ([xshift=2pt] axis cs:ICLR Gauss, 0.684);
    \node[anchor=south, font=\scriptsize, fill=white, inner sep=1pt] at (axis cs:ICLR Gauss, 0.710) {+0.110};
    
    \end{axis}
    \end{tikzpicture}
    \caption{Worst-case over interval ordering model (our model).}
    \label{fig:worst_case_utility}
  \end{subfigure}
  \caption{Proportion of top-$\nselected$ proposals selected by different methods with quality data generated under the Swiss NSF model of linear miscalibration and under our model of worst-case over feasible rankings. MERIT matches performance of algorithms designed for the Swiss NSF's linear model, with expected utility averaged over 50 samples of synthetic data and error bars showing 95\% CI for the sample mean. The gap between MERIT and other methods in the worst-case over intervals defined by our model can be substantial, as shown by the gaps in NeurIPS Gaussian, NeurIPS Subjectivity, and ICLR Subjectivity.}
  \label{fig:comparison_both_models}
\end{figure}
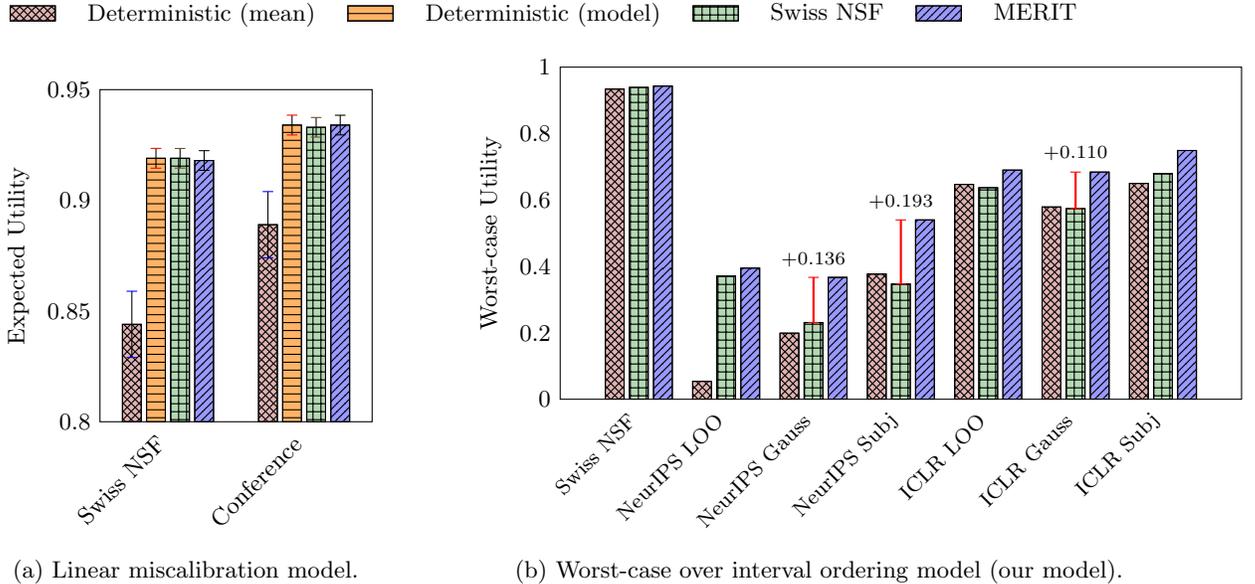

In Figure~\ref{fig:comparison_both_models} we show results of our simulations. Under the Swiss NSF's fully Bayesian linear model, \ouralgorithm\ performs similarly in expectation to the Swiss NSF's randomized method and to deterministic selection using model-adjusted scores. Using the raw mean scores yields much lower performance, as this method does not account for the miscalibration error. In contrast, in the worst-case setting of our model, \ouralgorithm\ outperforms other methods on all baselines. This is expected as \ouralgorithm\ optimizes for this objective. Interestingly, our experiments, demonstrate that the optimality gap can be quite large, with \ouralgorithm\ achieving utility of up to 0.19 more than Swiss NSF, the other randomized algorithm considered. Deterministic selection sometimes performs quite poorly, for instance on NeurIPS LOO and NuerIPS Gaussian, when intervals are relatively wide. Even though Swiss NSF's selection rule randomizes based on the intervals, the optimality gap between Swiss NSF and our ex ante optimal \ouralgorithm\ algorithm can also be large, for example on NeurIPS Gaussian and NeurIPS Subjectivity data, where Swiss NSF does not perform much better than deterministic selection, while \ouralgorithm\ has higher utility. Taken together, these results suggest that \ouralgorithm\ does not degrade expected utility in the fully Bayesian setting, while offering additional robustness with respect to our worst-case objective. 

We additionally conduct ablations on parameters of both models that provide insight into the relative performance of different methods under varying model settings. First, in Figure~\ref{fig:linear_miscal}, we show utility of different methods in the Swiss NSF's linear miscalibration model varying the degree of miscalibration. As the magnitude of miscalibration increases, deterministic selection using mean score degrades greatly in performance it does not account for error due to miscalibration at all. Meanwhile, the other methods perform similarly and maintain fairly high expected utility, even with a large degree of miscalibration. We observe similar results in additional ablations of the miscalibration model and when selecting the top one-tenth of proposals instead of top one-third, as shown in Appendix~\ref{app:additional_data}. 

In Figure~\ref{fig:manski}, we test Manski bounds on the Swiss NSF dataset of grant reviews, where we impute missing values with the full range of scores. We artificially drop reviews to increase the sparsity of review scores, leading to worse utility. With 40\% sparsity, deterministic achieves near zero utility as almost all intervals overlap. At all sparsity levels, \ouralgorithm\ outperforms both Swiss NSF and Deterministic selection. In fact, at sparsity of 0.15 to 0.25, the optimality gap between Swiss NSF and \ouralgorithm\ is the same as that of deterministic selection and \ouralgorithm.

\begin{figure}[tb]
\centering
\begin{minipage}[t]{0.48\textwidth}
    \centering
    \vspace{0pt} 
    \begin{subfigure}[t]{\textwidth}
        \centering
        \includegraphics[width=0.95\linewidth]{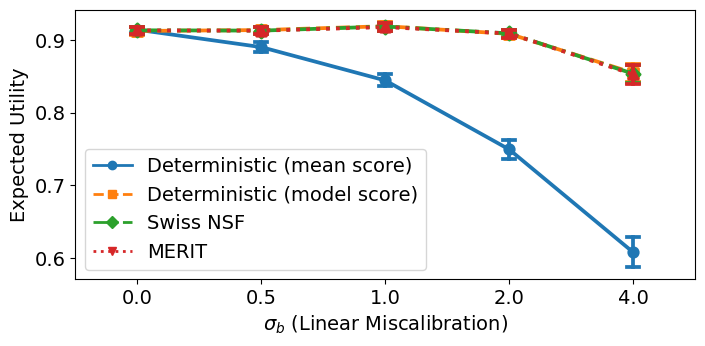}
        \caption{Swiss NSF simulated data.}
        \label{fig:linear_miscal_swissnsf}
    \end{subfigure}    
    \begin{subfigure}[t]{\textwidth}
        \centering
        \includegraphics[width=0.95\linewidth]{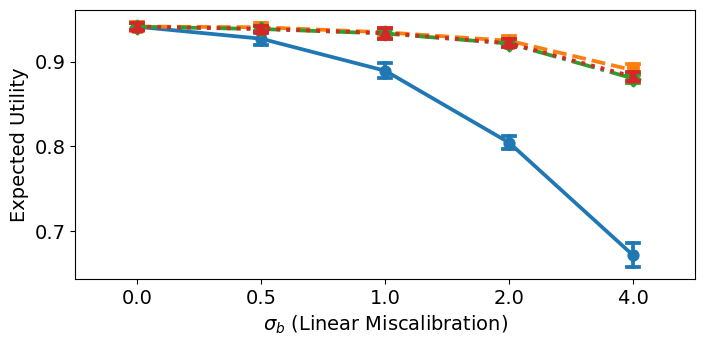}
        \caption{Conference simulated data.}
        \label{fig:linear_miscal_conference}
    \end{subfigure}
    
    \caption{Ablation study comparing methods under the Swiss NSF's model of linear miscalibration with varying levels of miscalibration. Error bars show bootstrapped 95\% CIs for the sample mean over 50 samples of randomly generated data from the model.}
    \label{fig:linear_miscal}
\end{minipage}
\hfill
\begin{minipage}[t]{0.48\textwidth}
    \centering
    \vspace{0pt} 
    \includegraphics[width=0.95\linewidth]{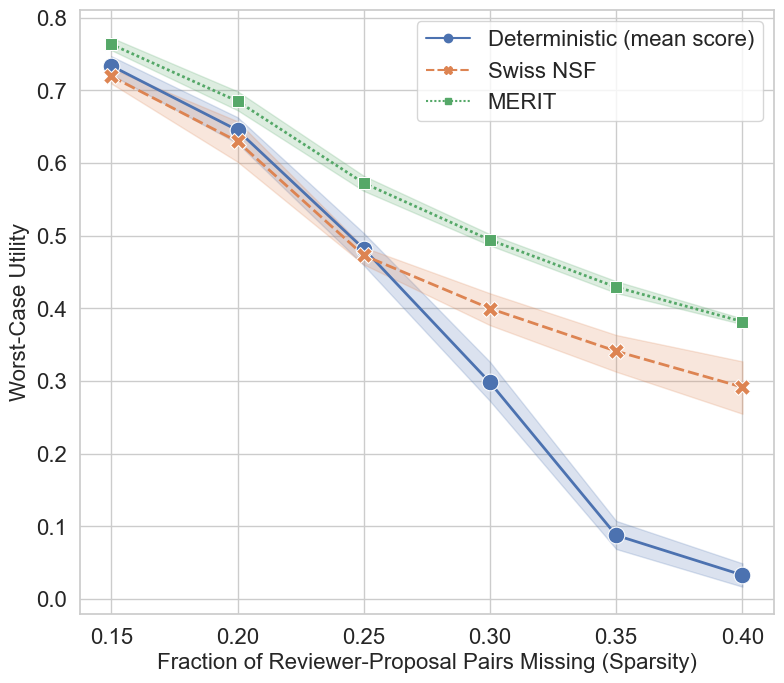}
    
    \caption{Worst-case utility over Manski bound intervals as a function of the fraction of reviewer-proposal pairs missing, with random dropping of review scores to increase sparsity for the Swiss NSF panel review dataset. Error bars show bootstrapped 95\% CIs for the sample mean over 50 trials of randomly dropping review scores.}
    \label{fig:manski}
\end{minipage}
\end{figure}

\subsection{Computational Efficiency}
\label{sec:computation_experiments}

All experiments are run locally on a standard personal laptop (a 2019 MacBook Pro with a 2.6 GHz 6-core Intel Core i7 processor) without any specialized hardware. We solve linear programs using Gurobi~12.0.1~\cite{gurobi}, which solves LPs using the simplex method. We find that for all methods \ouralgorithm\ runs in under five minutes, even with over 11,000 proposals. We provide further analysis on runtime, size of the LP solved, and convergence rate in Appendix~\ref{app:additional_data_computation} including details on variations in runtime as a function of different datasets and interval generation methods. 

\subsection{Comparison of Outcomes in Case Studies}
\label{sec:case_studies}

In order to give insight into how differing approaches may lead to different types of lotteries in actual peer review settings, we present qualitative differences between \ouralgorithm\ and the Swiss NSF approach.

In Table~\ref{tab:probability_results}, we provide a high-level comparison of the marginal probabilities of the sampling proposals given under \ouralgorithm\ and the Swiss NSF algorithm. One simple comparison point is on the number of proposals that entered into a lottery ($\%$ Random) under each algorithm. We find that \ouralgorithm\ tends to randomize over an equal or greater number of candidates than the Swiss NSF algorithm---on Swiss NSF data, NeurIPS LOO, NuerIPS Gaussian, and ICLR Subjectivity the two algorithms randomize over a similar number of candidates, while on NeurIPS Subjectivity, ICLR LOO, and ICLR Gaussian, \ouralgorithm\ randomizes over more candidates than Swiss NSF. The biggest difference between the two approaches with respect to the outcome, is that the Swiss NSF rule assigns the same probability to every proposal that enters its lottery tier, whereas \ouralgorithm\ allows a range of probabilities: the broadest span is on NeurIPS Gaussian where marginal probabilities range from 0.16 to 0.83. This reflects the additional flexibility of \ouralgorithm\'s lottery. Finally, the relative weight \ouralgorithm\ puts in the ``certain accept'' group compared to Swiss NSF ($\prob=1$) is not uniform---it is higher than Swiss NSF for NeurIPS Subjectivity, essentially tied on the original Swiss NSF set, and lower for the two ICLR variants, so the split between guaranteed and lottery funding depends on the structure of the intervals in each study rather than following a single trend.

\begin{table}[ht]
  \centering
  \begin{tabular}{l|cccccc}
    \toprule
     & \multicolumn{3}{c}{\textbf{\ouralgorithm}} & \multicolumn{3}{c}{\textbf{Swiss NSF Algorithm}} \\
    \cmidrule(lr){2-4}\cmidrule(lr){5-7}
     Dataset & \% Accept & \% Random & Range of $\prob$ & \% Accept & \% Random & Value of $\prob$ \\
     \midrule
    Swiss NSF & 28.3 & 3.4 & (0.5, 0.5) & 28.0 & 3.7 & 0.54 \\
    Swiss NSF (Manski Bounds) & 5.4 & 26.9 & (0.75, 0.94) & 5.4 & 26.9 & 0.92 \\
    \midrule
    NeurIPS LOO & 3.4 & 16.5 & (0.36, 0.94) & 3.4 & 17.9 & 0.34 \\
    NeurIPS Gaussian & 2.2 & 25.7 & (0.16, 0.83) & 4.0 & 27.5 & 0.20 \\
    NeurIPS Subjectivity & 4.5 & 18.7 & (0.14, 0.45) & 1.6 & 10.4 & 0.78 \\
    \midrule
    ICLR LOO & 11.1 & 32.4 & (0.51, 0.88) & 22.0 & 22.8 & 0.44 \\
    ICLR Gaussian & 9.5 & 34.4 & (0.45, 0.87) & 21.2 & 27.7 & 0.4 \\
    ICLR Subjectivity & 17.6 & 24.3 & (0.50, 0.88) & 16.2 & 25.2 & 0.63 \\
    \bottomrule
      \end{tabular}
    \caption{Comparison of marginal probabilities of acceptance by \ouralgorithm\ and Swiss NSF on each dataset. ``Accept'' = guaranteed to be selected ($\prob = 1$), while ``Random'' = entered into lottery ($0 < \prob < 1$).}
      \label{tab:probability_results}
\end{table}

\section{Related Work}
\label{sec:related_work}

We now discuss other related works in scientific peer review, improving evaluation processes,  and general work on robust optimization.

\paragraph{Designing Better Evaluation Processes}
Many works diagnose flaws in peer review---such as miscalibration and arbitrariness in opinions of reviewers---and propose improvements~\cite{shah2022surveyextended}. Many models assume linear miscalibration~\cite{flach2010kdd,baba2013quality,roos2011calibrate,roos2012statistical,schichl2019consensual,cortes2021inconsistency}, including those used in the Swiss NSF~\cite{heyard2022rethinking}, but these often perform poorly in practice~\cite{langford2012icml}, likely due to the complexity of real-world miscalibration~\cite{brenner2005modeling}. A recent approach~\cite{wang2018your} addresses arbitrary miscalibration and shows that randomized estimators can improve ranking accuracy. Similarly to this, we use randomness to achieve robustness,but we model uncertainty more generally, via intervals, and focus on how to make decisions from these intervals. Related work in hiring proposes algorithmic solutions to prejudice and to uncertainty about unknown options~\cite{kleinberg2018selection,li2020hiring}. Both approaches make stronger assumptions on a known model of errors in hiring process than our work.

\paragraph{Selection Under Uncertainty and Robust Optimization}

Our method falls under the umbrella of distributionally robust optimization (DRO)~\cite{Rahimian_2022}. Our main contribution is a formulation specific to scenarios where decision makers want to select $\nselected$ high quality items under ambiguity sets defined by intervals. We design a novel, efficient optimization algorithm for this problem. Though similar in spirit to robust portfolio optimization~\cite{xidonas2020robust}, our assumptions differ: investors typically have probabilistic models and seek risk-averse diversification, while our setting assumes no reliable estimates of means or correlations. Recent work on conformal prediction for selection~\cite{jincandes2023} focuses on bounding false positives above a fixed threshold, whereas we aim to select the top-$\nselected$ items without assuming a fixed quality threshold or a means of estimating probabilities.

\paragraph{Randomized Social Choice}

Our work connects to the literature on randomized social choice, which examines how incorporating randomness into voting rules can enhance desired properties like fairness, consistency, and strategyproofness~\cite{brandt2016handbook,brandl2016consistent}. Unlike our setting, most social choice models do not assume an underlying ground-truth quality of items. However, one related line of work in social choice theory interprets voters’ rankings as noisy reflections of latent utilities and seeks aggregation rules that perform well under this uncertainty. In particular, ``distortion'' measures the worst-case loss in aggregate utility between the chosen alternative and the optimal one. Randomized mechanisms are known to achieve strictly better worst-case distortion guarantees than any deterministic rule~\cite{anshelevich2021distortion}. Our approach is conceptually similar: we consider all rankings consistent with uncertain intervals of quality and design a randomized mechanism that maximizes the worst-case expected utility of the selected top-$\nselected$ items.

\section{Discussion}
\label{sec:discussion}

We introduce \ouralgorithm, a computationally efficient and principled lottery for top-$\nselected$ selection under uncertainty.  By relying solely on interval information, rather than a fully specified generative model, \ouralgorithm\ respects the limits faced by real-world funders while providing a justification for use as the solution to a robust optimization problem. Our case studies show that \ouralgorithm\ scales well to tens of thousands of candidates. 

An additional benefit of \ouralgorithm\ is that it can easily be adapted to handle constraints on the form of the lottery. For example, in some cases, funders may prefer to implement a \emph{uniform lottery}, where all candidates subject to randomization are selected with equal probabilities---this simple form of lottery may be more acceptable and explainable to a broad audience. The optimization problem used in \ouralgorithm\ can be modified to constrain the lottery to uniform sampling, thereby implementing the ex ante optimal uniform lottery that respects ex post validity constraints. We discuss this extension in Appendix~\ref{app:uniform_lottery} and show through simulations that it performs similarly to other methods in terms of expected utility, while providing significant worst-case utility gain under our model.

\paragraph{Limitations.}  
We now describe a number of limitations to our approach:
\begin{itemize}
    \item \ouralgorithm\ assumes the decision maker exploits only the \emph{ordering} of intervals.  This is appropriate when credible probabilistic models are unavailable, but may be sub-optimal in domains where well-calibrated predictive models exist. If a decision-maker has access to a trustworthy probabilistic model, they may choose deterministically based off of point estimates of their model, whereas our approach takes a worst-case applicant.
    \item \ouralgorithm\ currently assumes that all proposals request the same budget. In practice, some funding scenarios allow applicants to request variable amounts of funds. This requires the funder to account both for the relative quality of proposals and for the variable amount of the total budget used up by each proposal. We propose a heuristic solution for this in future work. 
    \item Decision makers may possess additional information about proposal quality that is easily captured in our framework. For example, a funder may have additional information about the relative quality of a pair of proposals with overlapping intervals, if the same reviewer evaluated both and concluded one proposal was higher quality than the other. 
    \item \ouralgorithm\ may be less interpretable than Swiss NSF and randomize-above-threshold, which both set a single threshold for randomization.

\end{itemize}

\paragraph{Future work.}  
Extending \ouralgorithm\ to more general settings and understanding additional reasons to randomize raise several interesting research questions:

\begin{itemize}[leftmargin=*]

    \item \emph{Variable-cost candidates.}  
    Grants often request \emph{variable} amounts of funding subject that a funder wishes to allocate subject to an overall budget constraint. A natural extension of the \ouralgorithm\ approach is to allow partial funding of a candidates' funding request. Then, when a candidate requests (integral) $c$ resources, the algorithm inputs the candidate into the algorithm as $c$ identical copies of their interval. This approach raises incentive issues around truthful cost reporting.  Further, if candidates' funding requests cannot be partially funded, then it is unclear how to extend \ouralgorithm.

    \item \emph{Additional information on relative order of proposals.} Decision-makers may obtain additional information that allow them to compare some proposals beyond the interval order. Solving the ex ante optimization problem for a general partial order is NP hard, so it is worth considering the performance of heuristics for solving this problem when the decision-maker starts with an interval order and adds a small number of additional comparisons. They could, for example, impose monotonicity constraints on $\prob$ for pairs of proposals for which they know additional ordering information.

    \item \emph{Richer utility functions.}  
    Our problem formulation assumes a 0-1 utility when selecting the top-$\nselected$ candidates out of a ranking of candidates. A decision-maker may have a richer utility function (for example, using Borda count or any positional scoring rule for the selected ranks of each proposal.)  Understanding for which utility functions a \ouralgorithm\-style maximin solution remains tractable is an open theoretical and algorithmic challenge.  

    \item \emph{Cost–quality trade-offs.}  
    A common argument for lotteries is reviewer-time savings. An analysis of efficient trade-offs between reviewer resources and decision quality may yield new insights around when and how to randomize.

    \item \emph{Equilibrium and behavioral effects.}  
    Randomization may alter incentives for applicants and reviewers. Such equilibrium and behavioral effects should be accounted for in a complete analysis of the usefulness of randomization. For example, applicants may put more effort into their applications if they are risk-averse, or may be more willing to submit riskier proposals if they know there is a chance of acceptance due to lottery. On the other hand, reviewers may engage in new forms of strategic behavior in order to alter the outcomes of the lottery. 
\end{itemize}

Progress on these questions can help inform both theoretical understanding and practical implementation of randomized selection mechanisms.

\section*{Acknowledgments}
This work was supported in parts by NSF 1942124 and ONR N000142212181, as well as the Gates Foundation.

\bibliographystyle{alpha}
\bibliography{bibtex}

\newcommand{\etalchar}[1]{$^{#1}$}
\begin{thebibliography}{AMPCV17}

\bibitem[ABMR23]{adhikary2023complexity}
Ranendu Adhikary, Kaustav Bose, Satwik Mukherjee, and Bodhayan Roy.
\newblock Complexity of maximum cut on interval graphs.
\newblock {\em Discrete \& Computational Geometry}, 70(2):307--322, 2023.

\bibitem[Ada19]{adam2019science}
David Adam.
\newblock Science funders gamble on grant lotteries.
\newblock {\em Nature}, 575(7784):574--575, 2019.

\bibitem[AFRSV21]{anshelevich2021distortion}
Elliot Anshelevich, Aris Filos-Ratsikas, Nisarg Shah, and Alexandros~A Voudouris.
\newblock Distortion in social choice problems: The first 15 years and beyond.
\newblock {\em arXiv preprint arXiv:2103.00911}, 2021.

\bibitem[AL20]{azoulay2020scientific}
Pierre Azoulay and Danielle Li.
\newblock {\em Scientific Grant Funding}, volume 26889.
\newblock National Bureau of Economic Research Cambridge, MA, 2020.

\bibitem[AMPCV17]{amoroso2017r}
Sara Amoroso, Pietro Moncada-Patern{\`o}-Castello, and Antonio Vezzani.
\newblock R\&d profitability: the role of risk and knightian uncertainty.
\newblock {\em Small Business Economics}, 48:331--343, 2017.

\bibitem[BBS16]{brandl2016consistent}
Florian Brandl, Felix Brandt, and Hans~Georg Seedig.
\newblock Consistent probabilistic social choice.
\newblock {\em Econometrica}, 84(5):1839--1880, 2016.

\bibitem[BCE{\etalchar{+}}16]{brandt2016handbook}
Felix Brandt, Vincent Conitzer, Ulle Endriss, J{\'e}r{\^o}me Lang, and Ariel~D Procaccia.
\newblock {\em Handbook of Computational Social Choice}.
\newblock Cambridge University Press, 2016.

\bibitem[BCRT21]{barber2021predictive}
Rina~Foygel Barber, Emmanuel~J Candes, Aaditya Ramdas, and Ryan~J Tibshirani.
\newblock Predictive inference with the jackknife+.
\newblock {\em The Annals of Statistics}, 49(1):486--507, 2021.

\bibitem[BGK05]{brenner2005modeling}
Lyle Brenner, Dale Griffin, and Derek~J Koehler.
\newblock Modeling patterns of probability calibration with random support theory: Diagnosing case-based judgment.
\newblock {\em Organizational Behavior and Human Decision Processes}, 97(1):64--81, 2005.

\bibitem[BHD16]{ben2016decision}
Yakov Ben-Haim and Maria Demertzis.
\newblock Decision making in times of knightian uncertainty: An info-gap perspective.
\newblock {\em Economics}, 10(1):20160001, 2016.

\bibitem[BK13]{baba2013quality}
Yukino Baba and Hisashi Kashima.
\newblock Statistical quality estimation for general crowdsourcing tasks.
\newblock In {\em {ACM SIGKDD} International Conference on Knowledge Discovery and Data Mining}, 2013.

\bibitem[BOR20]{berger2020focal}
Jo{\"e}l Berger, Margit Osterloh, and Katja Rost.
\newblock Focal random selection closes the gender gap in competitiveness.
\newblock {\em Science Advances}, 6(47):eabb2142, 2020.

\bibitem[CL21]{cortes2021inconsistency}
Corinna Cortes and Neil~D Lawrence.
\newblock Inconsistency in conference peer review: Revisiting the 2014 {NeurIPS} experiment.
\newblock {\em arXiv Preprint arXiv:2109.09774}, 2021.

\bibitem[COP24]{carnehl2024designing}
Christoph Carnehl, Marco Ottaviani, and Justus Preusser.
\newblock Designing scientific grants.
\newblock {\em NBER Chapters}, 2024.

\bibitem[{Eur}23]{erc2023peerreview}
{European Research Council}.
\newblock Peer review – not perfect but still the best system that we have.
\newblock \url{https://erc.europa.eu/news-events/magazine-article/peer-review-not-perfect-still-best-system-we-have}, November 2023.
\newblock Accessed: 2025-05-07.

\bibitem[EW04]{epstein2004intertemporal}
Larry~G Epstein and Tan Wang.
\newblock Intertemporal asset pricing under knightian uncertainty.
\newblock In {\em Uncertainty in Economic Theory}, pages 429--471. Routledge, 2004.

\bibitem[FC16]{fang2016research}
Ferric~C Fang and Arturo Casadevall.
\newblock Research funding: The case for a modified lottery, 2016.

\bibitem[FK94]{faigle1994computational}
Ulrich Faigle and Walter Kern.
\newblock Computational complexity of some maximum average weight problems with precedence constraints.
\newblock {\em Operations Research}, 42(4):688--693, 1994.

\bibitem[FLS24]{feliciani2024funding}
Thomas Feliciani, Junwen Luo, and Kalpana Shankar.
\newblock Funding lotteries for research grant allocation: An extended taxonomy and evaluation of their fairness.
\newblock {\em Research Evaluation}, 33:rvae025, 2024.

\bibitem[FSG{\etalchar{+}}10]{flach2010kdd}
Peter~A. Flach, Sebastian Spiegler, Bruno Gol{\'e}nia, Simon Price, John Guiver, Ralf Herbrich, Thore Graepel, and Mohammed~J. Zaki.
\newblock Novel tools to streamline the conference review process: Experiences from {SIGKDD}'09.
\newblock {\em SIGKDD Explorations Newsletter}, 11(2):63--67, 2010.

\bibitem[GB19]{gross2019contest}
Kevin Gross and Carl~T Bergstrom.
\newblock Contest models highlight inherent inefficiencies of scientific funding competitions.
\newblock {\em PLoS Biology}, 17(1):e3000065, 2019.

\bibitem[GLM{\etalchar{+}}25]{gould2025threats}
Joanie~Sims Gould, Anne~M Lasinsky, Adrian Mota, Karim~M Khan, and Clare~L Ardern.
\newblock Threats to grant peer review: A qualitative study.
\newblock {\em BMJ Open}, 15(2):e091666, 2025.

\bibitem[GLS81]{grotschel1981ellipsoid}
Martin Gr{\"o}tschel, L{\'a}szl{\'o} Lov{\'a}sz, and Alexander Schrijver.
\newblock The ellipsoid method and its consequences in combinatorial optimization.
\newblock {\em Combinatorica}, 1:169--197, 1981.

\bibitem[GLS88]{grotschel1988geometric}
Martin Grötschel, László Lovász, and Alexander Schrijver.
\newblock {\em Geometric Algorithms and Combinatorial Optimization}.
\newblock Springer-Verlag, Berlin, 1988.

\bibitem[Gol80]{golumbic1980algorithmic}
Martin~Charles Golumbic.
\newblock {\em Algorithmic Graph Theory and Perfect Graphs}.
\newblock Academic Press, New York, NY, USA, 1980.

\bibitem[{Gur}25]{gurobi}
{Gurobi Optimization, LLC}.
\newblock {\em Gurobi Optimizer Reference Manual}, 2025.

\bibitem[HOSE22]{heyard2022rethinking}
Rachel Heyard, Manuela Ott, Georgia Salanti, and Matthias Egger.
\newblock Rethinking the funding line at the swiss national science foundation: Bayesian ranking and lottery.
\newblock {\em Statistics and Public Policy}, 9(1):110--121, 2022.

\bibitem[HTB22]{horbach2022partial}
Serge~PJM Horbach, Joeri~K Tijdink, and Lex~M Bouter.
\newblock Partial lottery can make grant allocation more fair, more efficient, and more diverse.
\newblock {\em Science and Public Policy}, 49(4):580--582, 2022.

\bibitem[JC23]{jincandes2023}
Ying Jin and Emmanuel~J. Candes.
\newblock Selection by prediction with conformal p-values.
\newblock {\em Journal of Machine Learning Research}, 24(244):1--41, 2023.

\bibitem[Kha79]{khachiyan1979polynomial}
L.~G. Khachiyan.
\newblock A polynomial algorithm in linear programming.
\newblock {\em Doklady Akademii Nauk {SSSR}}, 244(5):1093--1096, 1979.

\bibitem[Kli24]{klick2024declaration}
Jonathan Klick.
\newblock More of a declaration than a constitution.
\newblock {\em The Daily Pennsylvanian}, February 2024.
\newblock Guest Column.

\bibitem[Kni21]{knight1921risk}
Frank~Hyneman Knight.
\newblock {\em Risk, Uncertainty and Profit}, volume~31.
\newblock Houghton Mifflin, 1921.

\bibitem[KR18]{kleinberg2018selection}
Jon Kleinberg and Manish Raghavan.
\newblock Selection problems in the presence of implicit bias.
\newblock {\em arXiv Preprint arXiv:1801.03533}, 2018.

\bibitem[Lan12]{langford2012icml}
John Langford.
\newblock {ICML} acceptance statistics, 2012.
\newblock \url{http://hunch.net/?p=2517} [Online; accessed 14-May-2018].

\bibitem[LCC{\etalchar{+}}20]{liu2020acceptability}
Mengyao Liu, Vernon Choy, Philip Clarke, Adrian Barnett, Tony Blakely, and Lucy Pomeroy.
\newblock The acceptability of using a lottery to allocate research funding: A survey of applicants.
\newblock {\em Research Integrity and Peer Review}, 5:1--7, 2020.

\bibitem[Lee15]{lee2015commensuration}
Carole~J Lee.
\newblock Commensuration bias in peer review.
\newblock {\em Philosophy of Science}, 82(5):1272--1283, 2015.

\bibitem[LHZ{\etalchar{+}}24]{lee2024algorithms}
Jinsook Lee, Emma Harvey, Joyce Zhou, Nikhil Garg, Thorsten Joachims, and Ren{\'e}~F Kizilcec.
\newblock Algorithms for college admissions decision support: Impacts of policy change and inherent variability.
\newblock {\em arXiv preprint arXiv:2407.11199}, 2024.

\bibitem[LRB20]{li2020hiring}
Danielle Li, Lindsey~R Raymond, and Peter Bergman.
\newblock Hiring as exploration.
\newblock Technical report, National Bureau of Economic Research, 2020.

\bibitem[Mad49]{madow1949theory}
William~G Madow.
\newblock On the theory of systematic sampling, {II}.
\newblock {\em The Annals of Mathematical Statistics}, 20(3):333--354, 1949.

\bibitem[Man90]{manski1990nonparametric}
Charles~F. Manski.
\newblock Nonparametric bounds on treatment effects.
\newblock {\em The American Economic Review}, 80(2):319--323, 1990.

\bibitem[McK17]{mckenzie2017identifying}
David McKenzie.
\newblock Identifying and spurring high-growth entrepreneurship: Experimental evidence from a business plan competition.
\newblock {\em American Economic Review}, 107(8):2278--2307, 2017.

\bibitem[MT20]{malenko2020asymmetric}
Andrey Malenko and Anton Tsoy.
\newblock Asymmetric information and security design under knightian uncertainty.
\newblock {\em Available at SSRN}, 3100285, 2020.

\bibitem[{Nat}22]{NatureEditorial2022Lotteries}
{Nature Editorial}.
\newblock The case for lotteries as a tiebreaker of quality in research funding.
\newblock {\em Nature}, 609(7928):653, September 2022.
\newblock Accessed 2025-05-14.

\bibitem[NO07]{nishimura2007irreversible}
Kiyohiko~G Nishimura and Hiroyuki Ozaki.
\newblock Irreversible investment and knightian uncertainty.
\newblock {\em Journal of Economic Theory}, 136(1):668--694, 2007.

\bibitem[NP24]{niemeyer2024optimal}
Axel Niemeyer and Justus Preusser.
\newblock Optimal allocation with peer information.
\newblock {\em arXiv Preprint arXiv:2410.08954}, 2024.

\bibitem[NSP21]{noothigattu2021loss}
Ritesh Noothigattu, Nihar Shah, and Ariel Procaccia.
\newblock Loss functions, axioms, and peer review.
\newblock {\em Journal of Artificial Intelligence Research}, 70:1481--1515, 2021.

\bibitem[NW15]{nowotarski2015computing}
Jakub Nowotarski and Rafa{\l} Weron.
\newblock Computing electricity spot price prediction intervals using quantile regression and forecast averaging.
\newblock {\em Computational Statistics}, 30:791--803, 2015.

\bibitem[OF19]{osterloh2019dealing}
Margit Osterloh and Bruno~S Frey.
\newblock Dealing with randomness.
\newblock {\em Management Revue}, 30(4):331--345, 2019.

\bibitem[Phi22]{philipps2022research}
Axel Philipps.
\newblock Research funding randomly allocated? a survey of scientists’ views on peer review and lottery.
\newblock {\em Science and Public Policy}, 49(3):365--377, 2022.

\bibitem[RAK91]{ravi1991ordering}
Ramamurthy Ravi, Ajit Agrawal, and Philip Klein.
\newblock Ordering problems approximated: Single-processor scheduling and interval graph completion.
\newblock In {\em Automata, Languages and Programming: 18th International Colloquium Madrid, Spain, July 8--12, 1991 Proceedings 18}, pages 751--762. Springer, 1991.

\bibitem[RM22]{Rahimian_2022}
Hamed Rahimian and Sanjay Mehrotra.
\newblock Frameworks and results in distributionally robust optimization.
\newblock {\em Open Journal of Mathematical Optimization}, 3:1–85, July 2022.

\bibitem[RRR{\etalchar{+}}12]{roos2012statistical}
Magnus Roos, J{\"o}rg Rothe, Joachim Rudolph, Bj{\"o}rn Scheuermann, and Dietrich Stoyan.
\newblock A statistical approach to calibrating the scores of biased reviewers: The linear vs. the nonlinear model.
\newblock In {\em Multidisciplinary Workshop on Advances in Preference Handling}, 2012.

\bibitem[RRS11]{roos2011calibrate}
Magnus Roos, J{\"{o}}rg Rothe, and Bj{\"{o}}rn Scheuermann.
\newblock How to calibrate the scores of biased reviewers by quadratic programming.
\newblock In {\em {AAAI} Conference on Artificial Intelligence}, 2011.

\bibitem[Ruf14]{ruffino2014knightian}
Doriana Ruffino.
\newblock Some implications of knightian uncertainty for finance and regulation.
\newblock {\em FEDS Notes}, April 2014.

\bibitem[Saa12]{saari2012geometry}
Donald~G. Saari.
\newblock {\em Geometry of Voting}, volume~3 of {\em Studies in Economic Theory}.
\newblock Springer Berlin Heidelberg, 1 edition, 2012.

\bibitem[{Sci}24]{innovate2024}
{Science Foundation Ireland}.
\newblock Innovate for ireland programme.
\newblock \url{https://www.sfi.ie/funding/funding-calls/innovate-for-ireland/}, August 2024.
\newblock Accessed: 2025-05-07.

\bibitem[Sha22]{shah2022surveyextended}
Nihar~B Shah.
\newblock An overview of challenges, experiments, and computational solutions in peer review.
\newblock \url{http://bit.ly/PeerReviewOverview} (Abridged version published in the Communications of the {ACM}), June 2022.

\bibitem[SJSU23]{saveski2023counterfactual}
Martin Saveski, Steven Jecmen, Nihar Shah, and Johan Ugander.
\newblock Counterfactual evaluation of peer-review assignment policies.
\newblock {\em Advances in Neural Information Processing Systems}, 36:58765--58786, 2023.

\bibitem[SS19]{schichl2019consensual}
Hermann Schichl and Meinolf Sellmann.
\newblock Consensual affine transformations for partial valuation aggregation.
\newblock In {\em Proceedings of the {AAAI} Conference on Artificial Intelligence}, volume~33, pages 2612--2619, 2019.

\bibitem[Sun23]{sunstein2023knightian}
Cass~R Sunstein.
\newblock Knightian uncertainty.
\newblock {\em Available at {SSRN} 4662711}, 2023.

\bibitem[{Swi}21]{SNSF2021Lots}
{Swiss National Science Foundation}.
\newblock Drawing lots as a tie-breaker, March 2021.
\newblock Accessed 2025-05-14.

\bibitem[SWL{\etalchar{+}}22]{schroter2022evaluation}
Sara Schroter, Wim~EJ Weber, Elizabeth Loder, Jack Wilkinson, and Jamie~J Kirkham.
\newblock Evaluation of editors’ abilities to predict the citation potential of research manuscripts submitted to the bmj: a cohort study.
\newblock {\em Bmj}, 379, 2022.

\bibitem[{The}25]{britishacademy2025randomisation}
{The British Academy}.
\newblock Partial randomisation trial extended after diversity of applicants and award holders increases.
\newblock \url{https://www.thebritishacademy.ac.uk/news/partial-randomisation-trial-extended-small-research-grants/}, April 2025.
\newblock Accessed: 2025-05-07.

\bibitem[Til23]{tille2023_sampling}
Yves Till{\'e}.
\newblock Remarks on some misconceptions about unequal probability sampling without replacement.
\newblock {\em Computer Science Review}, 47:100533, 2023.

\bibitem[Tre21]{trentacosti2021lotteries}
Giulia Trentacosti.
\newblock Using lotteries to allocate research funding: Perspectives from switzerland, September 2021.
\newblock University of Groningen Library Open Science Blog.

\bibitem[TWBX21]{tan2021least}
Sijun Tan, Jibang Wu, Xiaohui Bei, and Haifeng Xu.
\newblock Least square calibration for peer reviews.
\newblock {\em Advances in Neural Information Processing Systems}, 34:27069--27080, 2021.

\bibitem[{USE}24]{usenix2025policy}
{USENIX Security 2025 Program Committee}.
\newblock {USENIX} security 2025 conference format.
\newblock \url{https://github.com/USENIX-Security-2025/conference-format}, April 2024.
\newblock Version 1.1.

\bibitem[{Vol}25]{vwf2025lottery}
{Volkswagen Foundation}.
\newblock Partially randomized procedure – lottery and peer review.
\newblock \url{https://www.volkswagenstiftung.de/en/funding/peer-review/partially-randomized-procedure-lottery-and-peer-review}, May 2025.
\newblock Accessed: 2025-05-07.

\bibitem[WDH{\etalchar{+}}24]{weitzner2024predictive}
Aidan~S Weitzner, Matthew Davis, Andrew~H Han, Olivia~O Liu, Anuj~B Patel, Brian~D Sites, and Steven~P Cohen.
\newblock How predictive is peer review for gauging impact? the association between reviewer rating scores, publication status, and article impact measured by citations in a pain subspecialty journal.
\newblock {\em Regional Anesthesia \& Pain Medicine}, 2024.

\bibitem[Woe03]{woeginger2003approximability}
Gerhard~J Woeginger.
\newblock On the approximability of average completion time scheduling under precedence constraints.
\newblock {\em Discrete Applied Mathematics}, 131(1):237--252, 2003.

\bibitem[WS18]{wang2018your}
Jingyan Wang and Nihar~B Shah.
\newblock Your 2 is my 1, your 3 is my 9: Handling arbitrary miscalibrations in ratings.
\newblock {\em arXiv Preprint arXiv:1806.05085}, 2018.

\bibitem[XSH20]{xidonas2020robust}
Panos Xidonas, Ralph Steuer, and Christis Hassapis.
\newblock Robust portfolio optimization: A categorized bibliographic review.
\newblock {\em Annals of Operations Research}, 292(1):533--552, 2020.

\end{thebibliography}
~\\~\\~\\
\appendix

\section*{Appendix}

\section{Proofs}
\label{app:proofs}

\subsection{Optimality of deterministic selection in a fully Bayesian setting (Section~\ref{sec:existing_deployments})}

\label{app:bayesian_deterministic}

\begin{proof}[Proof of Proposition~\ref{prop:fully_bayesian_model}]
    Expand the expected utility by conditioning on the observed data $y$ as $\E_{y, \theta, \hat{\pi}} [ u(\hat{\pi}, \theta)] = \E_{y}[\E_{\hat{\pi}, \theta | y}[u(\hat{\pi}, \theta)]]$. For a given $y$, the funder maximizes the inner expectation by choosing any $\hat{\pi} \in \argmax_{\pi' \in \Pi} \E_{\theta | y}[ u(\pi', \theta) | y]$, where the maximum over $\Pi$ always exists since the set of permutations $\Pi$ is finite. Hence, the deterministic estimator that takes $f(y) \in \argmax_{\pi' \in \Pi} \E_{\theta | y}[ u(\pi', \theta) | y]$ for all $y \in Y$ maximizes the funder's expected utility.
\end{proof}

\subsection{Polynomial time algorithm for ex ante optimization~(Theorem~\ref{prop:ellipsoid})}

In order to prove the main theorem, we first prove Lemma~\ref{proposition:separation_oracle}, which establishes that the separation oracle (Algorithm~\ref{alg:separationoracle}) is correct and efficient.

\begin{proof}[Proof of Lemma~~\ref{proposition:separation_oracle}]
    Let $\objvalue'$ be the smallest $\objvalue_i$ found by the separation oracle. We begin by showing that:
    \[\objvalue' = \min_{\perm \in \errorpermutations} \; \sum_{i = 1}^{\nproposal} \prob_i \ind\{\perm(i) \leq \nselected\}.\]
    For $i \in [\nproposal]$, let 
    \[K = \{T \subset[\nproposal], |T| = \nselected: \exists \perm \in \errorpermutations \text{ s.t. } \perm(j) \leq \nselected \; \forall j \in S \}\] be the set of all feasible selections of top-$\nselected$ intervals given $\errorpermutations$. Note that we can rewrite the objective as
    \[\min_{\perm \in \errorpermutations} \; \sum_{i = 1}^{\nproposal} \prob_i \ind\{\perm(i) \leq \nselected\} = \min_{T \in K} \sum_{j \in T} \prob_j.\]  
    Now, we divide $K$ into subsets, for $i \in [\nselected+1]$, define 
    \[K^{(i)} = \{T \subseteq K: i \not \in T \text{ and } [i-1] \subset  T\}\]
    so that $K^{(i)}$ contains all sets in $K$ that include intervals $1$ to $(i-1)$ and exclude interval $i$ and $K = \cup_{i=1}^{\nselected+1} K^{(i)}$. Let the intervals be sorted in decreasing order of lower bound $\lcb$. Then, for any $i \in [\nproposal]$, $i$ is not strictly above any of the intervals from $1$ to $(i-1)$. Additionally, letting 
    \[S_i = \{j \in (i, \nproposal]: \text{intervals } j \text{ and } i \text{ overlap} \}\]
    any interval in $(i, \nproposal] \setminus S_i$ must be strictly below interval $i$. Hence, any set of top-$\nselected$ intervals $T \in K^{(i)}$ contains intervals $1$ to $(i-1)$ and the remaining $\nselected - (i-1)$ intervals come from $S_i$. Additionally, for all $j \in S_i$, $\ucb_j \geq \lcb_i$, while $\lcb_j \leq \lcb_i$ so all intervals in $S_i$ overlap each other and any selection of these $S_i$ in the top $\nselected$ is feasible. Hence:
      \[K^{(i)} = \{[i-1] \cup A \; : \;  A \subseteq S_i, |A| = (\nselected - (i-1)\}.\]
      
      Finally, for fixed $\prob$, the minimum objective value over $K^{(i)}$ is to select the $(\nselected - (i-1)$ proposals with the smallest values of $\prob$. Algorithm~\ref{alg:separationoracle} exactly enumerates these worst-case constraints for each $K^{(i)}$ and hence returns the minimum over $T \in K$.

    Now, the separation oracle adds constraints to $\cuts$ when $ \objvalue > \sum_{j=1}^{i-1} \prob_j + \sum_{j \in S_i} \prob_j$. Hence the separation oracle returns $\emptyset$ only if $\objvalue \leq \sum_{j=1}^{i-1} \prob_j + \sum_{j \in S_i} \prob_j$ for every $i$, which is precisely the condition for feasibility. Further, as argued above, the separation oracle considers feasible top-$\nselected$, so any constraints added by the separation oracle represent valid constraints for the LP.
    
      The algorithm runs in time $O(\nproposal \max\{\nselected, \log \nproposal\})$, because finding $S_i$ requires at most a linear scan over $O(\nproposal)$ proposals for each of $(\nselected)$ proposals and initial sorting takes time $O(\nproposal \log \nproposal)$. We can sort intervals by $\prob$ once, to ensure that each linear scan for $S_i$ returns the intervals with smallest $\prob$ with no additional sorting per iteration.
    \end{proof}

Now, we prove the main theorem by showing that the ellipsoid algorithm converges in polynomial time using this separation oracle as a sub-routine:

\begin{proof}[Proof of Theorem~\ref{prop:ellipsoid}]
    We may relax the equality constraint of LP~\eqref{LP} $\sum_{i=1}^{\nproposal} \prob_i = \nselected$ to an inequality $\sum_{i=1}^{\nproposal} \prob_i \leq \nselected$ without loss of optimality, since increasing any $\prob_i$ (up to 1) cannot decrease the worst-case value of $\objvalue$. This relaxation ensures that the feasible region lies within a full-dimensional affine subspace of $\mathbb{R}^{\nproposal + 1}$, and contains a nontrivial interior. In particular, the point $\prob_i = \frac{\nselected}{\nproposal}, \objvalue = 0$ lies strictly inside the box constraints and satisfies all inequalities strictly, implying the feasible region is full-dimensional.
    
    Then, because our separation oracle runs in time polynomial in $\nproposal$, the classical ellipsoid algorithm~\cite{khachiyan1979polynomial} using the separation oracle to make cuts, solves the optimization problem to within accuracy $\epsilon$ in time $\text{poly}(n, \log(1/\epsilon), \log(R/r))$ where $R$ is the radius of a Euclidean ball that contains the feasible region and $r$ is the radius of a Euclidean ball entirely contained in the feasible region. Clearly, $R$ is upper bounded by $\text{poly}(\nproposal)$, since each $\prob_i$ is bounded in $[0,1]$ and $\objvalue$ is bounded in $[0,\nselected]$. To establish a lower bound on $r$, we invoke Theorem 6.2.2 in Grötschel, Lovász, and Schrijver \cite{grotschel1988geometric}, which states that if a polyhedron $P = {x \in \mathbb{R}^n : Ax \leq b}$ is full-dimensional, and the matrix $A$ and vector $b$ consist of integers of maximum bit length $U$, then $P$ contains a ball of radius at least $2^{-\text{poly}(n, U)}$. Noting that all constraints in our model have integral coefficients of bit length at most $\log(\nselected) < n$, we have that $r \geq 2^{-\text{poly}(n)}$. Hence, $\log(R/r)$ is $\text{poly}(\nproposal)$ and so the runtime of the algorithm is polynomial in $\nproposal$ and $\log(1/\epsilon)$.
    \end{proof}

\subsection{Ex post validity~(Theorem~\ref{thm:ex_post})}

\begin{proof}[Proof of Theorem~\ref{thm:ex_post}]
    We will prove that for any $a, b \in [\nproposal]$ such that $\lcb_a > \ucb_b$ and for any input vector of marginal probabilities $\prob$, the post-processing Algorithm~\ref{alg:tie_breaking} terminates with $\prob_a = 1$ or $\prob_b = 0$ and never decreases the objective value of $\prob$. Then, if any ex ante optimal $\prob$ is given as input to the algorithm, after post-processing it is still ex ante optimal. Moreover, any sampling method that respects the post-processed marginal probabilities will satisfy ex post validity, since if $a$ dominates $b$, either $a$ is always sampled $(\prob_a = 1)$ or $b$ is never sampled $(\prob_b = 0)$.  
    
    Let $a, b$ be any intervals with $\lcb_a > \ucb_b$. Let $D = \{d \in [\nproposal] \mid \lcb_d > \ucb_a\}$ be the set of all intervals strictly above interval $a$. Note that (1) all $d \in D$ are also strictly above $b$ and that (2) $b$ precedes $a$ when ordered by $\ucb$ and $a$ precedes all $d \in D$. Hence, Algorithm~\ref{alg:tie_breaking} will process $b$, then $a$, then $d \in D$. After processing $b = b$, either $\prob_b = 0$ or $\prob_d = 1$ for all $d \in D$. Because $\prob_b$ cannot increase in any subsequent iterations, if $\prob_b = 0$, it will remain $0$ until the algorithm terminates. If $\prob_d = 1$ for all $d \in D$, then because $a \in D$, $\prob_a = 1$. Furthermore, since $\prob_d = 1$ for all $d \in D$, $\prob_a$ will not decrease in any subsequent iterations. Hence, the algorithm ends with $\prob_b = 0$ or $\prob_a = 1$.

    Further, for any $\perm \in \errorpermutations$, if $\perm(b) \leq \nselected$, then $\perm(a) < \perm(b) \leq \nselected$. Hence, moving probability mass from $\prob_b$ to $\prob_a$ cannot decrease $\min_{\perm \in \errorpermutations} \; \sum_{i = 1}^{\nproposal} \prob_i \ind\{\perm(i) \leq \nselected\}$.
\end{proof}

\subsection{Axiomatic comparison~(Theorem~\ref{thm:axiomatic_analysis})}
\label{ref:axiom_proofs}

\paragraph{(a) Maximum instability}

We will show that both Swiss NSF and randomize above threshold (with data-dependent threshold) are maximally unstable using an example. Let intervals $1$ to $\nselected$ be $[0,2]$ with point estimates of $1$ and let intervals $\nselected+1$ to $\nproposal$ be $[0, 1 - \epsilon]$ for some $\epsilon \in (0, 1)$. Then, the Swiss NSF algorithm selects the first $\nselected$ proposals deterministically. Now, shift interval $\nselected$ to be $[0,2-2\epsilon]$ with point estimate of $1-\epsilon$. Since all the intervals contain the $\nselected$-the point estimate the Swiss NSF algorithm selects uniformly at random. Note that $\epsilon$ can be taken to be arbitrarily small, so the Swiss NSF is maximally unstable for any $\epsilon > 0$. If the randomize above threshold method chooses the threshold in a data-dependent manner, then the same example leads to maximal instability if the threshold is taken to be the $\nselected$-th highest point estimate.  

Now, to show that \ouralgorithm\ is not maximally unstable, we first observe that \ouralgorithm\ selects among all proposals uniformly at random \emph{only if} all intervals intersect each other. Let $1 < \nselected < \nproposal-1$. Assume for the sake of contradiction that \ouralgorithm\ samples uniformly at random from all intervals and that there are two intervals $i$ and $j$ with $i$ strictly above $j$. Since the algorithm is ex post valid, if $\prob_j > 0$, then $\prob_i = 1$, but then $\prob_i = \prob_j = 1$, which is possible only if $\nproposal = \nselected$, yielding a contradiction.

Now, to show that \ouralgorithm\ is \emph{not} maximally unstable, we establish that for any set of intervals where all intervals except for one overlap, the algorithm never chooses deterministically. Since all intervals but one overlap, the non-overlapping interval can either have the largest lower bound or the smallest upper bound.

First, consider the case where interval $j$ has the largest lower bound. Then, we can partition the intervals into $3$ sets $\{j\}$, $X$, the set of intervals that intersect $j$, and $B$ the set of intervals strictly below $j$. By symmetry and monotonic chain constraints (detailed in Appendix~\ref{app:full_cutting_plane}), the algorithm will output marginal probabilities with at most $3$ different values $\prob_j, \prob_x, \prob_b$ such that $\prob_j \geq \prob_x \geq \prob_b$. Since $|X| + |B| > \nselected$, the only input for which the algorithm could output $\nselected$ $1$'s is when $|X| = \nselected - 1$ and $|B| = n-\nselected > 1$. In this case, $j$ is always in the top $\nselected$, so $\prob_j = 1$ and $j$ is pruned from the problem. Then, our algorithm chooses $\nselected - 1$ out of $\nproposal - 1$ intervals that all intersect, so the optimal solution is sampling $\nselected - 1$ uniformly at random from the remaining $X \cup B$ intervals.  Hence, our algorithm will not choose deterministically.

Now, consider the case where interval $j$ has the smallest upper bound. Then, we can partition the intervals into $3$ sets $\{j\}$, $X$, the set of intervals that intersect $j$, and $A$, the set of intervals strictly above $j$. By symmetry and monotonic chain constraints, our algorithm will output marginal probabilities with at most $3$ different values $\prob_j, \prob_x, \prob_a$ such that $\prob_a \geq \prob_x \geq \prob_j$. Since $|A| + |X| \geq \nselected + 1$, the only possible input for which the algorithm could output $\nselected$ $1$' s is if $|A| = \nselected$ and $|X| = \nproposal - \nselected - 1 > 0$. In this case, $j$ is never in the top $\nselected$, so the feasible top $\nselected$ could be any subset of size $\nselected$ from $A \cup X$. Therefore, selecting $\nselected$ proposals uniformly at random from $A \cup X$ is optimal, so \ouralgorithm\ will not choose deterministically. \qed

\paragraph{(b) + (c) Monotonicity in budget}

We show that Swiss NSF, randomize above threshold, and \ouralgorithm\ are not monotonic in budget $\nselected$, via the example shown in Figure~\ref{fig:monotonicity_failure_app}.

\begin{figure}[ht]
    \centering
    \includegraphics[width=0.4\linewidth]{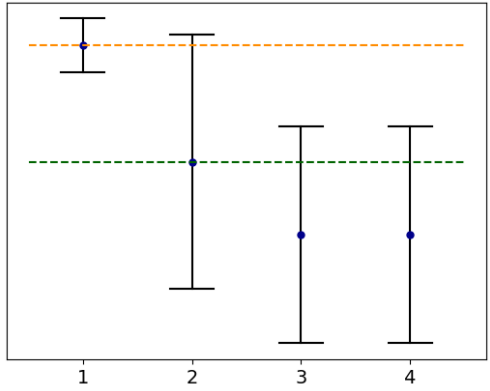}
    \caption{Example that violates monotonicity with respect to $\nselected$ for Swiss NSF and our \ouralgorithm\ algorithm. When $\nselected=1$, $\prob_2 = 1/2$ for both algorithms. However, when $\nselected=2$, $\prob_2 = 1/3$ for both algorithms.}
    \label{fig:monotonicity_failure_app}
\end{figure}

In this example, taking $\nselected = 1$, both the Swiss NSF method and \ouralgorithm\ randomize between proposals $1$ and $2$, so proposal $2$ is selected with probability $1/2$. However, if the number of proposals selected is increased to $\nselected=2$, then proposal $1$ is always selected and both algorithms sample uniformly at random from intervals $2$ to $4$, meaning that proposal $2$ has a selection probability of only $1/3$. Hence, even though more proposals are being selected, proposal $2$ is worse-off. The same example shows that randomize-above-threshold violates monotonicity, taking the threshold to be the point estimate or the lower bound of the $\nselected$-th highest proposal. 

This example also proves that it is not possible to simultaneously satisfy ex ante optimality and monotonicity in budget. For $\nselected=1$, the unique ex ante optimal $\prob$ is $(1/2, 1/2, 0, 0)$ since feasible top $1$ are $\{1\}$ and $\{2\}$. For $\nselected=2$, the unique ex ante optimal $\prob$ is $(1, 1/3, 1/3, 1/3)$ since the feasible top $2$ are $\{1,2\}$ $\{1,3\}$, $\{1,4\}$. Since these are unique ex ante optimal solutions, there is no sequence of solutions for $\nselected$ equal to $1$ and $2$ that satisfy monotonicity in budget $\nselected$ and ex ante optimality simultaneously.

\paragraph{(d) Reversal symmetry}

To prove that Swiss NSF and (data-dependent) randomize-above-threshold violate reversal symmetry, consider intervals $(0,1)$ and $(0.1, 0.2)$ with point estimates $0.5$ and $0.15$. Then, the Swiss NSF selection rule or randomize-above-threshold with threshold set as the highest point estimate, will accept interval $1$ and reject interval $2$. However, if intervals are flipped so interval $1$ stays the same, but proposal $2$ has interval $(0.8, 0.9)$ with point estimate of $0.85$, then Swiss NSF samples uniformly at random between the two proposals. Now, when $\nproposal=2$ and $\nselected = 1$, \ouralgorithm\ only samples between the two proposals if their intervals overlap. If the intervals are horizontally flipped, then they still overlap and \ouralgorithm\ samples uniformly at random respecting reversal symmetry. If interval $1$ lies strictly above interval $2$, then after flipping interval $2$ lies strictly above interval $1$, so the marginal probabilities of selection are $(1,0)$ and $(0,1)$ respectively, respecting reversal symmetry. 

\subsection{Enforcing monotonicity (Proposition~\ref{prop:monotonicity_algo})}

\begin{proof}
    Let $\prob^{(i-1)}$ be the output at step $i-1$ and $\prob'^{(i)}$ be the output for iteration $i$ before post-processing and $\prob^{(i)}$ be the final output after post-processing. If an interval $b$ is strictly below a set of intervals $A$, either $\prob^{(i-1)}_b = 0$ or $\prob^{(i-1)}_b > 0$ and $\prob^{(i-1)}_a > 0  \; \forall a \in A$. If $\prob^{(i-1)}_b = 0$, clearly $\prob^{(i)}_b >= \prob^{(i-1)}_b$. If $\prob^{(i-1)}_b > 0$ and $\prob^{(i-1)}_a > 0  \; \forall a \in A$, then $\prob'^{(i)}_a = 1$ for all $a \in A$ and $\prob'^{(i)}_b \geq \prob^{(i-1)}_b$, so the post-processing will not decrease $\prob'^{(i)}_b$ at all and monotony will be satisfied.
\end{proof}

\section{Full Cutting Plane Algorithm}
\label{app:full_cutting_plane}

In Section~\ref{sec:practical_algorithm}, we describe the cutting plane algorithm at a high-level. This section proves that pruning and adding monotonicity and symmetry constraints to the LP are without loss of optimality and ensure faster convergence. 

We note that one can think of $\nabove, \nbelow$ as defining a partial order: $i \succeq j \iff \nabove(i) \leq \nabove(j) \text{ and } \nbelow(i) \geq \nbelow(j)$. Then, a monotonically ordered subset of intervals is a totally ordered subset or a ``\emph{chain}'' and the minimum number of monotonically ordered subsets of intervals that covers a set of intervals is the ``\emph{width}'' of the partially ordered set of intervals. Note that this partial order defined by $\nabove$ and $\nbelow$ is \emph{not} the canonical interval order that we use to define the set of feasible permutations $\errorpermutations$. The ($\nabove, \nbelow$) partial order never has width larger than the interval order and often has much smaller width. At the extreme, if all intervals overlap, then the width of the interval partial order is $\nproposal$, while the width of the ($\nabove, \nbelow$) partial order is $1$, since all proposals have equal $\nabove$ and $\nbelow$. 

As we now prove, the cutting plane algorithm converges in $O(\nselected^{\orderwidth+1})$ iterations, where $\orderwidth$ denotes the number of monotonically ordered subsets partitioning the intervals (or the width of the ($\nabove$, $\nbelow$) partial order. In theory, $\orderwidth$ can grow linearly with $\nselected$, so this algorithm does not have the theoretical polynomial-time guarantee of the ellipsoid algorithm. In practice, $\orderwidth$ is often small for many sets of intervals.

\begin{proposition}[Cutting plane algorithm convergence]
\label{proposition:cutting_plane_algo}
    Letting $\orderwidth$ denote the number of monotonically ordered subsets (chains) partitioning the intervals per Definition~\ref{defn:monotonically_ordered_subsets}, Algorithm~\ref{alg:cutting_plane} converges to an optimal solution in $O(\nselected^{\orderwidth+1})$ iterations. The algorithm solves an LP with at most $O(\nproposal + \nselected^{\orderwidth+1})$ constraints.
\end{proposition}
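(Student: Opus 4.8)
The plan is to establish two components: (i) the solution returned when the separation oracle reports $\emptyset$ is in fact optimal for the full LP~\eqref{LP}, and (ii) this happens within $O(\nselected^{\orderwidth+1})$ iterations, with the LP never exceeding $O(\nproposal+\nselected^{\orderwidth+1})$ constraints. For (i) I would show that the constraints the algorithm front-loads --- the pruning of Lines~1--3 and the chain-monotonicity constraints $\prob_{M[i]}\ge\prob_{M[i+1]}$ --- do not remove every optimum of LP~\eqref{LP}; then, since every cut the oracle adds is a genuine constraint of LP~\eqref{LP} (Lemma~\ref{proposition:separation_oracle}), an iterate that satisfies the current relaxation and is certified feasible by the oracle is feasible for all of LP~\eqref{LP} and hence optimal. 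For (ii) I would bound the number of \emph{distinct} cuts the oracle can ever emit once the iterate respects the chain constraints.

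\textbf{Without-loss-of-optimality.} Pruning is routine: an interval strictly below $\ge\nselected$ others has $\perm(i)>\nselected$ for every $\perm\in\errorpermutations$, so $\prob_i$ appears with coefficient $0$ in every constraint of LP~\eqref{LP}, and a budget-reallocation exchange gives an optimum with $\prob_i=0$; symmetrically, an interval strictly above $\ge\nproposal-\nselected$ others lies in every feasible top-$\nselected$ set, so some optimum has $\prob_i=1$. The real content is the chain-monotonicity constraints. Here I would invoke the defining structural feature of interval orders: the down-sets $D(i)=\{r:\lcb_i>\ucb_r\}$ and the up-sets $U(i)=\{r:\lcb_r>\ucb_i\}$ are each totally ordered by inclusion, so $\nbelow(i)\ge\nbelow(j)$ and $\nabove(i)\le\nabove(j)$ is equivalent to $D(i)\supseteq D(j)$ and $U(i)\subseteq U(j)$. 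Recall from the proof of Lemma~\ref{proposition:separation_oracle} that a set $T$ with $|T|=\nselected$ is a feasible top-$\nselected$ set (an element of the collection $K$ there) iff it is upward-closed in the interval order. Given such $i,j$ with $j\in T\in K$ and $i\notin T$, I claim $T'=(T\setminus\{j\})\cup\{i\}\in K$: $T'$ stays upward-closed because the only way it could fail is that some $b\in T'\setminus\{i\}$ has $j$ strictly above it, i.e.\ $b\in D(j)\subseteq D(i)$, which forces $i\in T$ by upward-closedness of $T$, a contradiction. A short case check (on how a feasible $T$ meets $\{i,j\}$) then shows that interchanging the values $\prob_i\leftrightarrow\prob_j$ whenever $i$ dominates $j$ in the $(\nabove,\nbelow)$ order but $\prob_i<\prob_j$ never decreases $\min_{T\in K}\sum_{j'\in T}\prob_{j'}$; repeating these swaps (a bubble-sort argument, or: pick the optimum minimizing $\sum_i r_i\prob_i$ for a fixed linear extension $r$ of the $(\nabove,\nbelow)$ order) yields an optimum of LP~\eqref{LP} that is monotone along every chain, hence feasible for the algorithm's initial LP. Thus the initial LP has the same optimal value as LP~\eqref{LP}, and combined with the previous paragraph the terminating iterate is optimal.

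\textbf{Counting cuts and LP size.} Every cut produced by Algorithm~\ref{alg:separationoracle} has the form $\objvalue\le\sum_{j=1}^{i-1}\prob_j+\sum_{j\in\tilde S_i}\prob_j$ for some $i\in\{1,\dots,\nselected+1\}$, where the prefix $\{1,\dots,i-1\}$ and the candidate set $S_i$ are fixed functions of the intervals only, and $\tilde S_i$ consists of the $\nselected-(i-1)$ elements of $S_i$ with smallest $\prob$. When the iterate satisfies the chain constraints, $\prob$ is non-increasing along each chain $M_m$, so (fixing a deterministic tie-break, e.g.\ by chain position) the smallest-$\prob$ elements of $S_i\cap M_m$ form a suffix of $S_i\cap M_m$ in chain order; therefore $\tilde S_i$ is determined by $i$ together with one suffix length per chain, i.e.\ there are at most $(\nselected+1)\cdot(\nselected+1)^{\orderwidth}=O(\nselected^{\orderwidth+1})$ syntactically distinct cuts the oracle can ever return. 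Each cutting-plane iteration either terminates or adds at least one cut violated by the current iterate; since that iterate is optimal for the current relaxation it satisfies all previously added cuts, so the new cut differs from all of them. Hence the pool of at most $O(\nselected^{\orderwidth+1})$ cuts is exhausted after $O(\nselected^{\orderwidth+1})$ iterations, at which point the oracle must return $\emptyset$. The same counting bounds the LP: the initial LP has the $\nproposal$ box constraints, the budget equality, the single constraint $\objvalue\le\sum_{j=1}^{\nselected}\prob_j$, and $\sum_m(|M_m|-1)=\nproposal-\orderwidth$ monotonicity constraints --- $O(\nproposal)$ in all --- plus at most $O(\nselected^{\orderwidth+1})$ cuts, giving $O(\nproposal+\nselected^{\orderwidth+1})$.

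\textbf{Main obstacle.} The crux is the without-loss-of-optimality argument for the chain constraints: it depends essentially on the interval-order structure (for a general partial order the analogous $k$-ideal feasibility question is NP-hard, so no comparably clean exchange can exist), and the swap must be checked to preserve upward-closedness, which is exactly where the nested-down-set property is used. A secondary technical point is that, when $\prob$ has ties, the set $\tilde S_i$ (and hence the emitted cut) must be a deterministic, suffix-shaped function of the iterate for the counting to go through; fixing a tie-break rule in the oracle handles this.
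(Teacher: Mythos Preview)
Your proposal is correct and follows the paper's overall two-part strategy: establish that the front-loaded constraints (pruning, chain monotonicity) are without loss of optimality, then bound the number of distinct cuts the oracle can emit. The execution differs in two respects worth noting. First, your exchange argument for chain monotonicity works directly with feasible top-$\nselected$ sets (upward-closed sets in the interval order) and exploits the nesting of down-/up-sets, whereas the paper swaps positions in a permutation $\perm\in\errorpermutations$ and checks the swapped permutation remains feasible; the two are equivalent, but your formulation makes the dependence on the interval-order structure more explicit. Second, your cut-counting is more direct: you parameterize a cut by the prefix index $i$ and one suffix length per chain, giving $(k{+}1)^{\orderwidth+1}$ possibilities; the paper instead reduces (via a symmetry lemma you do not need) to $O(\nselected)$ decision variables and then argues there are $O(\nselected^{\orderwidth})$ relevant ``total orders'' with $\nselected{+}1$ cuts each. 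Your suffix-based count is cleaner and sidesteps that reduction. One small gap: in checking that $T'=(T\setminus\{j\})\cup\{i\}$ is upward-closed, you verify that removing $j$ is safe but should also note that adding $i$ is safe, i.e.\ $U(i)\subseteq U(j)\subseteq T$ and $j\notin U(i)$, so $U(i)\subseteq T'$; this is immediate from your setup.
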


\begin{proof}

We first prove three lemmas that show that we can impose initial pruning, symmetry, and monotonicity constraints on $\prob$ without loss of optimality.

\begin{lemma}[Pruning of optimal $\prob$]
\label{proposition:pruning}
There exists an optimal $\prob$ in which $\prob_i = 1, \forall i \in [\nproposal]$ with $\nbelow(i) \geq \nproposal - \nselected$ and $\prob_j = 0, \forall j \in [\nproposal]$ with $\nabove(j) \geq \nselected$.     
\end{lemma}

\begin{proof}
    If $\nbelow(i) \geq \nproposal - \nselected$, then $i$ is always included in the top $\nselected$ in any permutation of intervals. If $\prob_i < 1$, then there must be some other interval with $\prob_j > 0$, but setting $\prob_i = 1$, $\prob_j = 0$ will not decrease the objective value, since $i$ is always in the top $\nselected$. Similarly, if $\nabove(i) \geq \nselected$, then  $i$ is never included in the top $\nselected$, so taking $\prob_i = 0$ by shifting probability mass from $i$ to any other proposal cannot hurt the objective value.
\end{proof}

\begin{lemma}[Symmetry of optimal $\prob$] 
\label{proposition:symmetry}
There exists an optimal $\prob$ in which $\prob_i = \prob_j$ for all $i,j \in [\nproposal]$ such that $\nabove(i) = \nabove(j)$ and $\nbelow(i) = \nbelow(j)$.
\end{lemma}

\begin{proof}
    If $\nabove(i) = \nabove(j)$ and $\nbelow(i) = \nbelow(j)$, then $i$ and $j$ have the same sets of intervals that are strictly above and strictly below each interval. Hence, for any permutation $\perm \in \errorpermutations$ the permutation $\perm'$ with $i$ and $j$ exchanged is also in $\errorpermutations$, so the objective value is maximized at $\prob_i = \prob_j$.
\end{proof}

\begin{lemma}[Ordering of optimal $\prob$ by $\nabove$ and $\nbelow$]
    \label{proposition:ordering}
  Let $M_1, \ldots, M_\orderwidth$ be a partitioning of $[\nproposal]$ such that each $M_i$ is monotonically ordered. Then, there exists an optimal $\prob$ for Objective~(\ref{ref:opt_problem}) such that within each $M$, $p_{M[j]} \geq p_{M[j+1]}$ for all $j \in [|M| - 1]$.
\end{lemma}

\begin{proof}
    Let $i,j \in [\nproposal]$ be any pair of proposals with $\nabove(i) \leq \nabove(j)$ and $\nbelow(i) \geq \nbelow(j)$. Let $p$ be a feasible solution to the optimization problem with $p_j > p_i$. We will show that exchanging the values of $p_j$ and $p_i$ can never decrease the objective value.
    Define $q$ as equivalent to $p$ with $i$ and $j$ exchanged: \[
        q_r = \begin{cases}
            p_j & r = i \\ 
            p_i & r = j \\ 
            p_r & \text{otherwise}.
        \end{cases}
    \]
    Let $\objvalue(p, \perm) = \sum_{i = 1}^{\nproposal} \prob_i \ind\{\perm(i) \leq \nselected\}$. We want to show that $\min_{\perm \in \errorpermutations} v(p, \perm) \leq \min_{\perm \in \errorpermutations} v(q, \perm)$. 
    Consider any $\perm \in \errorpermutations$. If $\perm(i) < \perm(j)$, then $\objvalue(p, \perm) \leq \objvalue(q, \perm)$. If $\perm(i) > \perm(j)$, define permutation $\tau$ to be equivalent to $\perm$ but with $i$ and $j$ exchanged:
    \[
        \tau(r) = \begin{cases}
            \perm(j) & r = i \\ 
            \perm(i) & r = j \\ 
            \perm(r) & \text{otherwise}.
        \end{cases}
    \]
    Note that $\objvalue(p, \perm) \geq \objvalue(q, \perm)$, but $\objvalue(q, \tau) = \objvalue(p, \perm) \geq \objvalue(p, \tau)$. We now proposition that $\tau \in \errorpermutations$. Let $r \in [\nproposal]$ be any proposal such that $\perm(r) \in [\perm(j), \perm(i)]$. Then, $r$ cannot be strictly above $j$, but since $\nabove(j) \geq \nabove(i)$, $r$ cannot be strictly above $i$. Similarly, $r$ cannot be strictly below $i$, but since $\nbelow(i) \geq \nbelow(j)$, $r$ cannot be strictly below $j$. Hence, $r$ must overlap both $i$ and $j$. Therefore, exchanging $i$ and $j$ does not violate any constraints, so $\tau \in \errorpermutations$. Hence, we conclude that 
    $\min_{\perm \in \errorpermutations} \objvalue(p, \perm) \leq \min_{\perm \in \errorpermutations} \objvalue(q, \perm)$. Now, applying this exchange to every sequence of intervals within each $M_i$ yields the desired result.
\end{proof}

Further, using symmetry and pruning together reduces the number of decision variables to $O(\nselected)$ from $O(\nproposal)$.

\begin{lemma}[Pruning and symmetry give $O(\nselected)$ decision variables]
After pruning per Proposition~\ref{proposition:pruning} and grouping symmetric intervals per Proposition~\ref{proposition:symmetry}, the optimization problem has $< 2\nselected$ decision variables.
\label{proposition:num_decision_vars}
\end{lemma}

\begin{proof}
    After pruning all proposals with $\nabove(i) \geq \nselected$ or $\nbelow(i) \geq \nproposal - \nselected$, there are at most $\nselected - 1$ proposals with $\nbelow(i) > 0$. The remaining intervals all have $\nbelow(i) = 0$ and can have $\nselected - 1$ possible values of $\nabove(i)$, so they can be grouped into at most $\nselected - 1$ groups with equivalent $\prob_i$. Hence, the intervals can be grouped into at most $2\nselected - 2$ decision variables. 
\end{proof}

    Now, to complete the proof of correctness, observe that by Lemmas~\ref{proposition:pruning} and \ref{proposition:ordering} the initial pruning and monotonicity constraints are without loss of optimality. We can additionally impose equality constraints on $\prob$ using symmetry (Lemma~\ref{proposition:symmetry}) to reduce the dimension of $\prob$ to $O(\nselected)$, although this is not shown in Algorithm~\ref{alg:cutting_plane} for simpler presentation. Because the linear program solved by the cutting plane algorithm is a relaxation of the linear program~(\ref{LP}), a solution to the problem upper bounds the objective value of (\ref{LP}). By the correctness of the separation oracle (\Cref{proposition:separation_oracle}), if the cutting plane algorithm converges to a feasible $(\prob, \objvalue)$, this therefore is an optimal solution to the full LP.

    To prove that the algorithm is guaranteed to converge within $\nselected^{\orderwidth+1}$ iterations observe that there are at most $O(\nselected^\orderwidth)$ possible total orders of marginal probabilities (after applying symmetry and pruning) consistent with the partial order given by monotonicity constraints $\prob_{M[i]} \geq \prob_{M[i+1]} \quad \forall i \in [|M|-1], \forall M \in \{M_1, \ldots, M_\orderwidth$. Note that for any total order of the proposals, there are $\nselected+1$ possible constraints that the separation oracle can return, because the possible constraints are determined by the order of the $\prob_i$. Hence, there are at most $O(\nselected^{\orderwidth+1})$ cuts that could be added to the linear program and so the cutting plane algorithm must converge in $O(\selected^{\orderwidth+1})$ iterations. The initial LP only contains $O(\nproposal)$ constraints, so the LP never has more than $O(\nproposal + \nselected^{\orderwidth+1})$ constraints.
\end{proof}

\subsection{Finding Minimal Set of Chains for  \texorpdfstring{$(\nabove, \nbelow)$}{(\nabove, \nbelow)} Partial Order}
\label{app:chain_partition}

In order to partition intervals using $\nabove$ and $\nbelow$ as per 
Proposition~\ref{proposition:ordering}, we need to compute such a partition, known as a chain covering of the set. We would like the partition into as few sets as possible in order to add as many constraints as possible to the problem and reduce runtime of our optimization algorithm. There are practical general methods to solve this chain cover problem for any partial order in time $O(\nproposal^{2.5})$ by computing the maximum matching of an appropriately constructed bipartite graph. In our case, where the partial order has specific structure, we can solve the minimal chain cover problem even more efficiently in time $O(\nproposal \log \nproposal)$. The algorithm described below is equivalent to an algorithm given in \cite{golumbic1980algorithmic}[Chapter 7, Algorithm 7.1] for minimal coloring of a permutation graph. For completeness, we reproduce the algorithm and proof of optimality in our problem setting below. 

\begin{algorithm}[H]
\caption{Greedy minimal chain cover for the product order on $\mathbb{R}^2$}
\label{alg:chain-cover}
\begin{algorithmic}[1]
\Require A set of points $\{(a_i,b_i)\}_{i=1}^{n}$ with partial order $i \succeq j \iff  a_i \geq a_j \text{ and } b_i \geq b_j$.
\Ensure A set $\mathcal{C}=\{C_1,\dots,C_v\}$ of non-increasing chains that covers $[n]$.
\State Sort the indices in decreasing order of $a$ so that
      $a_{i_1}\ge a_{i_2}\ge\ldots\ge a_{i_n}$, breaking ties by decreasing $b$
\State $\mathcal{C}\gets\varnothing$
\For{$t\gets 1$ \textbf{to} $n$}
    \State Let $C$ be the chain with the smallest $b_{\text{tail}(C)}$ such that $b_{\text{tail}(C)} \ge b_{i_t}$
    \If{$C$ exists}
    \State Append $i_t$ to end of $C$
    \Else
    \State Create a new chain $C_{\text{new}}\gets\{i_t\}$ and append it to $\mathcal{C}$
    \EndIf
\EndFor
\State \Return $\mathcal{C}$
\end{algorithmic}
\end{algorithm}

\begin{proposition}[Correctness of minimal chain cover algorithm]
Algorithm~\ref{alg:chain-cover} returns a minimal size chain cover
$\mathcal{C}=\{C_1,\dots,C_v\}$ of the partially ordered set $([n],\succeq)$ defined by $
i\succeq j \Longleftrightarrow a_i\ge a_j\text{ and }b_i\ge b_j$. It runs in time $O(\nproposal \log \nproposal)$.
\end{proposition}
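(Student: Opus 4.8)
The plan is to establish three things: that the output is a valid chain cover, that its size equals the width of the poset (hence is minimal), and that the runtime is $O(\nproposal \log \nproposal)$. For validity, I would argue that each chain $C$ produced is genuinely a chain in $(\nproposal, \succeq)$: by the sort in Line~1 the elements are appended in non-increasing order of $a$, and the greedy rule in Line~4 only appends $i_t$ to $C$ when $b_{\mathrm{tail}(C)} \ge b_{i_t}$, so every chain is non-increasing in both coordinates, i.e. a descending chain under $\succeq$. Since every index is assigned to exactly one chain (either an existing feasible one or a freshly created one), $\mathcal{C}$ covers $[\nproposal]$.

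For minimality, I would invoke the dual form of Dilworth's theorem (Mirsky's theorem is the wrong direction; what we need is that a chain cover of size equal to the largest antichain is optimal — this is exactly Dilworth). So it suffices to exhibit an antichain of size $v = |\mathcal{C}|$. Here is the key step and the main obstacle: I want to show that whenever the algorithm creates a new chain $C_{\mathrm{new}}$ at step $t$ because no existing chain $C$ has $b_{\mathrm{tail}(C)} \ge b_{i_t}$, the current tails of all $v-1$ previously-created chains, together with $i_t$ itself, form an antichain of size $v$ at that moment. The tails all have $a$-coordinate $\ge a_{i_t}$ (they were processed earlier) and $b$-coordinate $< b_{i_t}$ (that is precisely why none was feasible for $i_t$); hence each tail is incomparable with $i_t$ (strictly larger $a$ but strictly smaller $b$ — one must check the tie-breaking in Line~1 ensures the $a$-inequality is strict or, if $a$-coordinates are equal, the $b$-ordering still gives incomparability, so the tie-break by decreasing $b$ is what makes this clean). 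Pairwise incomparability among the tails themselves requires a short induction: when chain $C_j$ received its current tail, by the same reasoning it was incomparable with the then-tails of $C_1,\dots,C_{j-1}$, and tails only move to elements with smaller-or-equal $a$ and the selection rule (smallest feasible $b_{\mathrm{tail}}$) preserves the incomparability structure. Making this induction airtight — that the multiset of current tails is always an antichain — is the heart of the argument and the place I would spend the most care; it is essentially the standard patience-sorting / Dilworth-greedy correctness lemma specialized to the product order on $\mathbb{R}^2$, and I would cite \cite{golumbic1980algorithmic} for the permutation-graph coloring version while giving the self-contained version here.

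For the runtime, I would maintain the set of current chain tails in a balanced binary search tree keyed by $b_{\mathrm{tail}}$. Sorting in Line~1 costs $O(\nproposal \log \nproposal)$. Each iteration of the loop performs one successor query (find the chain with smallest $b_{\mathrm{tail}} \ge b_{i_t}$), one deletion of that tail, and one insertion of $i_t$ as the new tail — or, if no such chain exists, a single insertion — each $O(\log \nproposal)$, for $O(\nproposal \log \nproposal)$ total. This matches the claimed bound. I do not expect any difficulty here beyond noting that ``append to the end of $C$'' can be implemented in $O(1)$ amortized with a linked list per chain, so that reconstructing the actual chains at the end is linear.
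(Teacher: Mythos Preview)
Your treatment of validity and runtime is fine and matches the paper. The gap is in the minimality argument: the set you propose --- the current tails of $C_1,\dots,C_{v-1}$ together with the element $i_t$ that creates $C_v$ --- is not in general an antichain, and the induction you sketch cannot be made to go through. Take the five points $(5,2),(4,4),(3,1),(2,5),(1,3)$, processed in that (decreasing-$a$) order. The algorithm creates $C_1$ from $(5,2)$, creates $C_2$ from $(4,4)$, appends $(3,1)$ to $C_1$, and then creates $C_3$ from $(2,5)$. At that instant the tails of $C_1,C_2$ are $(3,1)$ and $(4,4)$, and $(4,4)\succeq(3,1)$, so your candidate set $\{(3,1),(4,4),(2,5)\}$ is not an antichain. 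The claim that fails is ``the selection rule preserves the incomparability structure'': when an element is appended to $C_j$, the new tail of $C_j$ is \emph{dominated by} the tails of every chain with larger tail-$b$, so pairwise incomparability among the current tails is simply not an invariant of the algorithm.

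The correct patience-sorting argument does not take any snapshot of tails as the witness. One first checks the invariant that the chains, listed in creation order, always have strictly increasing tail-$b$; then one shows by induction on processing order that every element placed on the $j$-th chain is the last element of an antichain of size $j$, obtained by backtracking to the then-current tail of the $(j{-}1)$-th chain (which has strictly smaller $b$ and, by the tie-break, strictly larger $a$, hence is incomparable and carries an antichain of size $j{-}1$ by induction). Applying this to any element of $C_v$ gives the required antichain of size $v$; in the example above it yields $\{(5,2),(4,4),(2,5)\}$. It is worth noting that the paper's own proof makes an analogous misstep --- it asserts that the \emph{final} tails form an antichain, and the same five-point example refutes this too (the final tails are $(3,1),(1,3),(2,5)$, and $(2,5)\succeq(1,3)$) --- so the defect is not in your overall plan but in which witnesses are extracted from the run of the algorithm.
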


\begin{proof}
First, we proposition that each $C_r$ is a chain. Indices are processed in non-increasing $a$-order, so within any chain
the $a$-coordinates never increase.
A point $p$ is appended to $C_r$ only if the current tail $q$ of
$C_r$ satisfies $b_q\ge b_p$; hence
$(a_q,b_q)\succeq(a_p,b_p)$. 

Now, to show that this set of chains is minimal we will invoke Dilworth's Theorem, which states that the minimal number of chains that cover a poset (its width) is equivalent to the length of the longest antichain (sequence of incomparable elements.) We argue that the set of final tails of each chain forms an antichain. Let $T=\{t_1,\dots,t_v\}$ be the final tails ordered so that
$b_{t_1}<b_{t_2}<\dots<b_{t_k}$. For any chains $i,j$ with $i < j$ we have $a_{t_i} \geq a_{t_j}$, otherwise $t_j$ would have been added to chain $i$. But, $b_{t_i}<b_{t_j}$, so $t_i$ and $t_j$ are incomporable. Thus the elements of $T$ are pairwise
incomparable and $T$ is an antichain of size $v$. Now, let $A\subseteq[n]$ be any antichain. Map each $x\in A$ to the
chain that contains it.  Two distinct elements of $A$ cannot lie in the
same chain, hence $|A|\leq v$. Therefore, the length of the longest antichain is $v$, so the width of the partially ordered set is $v$ and the algorithm has returned a minimal number of chains.

The algorithm can be implemented in time $O(n \log n)$, because the set of chains is always sorted in increasing order of the $b$ value of their tails so we use binary search to find the chain in which to insert each $i$ (step $4$.)
\end{proof}

\section{Systematic Sampling}
\label{app:systematic_sampling}

The method known as ``Systematic Sampling'', works by first computing cumulative probabilities $S_i = \sum_{j=1}^i \prob_j$. Then, it selects a random starting point $u$ uniformly from the interval [0,1) and picks exactly one item from each of the $\nselected$ intervals obtained by adding integers $m=0,1,\dots,k-1$ to the starting point $u$. Each item is selected if at least one of these evenly spaced points falls within its corresponding cumulative interval $[S_{i-1}, S_i)$. Thus, the algorithm guarantees selecting exactly $\nselected$ distinct items without replacement, each with the correct marginal probability $\prob_i$. Additionally, we can initially shuffle the items uniformly at random, so that the algorithm replicates the expected behavior of uniform random sampling for items with equal values of $\prob$. The algorithm requires two passes over all the proposals and hence runs in time $O(\nproposal)$.

\begin{algorithm}[H]
\caption{Systematic Sampling~\cite{madow1949theory}}
\label{alg:systmetaic_sampling}
\begin{algorithmic}[1]
\Require Integer $\nselected$, probability vector $\prob \in [0,1]^\nproposal$ with $\sum \prob_i = k$
\Ensure A subset of $\nselected$ items sampled without replacement from $[\nproposal]$ where item $i \in [\nproposal]$ is included with marginal probability $\prob_i$.

\State Compute cumulative sums: $S_0 \gets 0$, $S_i \gets \sum_{j=1}^i \prob_j$ for $i = 1$ to $\nproposal$
\State Sample $u \sim \text{Uniform}(0, 1)$
\For{each $m$ from $0$ to $\nselected-1$}
\State Include item $i$ in the sample where $(u + m) \in [S_{i-1}, S_i)$
\EndFor
\end{algorithmic}
\end{algorithm}

\section{Additional Experiments Varying Model Parameters}
\label{app:additional_data}

First, in Figure~\ref{fig:linear_miscal10} we provide results for the linear miscalibration setting when selecting the top one-tenth of proposals rather than the top one-third of proposals as shown in Figure~\ref{fig:linear_miscal}. We find similar results, with \ouralgorithm\ matching the performance of Swiss NSF and and determinstic with raw mean scores performing worse as miscalibration increases.

\begin{figure}[htbp]
\begin{subfigure}[t]{0.48\textwidth}
\centering
\includegraphics[width=\linewidth]{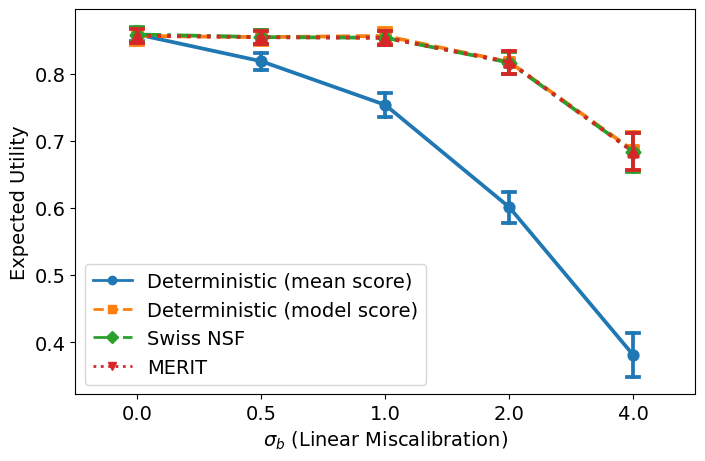}
\caption{Swiss NSF simulated data.}
\label{fig:linear_miscal_swissnsf10}
\end{subfigure}
\hfill
\begin{subfigure}[t]{0.48\textwidth}
\centering
\includegraphics[width=\linewidth]{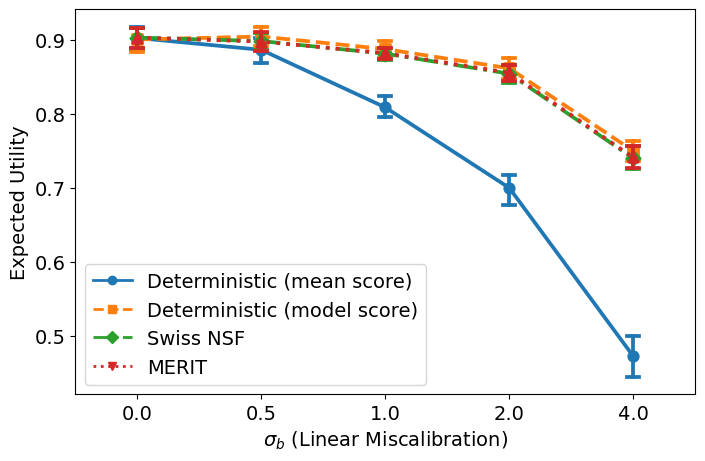}  
\caption{Conference simulated data.}
\label{fig:linear_miscal_conference10}
\end{subfigure}
\caption{Comparison of methods under the Swiss NSF's model of linear miscalibration when choosing $10\%$ of proposals. Error bars show bootstrapped 95\% CIs for the sample mean over 50 samples of randomly generated data from the model.}
\label{fig:linear_miscal10}
\end{figure}

Second, we consider a variant of the linear miscalibration setting where the model is mis-specified. In particular, the simple model used by the Swiss NSF~\cite{heyard2022rethinking} assumes that review score $y_{pr}$ on proposal $p$ from reviewer $r$ is given by $y_{pr} = \theta_p + b_r + \epsilon_{pr}$ where $\theta_p$ is the true quality of the proposal $p$, $b_r$ is the miscalibration of reviewer $r$ and $\epsilon_{pr}$ is random noise. We extend the model, following many prior works~\cite{flach2010kdd,baba2013quality,roos2011calibrate,roos2012statistical}, to include a multiplicative factor per reviewer $a_r$ so that review scores are generated as $y_{pr} = a_r\theta_p + b_r + \epsilon_{pr}$. We assume that $a_r$ is drawn from a $\text{LogNormal}$ distribution with standard deviation $\sigma_a$. Intervals are generated assuming there is no multiplicative factor $a_r$, hence the model used to estimate quality rankings is mis-specified. We show results for this setting, fixing $\sigma_b$ at $1$ and varying $\sigma_a$ when choosing the top one-third of proposals in Figure~\ref{fig:linear_miscal_mult}. As the amount of multiplicative miscalibration increases (so the model gets more prespecified), the performance of all algorithms degrades significantly. However, we still find that \ouralgorithm, Swiss NSF, and deterministic based on model scores perform similarly while raw mean scores performs worse.

\begin{figure}[htbp]
\begin{subfigure}[t]{0.48\textwidth}
\centering
\includegraphics[width=\linewidth]{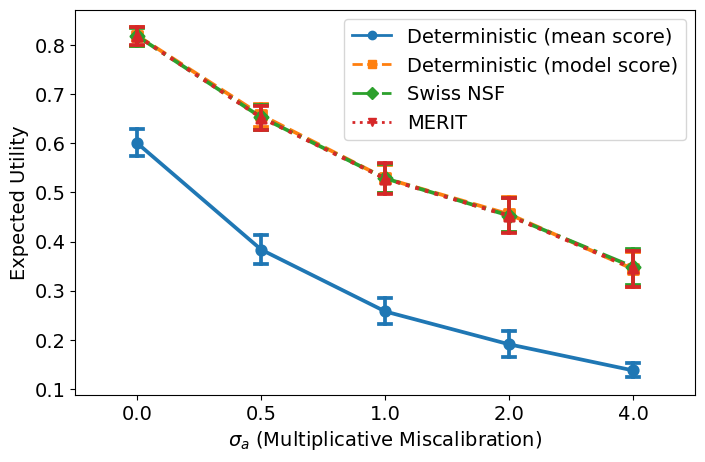}
\caption{Swiss NSF simulated data.}
\label{fig:linear_miscal_swissnsf_mult}
\end{subfigure}
\hfill
\begin{subfigure}[t]{0.48\textwidth}
\centering
\includegraphics[width=\linewidth]{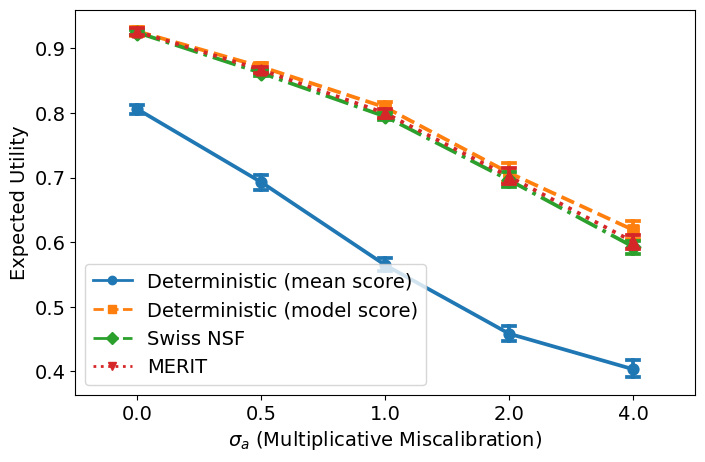}  
\caption{Conference simulated data.}
\label{fig:linear_miscal_conference_mult}
\end{subfigure}
\caption{Comparison of methods under a mis-specified model of linear miscalibration with both multiplicative and additive miscalibration. Error bars show bootstrapped 95\% CIs for the sample mean over 50 samples of randomly generated data from the model.}
\label{fig:linear_miscal_mult}
\end{figure}

\section{Additional Analysis of Computational Runtime Case Studies}
\label{app:additional_data_computation}

In Figure~\ref{fig:runtime_plot}, we show the runtime in seconds of the \ouralgorithm\ algorithm (including all pre-processing and post-processing steps) on each dataset as a function of the acceptance rate. For all methods, the algorithm runs in under five minutes. Runtime increases with acceptance rate, which is expected because the number of constraints grows with the number of selections $\nselected$. We find that the cutting plane algorithm converges in under 30 iterations for all datasets, meaning that it solves under 30 linear programs. Furthermore, the largest linear program solved has 25,000 constraints when choosing $\nselected=5760$ of the $\nproposal=11520$ ICLR papers, suggesting that the cutting plane algorithm scales well in $\nproposal$ and $\nselected$. 

\begin{figure}[htbp]
    \centering
    \includegraphics[width=0.5\linewidth]{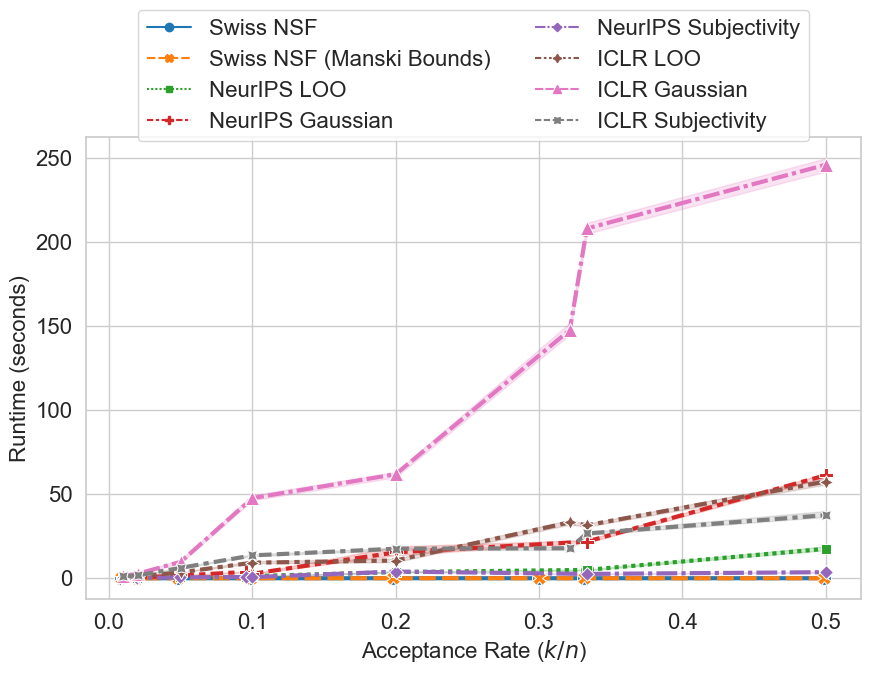}
    \caption{Runtime of \ouralgorithm\ on a standard personal laptop (in seconds) as a function of acceptance rate using review data from the Swiss NSF ($\nproposal=353$), NeurIPS 2024 accepted papers ($\nproposal=4035$) and ICLR 2025 papers ($\nproposal=11520$).}
    \label{fig:runtime_plot}
\end{figure}

In Figure~\ref{fig:cuts_plot}, we present the number of cuts and the number of iterations it takes for the cutting plane algorithm to converge. We note that the size of the LP solved and convergence rate could potentially be optimized further by strategically pruning cuts from the linear program at each iteration, but even without additional optimizations the algorithm yields practical performance.

\begin{figure}[htbp]
    \centering
    \includegraphics[width=0.75\linewidth]{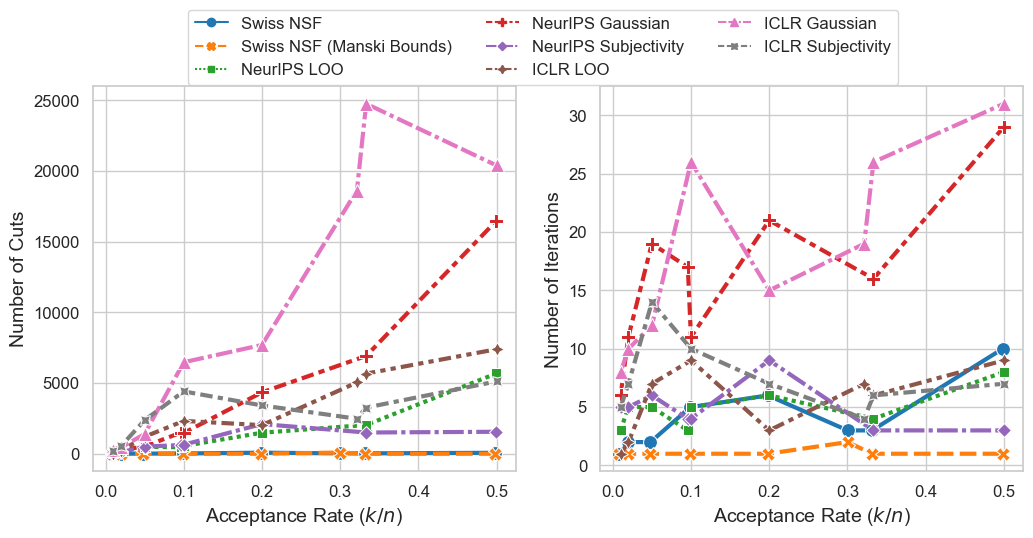}
    \caption{Convergence of cutting plane algorithm as a function of acceptance rate using review data from the Swiss NSF ($\nproposal=353$), NeurIPS 2024 accepted papers ($\nproposal=4035$) and ICLR 2025 papers ($\nproposal=11520$). The number of cuts corresponds to the size of the largest linear program solved in a single iteration of the cutting plane algorithm and the number of iterations corresponds to the total number of LPs solved before convergence.}
    \label{fig:cuts_plot}
\end{figure}

\section{Ex-Ante Optimal Uniform Random Lottery}
\label{app:uniform_lottery}

An additional benefit of \ouralgorithm\ is that it can be easily adapted to handle constraints on the form of the lottery. In some contexts, funders or decision-makers may prefer to implement a uniform lottery, where all candidates subject to randomization are selected with equal probability. This form of randomization may be viewed as simpler, more transparent, and more acceptable to participants than one in which probabilities differ across candidates.

To accommodate such requirements, the optimization problem underlying \ouralgorithm\ can be modified to constrain all randomized candidates to share a common probability of selection. This yields the \emph{ex ante optimal uniform lottery that still satisfies the ex post validity constraints}. The resulting optimization problem is given below:

\begin{align}
    \max_{\prob \in \mathbb{R}^n, \objvalue \in \mathbb{R}, c \in \mathbb{R}} \quad & \objvalue \notag   \\ 
    \text{subject to} \quad 
    & \objvalue \leq \sum_{i=1}^{\nproposal} \prob_i \ind\{\perm(i) \leq \nselected\}, \quad \forall \perm \in \errorpermutations, \label{ref:worst_case_constraints} \\
    & \sum_{i=1}^{\nproposal} \prob_i = \nselected \text{ and } 0 \leq \prob_i \leq 1, \forall i \in [\nproposal] \notag \\ 
    & \prob_i \in \{0,1,c\}, \quad \forall i \in [\nselected] \label{ref:uniform_lottery_constraints}  \\
    & \prob_i = 1 \text { or } \prob_j = 0 \quad \forall i,j \in [\nproposal]: \lcb_i > \ucb_j \label{ref:ex_post_constraints} 
\end{align}

The optimization problem is equivalent to the original optimization problem~\ref{ref:opt_problem}, with the addition of constraint~(\ref{ref:uniform_lottery_constraints}), which forces the lottery to be uniform, and constraint~(\ref{ref:ex_post_constraints}), which ensures ex post validity. These additional constraints turn the linear program into a mixed integer program (MIP). Notably, this MIP has polynomial in $\nproposal$ constraints, excepting the worst-case ordering constraints~(\ref{ref:worst_case_constraints}) present in the original optimization problem. Hence, this can be solved using Algorithm~\ref{alg:cutting_plane} (the \ouralgorithm\ cutting plane algorithm) initialized with the additional integer constraints.

\paragraph{Empirical Performance.}
Figure \ref{fig:comparison_both_models_unif} and Tables \ref{tab:probability_results_uniform}–\ref{tab:case_study_results_uniform} compare the uniform variant of \ouralgorithm\ (\textbf{MERIT Uniform}) against the base \ouralgorithm\ and the Swiss NSF mechanism across datasets. The results show that the uniform constraint has minimal effect on expected utility, while preserving or improving worst-case robustness under our model. Notably, MERIT Uniform achieves nearly identical marginal acceptance probabilities and competitive runtime performance, suggesting that funders can adopt this more interpretable form of randomization without substantial efficiency loss. The runtime of \ouralgorithm\ Uniform can be significantly slower than \ouralgorithm\ due to solving an integer program, but still runs in reasonable time, running in 40 mins instead of 4 mins in the slowest case of ICLR Gaussian, which has over 10,000 candidates from which to choose. Additionally, we performed no additional optimizations of solver parameters in our evaluations of \ouralgorithm\ Uniform, so there may be further speed ups that are possible. 

\begin{table}[!htb]
  \centering
  \begin{tabular}{l|ccc|ccc|ccc}
    \toprule
     & \multicolumn{3}{c|}{\textbf{MERIT}} & \multicolumn{3}{c|}{\textbf{Swiss NSF}} & \multicolumn{3}{c}{\textbf{MERIT Uniform}} \\
    \cmidrule(lr){2-4}\cmidrule(lr){5-7}\cmidrule(lr){8-10}
     Dataset & \% Acc & \% Rand & $p$ & \% Acc & \% Rand & $p$ & \% Acc & \% Rand & $p$ \\
     \midrule
    Swiss NSF & 28.3 & 3.4 & 0.5--0.5 & 28.0 & 3.7 & 0.54 & 28.3 & 3.4 & 0.5 \\
    \midrule
    NeurIPS LOO & 3.4 & 16.5 & 0.36--0.94 & 3.4 & 17.9 & 0.34 & 3.8 & 16.1 & 0.36 \\
    NeurIPS Gaussian & 2.2 & 25.7 & 0.16--0.83 & 4.0 & 27.5 & 0.20 & 4.9 & 19.3 & 0.24 \\
    NeurIPS Subjectivity & 4.5 & 18.7 & 0.14--0.45 & 1.6 & 10.4 & 0.78 & 3.6 & 16.1 & 0.37 \\
    \midrule
    ICLR LOO & 11.1 & 32.4 & 0.51--0.88 & 22.0 & 22.8 & 0.44 & 11.1 & 25.2 & 0.83 \\
    ICLR Gaussian & 9.5 & 34.4 & 0.45--0.87 & 21.2 & 27.7 & 0.40 & 17.9 & 25.6 & 0.56 \\
    ICLR Subjectivity & 17.6 & 24.3 & 0.49--0.88 & 16.2 & 25.2 & 0.63 & 23.0 & 18.4 & 0.49 \\
    \bottomrule
  \end{tabular}
  \caption{Comparison of marginal probabilities of acceptance by MERIT, Swiss NSF, and MERIT Uniform on each dataset. ``Acc'' = guaranteed to be selected ($p = 1$), while ``Rand'' = entered into lottery ($0 < p < 1$). MERIT shows the range of probabilities, while Swiss NSF and MERIT Uniform assign single uniform probabilities.}
  \label{tab:probability_results_uniform}
\end{table}

\begin{table}[!htb]
\centering
\caption{Runtime Comparison for MERIT and MERIT Uniform}
\label{tab:case_study_results_uniform}
\begin{tabular}{lcr@{\hspace{1em}}r}
\toprule
Dataset & & \multicolumn{2}{c}{Runtime (seconds)} \\
\cmidrule{3-4}
 & & MERIT & MERIT Uniform \\
\midrule
Swiss NSF & & 0.046 & 0.043 \\
Swiss NSF (Manski Bounds) & & 0.029 & 0.036 \\
\midrule
NeurIPS LOO & & 0.941 & 1.301 \\
NeurIPS Gaussian & & 5.618 & 43.278 \\
NeurIPS Subjectivity & & 1.883 & 11.319 \\
\midrule
ICLR LOO & & 43.561 & 194.366 \\
ICLR Gaussian & & 242.222 & 2,346.693 \\
ICLR Subjectivity & & 21.060 & 22.225 \\
\bottomrule
\end{tabular}
\end{table}

\begin{figure}[!ht]
  \centering

  \begin{tikzpicture}
    \begin{axis}[
        hide axis,
        xmin=0, xmax=1,
        ymin=0, ymax=1,
        width=0pt,
        height=0pt,
        scale only axis,
        legend style={
            at={(0.5,0.5)},
            anchor=center,
            legend columns=5,
            legend cell align=left,
            draw=none,
            fill=none,
            font=\small,
            column sep=1em
        },
        area legend
    ]
    \addplot+[ybar, draw=black, fill=red!60!black!30, postaction={pattern=crosshatch, pattern color=black}] coordinates {(-1,-1)};
    \addplot+[ybar, draw=black, fill=orange!60, postaction={pattern=horizontal lines, pattern color=black}] coordinates {(-1,-1)};
    \addplot+[ybar, draw=black, fill=green!50!black!30, postaction={pattern=grid, pattern color=black}] coordinates {(-1,-1)};
    \addplot+[ybar, draw=black, fill=purple!40, postaction={pattern=dots, pattern color=black}] coordinates {(-1,-1)};
    \addplot+[ybar, draw=black, fill=blue!40, postaction={pattern=north east lines, pattern color=black}] coordinates {(-1,-1)};
    \legend{Deterministic (mean), Deterministic (model), Swiss NSF, MERIT Uniform, MERIT}
    \end{axis}
  \end{tikzpicture}
  
  \vspace{0.5em} 
  
  \begin{subfigure}[b]{0.42\textwidth}
    \centering
    \begin{tikzpicture}
    \begin{axis}[
        ybar,
        bar width=0.2cm,
        width=5.2cm,
        height=6cm,
        enlarge x limits=0.5,
        ymin=0.8, ymax=0.95,
        ylabel={Expected Utility},
        ylabel style={font=\small},
        symbolic x coords={Swiss NSF, Conference},
        xtick=data,
        xtick style={draw=none},
        x tick label style={rotate=45, anchor=east, font=\small},
        ytick style={draw=none},
        tick label style={font=\small}
    ]
    \addplot+[
        ybar,
        draw=black,
        fill=red!60!black!30,
        postaction={pattern=crosshatch, pattern color=black},
        error bars/.cd,
        y dir=both,
        y explicit,
        error bar style={black}
    ] coordinates {
        (Swiss NSF, 0.857931) +- (0.015597, 0.015597)
        (Conference, 0.889) +- (0.014955, 0.014955)
    };
    \addplot+[
        ybar,
        draw=black,
        fill=orange!60,
        postaction={pattern=horizontal lines, pattern color=black},
        error bars/.cd,
        y dir=both,
        y explicit,
        error bar style={black}
    ] coordinates {
        (Swiss NSF, 0.914310) +- (0.004194, 0.004194)
        (Conference, 0.934) +- (0.004479, 0.004479)
    };
    \addplot+[
        ybar,
        draw=black,
        fill=green!50!black!30,
        postaction={pattern=grid, pattern color=black},
        error bars/.cd,
        y dir=both,
        y explicit,
        error bar style={black}
    ] coordinates {
        (Swiss NSF, 0.914925) +- (0.004113, 0.004113)
        (Conference, 0.933) +- (0.004427, 0.004427)
    };
    \addplot+[
        ybar,
        draw=black,
        fill=purple!40,
        postaction={pattern=dots, pattern color=black},
        error bars/.cd,
        y dir=both,
        y explicit,
        error bar style={black}
    ] coordinates {
        (Swiss NSF, 0.914960) +- (0.004065, 0.004065)
        (Conference, 0.934) +- (0.004438, 0.004438)
    };
    \addplot+[
        ybar,
        draw=black,
        fill=blue!40,
        postaction={pattern=north east lines, pattern color=black},
        error bars/.cd,
        y dir=both,
        y explicit,
        error bar style={black}
    ] coordinates {
        (Swiss NSF, 0.914883) +- (0.004083, 0.004083)
        (Conference, 0.934) +- (0.004438, 0.004438)
    };
    \end{axis}
    \end{tikzpicture}
    \caption{Linear miscalibration model.}
    \label{fig:expected_utility_miscalibration_unif}
  \end{subfigure}
  \hfill
  \begin{subfigure}[b]{0.56\textwidth}
    \centering
    \begin{tikzpicture}
    \begin{axis}[
        ybar,
        bar width=0.2cm,
        width=10.64cm,
        height=6cm,
        enlarge x limits=0.15,
        ylabel={Worst-case Utility},
        ylabel style={font=\small},
        symbolic x coords={
            Swiss NSF,
            NeurIPS~LOO, NeurIPS~Gauss, NeurIPS~Subj,
            ICLR~LOO, ICLR~Gauss, ICLR~Subj
        },
        xtick=data,
        xtick style={draw=none},
        x tick label style={rotate=45, anchor=east, font=\footnotesize},
        ytick style={draw=none},
        tick label style={font=\small},
        ymin=0, ymax=1
    ]
    \addplot+[
        ybar,
        draw=black,
        fill=red!60!black!30,
        postaction={pattern=crosshatch, pattern color=black}
    ] coordinates {
        (Swiss NSF, 0.934)
        (NeurIPS~LOO, 0.054)
        (NeurIPS~Gauss, 0.199)
        (NeurIPS~Subj, 0.377)
        (ICLR~LOO, 0.647)
        (ICLR~Gauss, 0.579)
        (ICLR~Subj, 0.650)
    };
    \addplot+[
        ybar,
        draw=black,
        fill=green!50!black!30,
        postaction={pattern=grid, pattern color=black}
    ] coordinates {
        (Swiss NSF, 0.939)
        (NeurIPS~LOO, 0.371)
        (NeurIPS~Gauss, 0.231)
        (NeurIPS~Subj, 0.347)
        (ICLR~LOO, 0.637)
        (ICLR~Gauss, 0.574)
        (ICLR~Subj, 0.680)
    };
    \addplot+[
        ybar,
        draw=black,
        fill=purple!40,
        postaction={pattern=dots, pattern color=black}
    ] coordinates {
        (Swiss NSF, 0.943)
        (NeurIPS~LOO, 0.394)
        (NeurIPS~Gauss, 0.316)
        (NeurIPS~Subj, 0.473)
        (ICLR~LOO, 0.660)
        (ICLR~Gauss, 0.675)
        (ICLR~Subj, 0.749)
    };
    \addplot+[
        ybar,
        draw=black,
        fill=blue!40,
        postaction={pattern=north east lines, pattern color=black}
    ] coordinates {
        (Swiss NSF, 0.943)
        (NeurIPS~LOO, 0.395)
        (NeurIPS~Gauss, 0.367)
        (NeurIPS~Subj, 0.540)
        (ICLR~LOO, 0.690)
        (ICLR~Gauss, 0.684)
        (ICLR~Subj, 0.749)
    };
    
    
\def\xshiftRedBars{-0.12cm}

\draw[red, thick]
  ([xshift=\xshiftRedBars]axis cs:NeurIPS~Subj,0.347) --
  ([xshift=\xshiftRedBars]axis cs:NeurIPS~Subj,0.473);
\draw[red]
  ([xshift=\xshiftRedBars-2pt]axis cs:NeurIPS~Subj,0.347) --
  ([xshift=\xshiftRedBars+2pt]axis cs:NeurIPS~Subj,0.347);
\draw[red]
  ([xshift=\xshiftRedBars-2pt]axis cs:NeurIPS~Subj,0.473) --
  ([xshift=\xshiftRedBars+2pt]axis cs:NeurIPS~Subj,0.473);
\node[anchor=south,font=\scriptsize,fill=white,inner sep=1pt]
  at ([xshift=\xshiftRedBars]axis cs:NeurIPS~Subj,0.55) {+0.126};

\draw[red, thick]
  ([xshift=\xshiftRedBars]axis cs:ICLR~Gauss,0.574) --
  ([xshift=\xshiftRedBars]axis cs:ICLR~Gauss,0.675);
\draw[red]
  ([xshift=\xshiftRedBars-2pt]axis cs:ICLR~Gauss,0.574) --
  ([xshift=\xshiftRedBars+2pt]axis cs:ICLR~Gauss,0.574);
\draw[red]
  ([xshift=\xshiftRedBars-2pt]axis cs:ICLR~Gauss,0.675) --
  ([xshift=\xshiftRedBars+2pt]axis cs:ICLR~Gauss,0.675);
\node[anchor=south,font=\scriptsize,fill=white,inner sep=1pt]
  at ([xshift=\xshiftRedBars]axis cs:ICLR~Gauss,0.700) {+0.101};

\draw[red, thick]
  ([xshift=\xshiftRedBars]axis cs:NeurIPS~Gauss,0.231) --
  ([xshift=\xshiftRedBars]axis cs:NeurIPS~Gauss,0.316);
\draw[red]
  ([xshift=\xshiftRedBars-2pt]axis cs:NeurIPS~Gauss,0.231) --
  ([xshift=\xshiftRedBars+2pt]axis cs:NeurIPS~Gauss,0.231);
\draw[red]
  ([xshift=\xshiftRedBars-2pt]axis cs:NeurIPS~Gauss,0.316) --
  ([xshift=\xshiftRedBars+2pt]axis cs:NeurIPS~Gauss,0.316);
\node[anchor=south,font=\scriptsize,fill=white,inner sep=1pt]
  at ([xshift=\xshiftRedBars]axis cs:NeurIPS~Gauss,0.38) {+0.085};
    
    \end{axis}
    \end{tikzpicture}
    \caption{Worst-case over interval ordering model (our model).}
    \label{fig:worst_case_utility_unif}
  \end{subfigure}
  \caption{Proportion of top-$\nselected$ proposals selected by different methods with quality data generated under the Swiss NSF model of linear miscalibration and under our model of worst-case over feasible rankings. MERIT Uniform matches performance of algorithms designed for the Swiss NSF's linear model. MERIT uniform recovers much of the gap between MERIT and other methods in the worst-case over intervals defined by our model, as shown by the gaps in NeurIPS~Gaussian, NeurIPS~Subjectivity, and ICLR~Subjectivity.}
  \label{fig:comparison_both_models_unif}
\end{figure}

\end{document}